\documentclass[aip, amsmath, amssymb, reprint, onecolumn]{revtex4-1}

\usepackage{amsfonts}
\usepackage{mathtools}
\usepackage{amsthm}
\usepackage{amssymb}
\usepackage{dsfont}
\usepackage{hyperref}
\usepackage{setspace}

\usepackage{graphicx}
\usepackage{dcolumn}
\usepackage{bm}
\usepackage{xfrac}

\usepackage[utf8]{inputenc}
\usepackage[T1]{fontenc}
\usepackage{mathptmx}
\usepackage{etoolbox}

\newtheorem{lemma}{Lemma}
\newtheorem{theorem}{Theorem}

\newtheorem{proposition}{Proposition}

\newcommand{\HF}{\mathrm{HF}}
\newcommand{\rHF}{\mathrm{rHF}}
\newcommand{\TF}{\mathrm{TF}}
\newcommand{\DFT}{\mathrm{LL}}

\newcommand{\R}{\mathbb{R}}
\newcommand{\N}{\mathbb{N}}
\newcommand{\C}{\mathbb{C}}

\newcommand{\dd}{\mathrm{d}}

\let\Re\relax
\let\Im\relax

\DeclareMathOperator{\DO}{DO}
\DeclareMathOperator{\den}{den}
\DeclareMathOperator{\Tr}{Tr}
\DeclareMathOperator{\argmax}{argmax}

\DeclareMathOperator{\Ran}{Ran}
\DeclareMathOperator{\Re}{Re}
\DeclareMathOperator{\Im}{Im}

\DeclarePairedDelimiter{\abs}{\lvert}{\rvert}
\DeclarePairedDelimiter{\norm}{\lVert}{\rVert}
\DeclarePairedDelimiter{\inner}{\langle}{\rangle}

\def\[#1\]{%
  \begin{equation}#1\end{equation}
}%

\makeatletter
\def\@email#1#2{%
  \endgroup
  \patchcmd{\titleblock@produce}
    {\frontmatter@RRAPformat}
    {%
      \frontmatter@RRAPformat{%
        \produce@RRAP{*#1\href{mailto:#2}{#2}}%
      }%
      \frontmatter@RRAPformat
    }
    {}%
    {}%
}
\makeatother

\begin{document}
  \title[A semiclassical limit of reduced Hartree--Fock theory at positive temperature]{A semiclassical limit of reduced Hartree--Fock theory at positive temperature}
  \author{D. Shillingford}
  \email{d.shillingford@mail.utoronto.ca}
  \affiliation{Department of Mathematics, University of Toronto}
  \date{\today}

  \begin{abstract}
    It is established that the reduced Hartree--Fock grand potential at positive temperature converges in the semiclassical limit to a Thomas--Fermi grand potential.
    In particular, a density--potential dual representation of the limiting functional is proposed for a general class of fermionic entropy functions.
    The corresponding dual problem has no duality gap and admits a maximizer.
    The resulting grand potential converges to the established zero-temperature grand potential as the temperature tends to zero and, for the Fermi--Dirac entropy, agrees with the standard positive-temperature Thomas--Fermi free-energy functional.
    The proof also establishes a semiclassical trace formula that is uniform over locally precompact families of potentials subject to a common confinement condition.
  \end{abstract}

  \maketitle

  \section{Introduction}

  The ground-state energy of a system of $n$ fermions subject to an external potential $v_n$, and interacting through a pair potential $w_n$,
  both perhaps depending on $n$, is
  \[ \label{manybodygse} \mathfrak{e}_n = \inf_{\substack{\Psi \in \mathfrak{H} \\ \norm{\Psi} = 1}} \inner[\Big]{\Psi, \mathfrak{h}_n \Psi}_\mathfrak{H} \]
  where $\mathfrak{H}$ is the subspace of $L^2(\R^{3n}; \C)$ that is antisymmetric with respect to particle exchange,
  and
  \[ \mathfrak{h}_n = \sum_{k = 1}^n \left( -\Delta_k + v_n(\hat x_k) \right) + \frac{1}{2} \sum_{i \neq j}^n w_n(\hat x_i - \hat x_j) \]
  is the Hamiltonian of the system.
  It is of significant interest to determine when $\mathfrak{e}_n$ can be approximated by simpler effective models.

  One approximation is the Hartree--Fock theory\cite{Hartree, Fock} obtained by restricting the variational problem in \eqref{manybodygse} to Slater determinants.
  Written in terms of one-body density matrices, Slater determinants correspond to orthogonal projections in $L^2(\R^3; \C)$ of rank $n$.
  After a standard relaxation from rank $n$ projections to fermionic density operators of trace $n$, the Hartree--Fock energy is given by
  \[ \mathfrak{e}^\HF_n = \inf_{\substack{\gamma \in \DO(\mathcal{H}) \\ \Tr \gamma = n}} \left[ \Tr \left( -\Delta + v_n(\hat x) \right) \gamma + \frac{1}{2} \iint w_n(x - y) \left[ \rho_\gamma(x) \rho_\gamma(y) - \abs{\gamma(x, y)}^2 \right] \, \dd x \, \dd y \right] \]
  where $\DO{(\mathcal{H})} = \{ 0 \leq \gamma \leq \mathds{1} : \gamma \in \mathfrak{S}^1{(\mathcal{H})} \}$ and $\mathcal{H} = L^2(\R^3; \C)$.

  A foundational rigorous analysis of the Hartree--Fock model for atoms and molecules was given by Lieb and Simon\cite{LiebSimonTF, LiebSimonHF}, and later expanded by Lions\cite{LionsHF}.
  In particular, Lieb and Simon proved that the Hartree--Fock energy is asymptotically equal to the ground-state energy in the heavy-atom regime where the number of electrons $n \to \infty$ while the corresponding atom or molecule stays neutral.
  After rescaling, this limit is an example of a mean-field scaling in which $v_n$ and $w_n$ are of orders $n^{2/3}$ and $n^{-1/3}$ respectively;
  see Ref. \onlinecite{BenedikterPortaSchlein} for a modern presentation.

  Closely related is the reduced Hartree--Fock theory, obtained by omitting the exchange term, i.e.
  \[ \mathfrak{e}^\rHF_n = \inf_{\substack{\gamma \in \DO(\mathcal{H}) \\ \Tr \gamma = n}} \left[ \Tr \left( -\Delta + v_n(\hat x) \right) \gamma + \frac{1}{2} \iint \rho_\gamma(x) w_n(x - y) \rho_\gamma(y) \, \dd x \, \dd y \right] \, . \]
  The resulting reduced Hartree--Fock energy functional is convex, and is therefore easier to analyze than the full Hartree--Fock energy functional, while agreeing with the (full) Hartree--Fock energy to leading order in $n$.
  As a result, it has been studied as a surrogate for the full model\cite{Solovej}.

  Another effective theory is the Thomas--Fermi model\cite{Thomas, Fermi}, whose variational theory and heavy-atom asymptotics were established rigorously by Lieb and Simon\cite{LiebSimonTF}.
  The Thomas--Fermi energy is
  \[ \mathfrak{e}^\TF_n = \inf_{\substack{\norm{\rho}_1 = n \\ \rho \geq 0}} \left[ \frac{3}{5} (6 \pi^2)^{2 / 3} \int \rho(x)^{5 / 3} \, \dd x +  \int v_n(x) \rho(x) \, \dd x + \frac{1}{2} \iint \rho(x) w_n(x - y) \rho(y) \, \dd x \, \dd y \right] \, , \]
  and it can be deduced from the work of Lieb and Simon\cite{LiebSimonTF} in the heavy-atom regime, and more generally from the work of Fournais, Lewin and Solovej\cite{Fournais} in the mean-field regime that
  \[ \label{energyconv} \mathfrak{e}^\HF_n \equiv \mathfrak{e}^\rHF_n \equiv \mathfrak{e}^\TF_n \bmod o(n^{5/3}) \, . \]

  In this paper, we examine the relation between reduced Hartree--Fock theory and the Thomas--Fermi model in the setting of the grand canonical ensemble at positive temperature.
  In the grand canonical ensemble, the ground-state energy is replaced by the grand potential and in analogy with \eqref{energyconv},
  we show that the reduced Hartree--Fock grand potential is asymptotically equal to an appropriate Thomas--Fermi grand potential.

  We now introduce the notation needed to state our results.
  Let $\mathcal{H} = L^2(\R^3; \C)$ and define
  \[ \operatorname{DO}{(\mathcal{H})} = \left\{ \gamma = \gamma^\ast \in \mathfrak{S}_1(\mathcal{H}) : 0 \leq \gamma \leq \mathds{1} \right\} \, . \]

  For $\rho \in L^1(\R^3)$ and $\mu \in L^\infty(\R^3)$ we write
  \[ \inner{\mu, \rho} = \int \mu(x) \rho(x) \, \dd x \, . \]

  For $\rho, \eta \in L^1(\R^3)$ and $w \in L^\infty(\R^3)$ we write
  \[ \inner{\rho, \eta}_w = \iint \rho(x) w(x - y) \eta(y) \, \dd x \, \dd y \, . \]

  For $\gamma \in \DO{(\mathcal{H})}$ we denote by $\rho_\gamma$ or $\den \gamma$ the unique element of $L^1(\R^3)$ such that
  \[ \operatorname{Tr}{(\mu(\hat x) \, \gamma)} = \int \mu(x) \rho_\gamma(x) \, \mathrm{d} x = \inner{ \mu, \rho_\gamma } \]
  for every $\mu \in L^\infty(\mathbb{R}^3)$.

  We call $v : \R^3 \to \R$ \emph{confining} if $v(x) \to \infty$ as $\abs{x} \to \infty$.

  We call $s : [0, 1] \to \R$ an \emph{entropy function} if it is continuous and concave, and satisfies $s(0) = s(1) = 0$.
  Its concave conjugate $s_\ast : \R \to \R$ is defined by
  \[ s_\ast(h) = \inf_{0 \leq r \leq 1} \left( rh - s(r) \right) \, . \]

  We call $w \in L^\infty(\R^3)$ \emph{repulsive} if $w \geq 0$, and of \emph{positive type} if $\inner{-, -}_w$ is positive definite.
  In particular, if $w \in \mathcal{S}(\R^3)$ then $w$ is of positive type if and only if $\hat w > 0$ almost everywhere.

  For a function $f$, we denote by $f_+$ and $f_-$ respectively the positive and negative parts of $f$ with the convention that $f = f_+ - f_-$, i.e. $f_\pm \geq 0$.

  Lastly, when $H$ is a semibounded operator on $\mathcal{H}$ with compact resolvent, the expression
  \[ \Tr{(H \gamma)} = \sum_{n = 1}^\infty \lambda_n \inner{u_n, \gamma u_n} \, , \]
  where $(u_n)$ is an orthonormal basis of eigenvectors of $H$ satisfying $H u_n = \lambda_n u_n$, takes a well-defined extended value in $(-\infty, \infty]$
  for all $\gamma \in \DO{(\mathcal{H})}$.
  In this case, we define
  \[ \DO_H{(\mathcal{H})} = \{ \gamma \in \DO{(\mathcal{H})} : \Tr\left( H \gamma \right) < \infty \} \, . \]

  The Thomas--Fermi grand potential is defined by
  \[ \label{ftfdef} \Omega^\TF(\zeta, \beta) = \inf_{\rho \in L^1(\R^3)} \Omega^\TF(\rho; \zeta, \beta) \, , \]
  where
  \[ \label{ftfrhodef} \Omega^\TF(\rho; \zeta, \beta) = \frac{1}{2} \inner{ \rho, \rho }_w + \sup_{\mu \in L^\infty(\R^3)} \left[  \frac{\beta^{-1}}{(2 \pi)^3} \iint s_\ast(\beta(\abs{p}^2 + v(x) + \mu(x) - \zeta)) \, \mathrm{d} x \, \mathrm{d} p - \inner{ \mu, \rho } \right] \, , \]
  $\beta$ is the inverse temperature of the system and $\zeta$ is the chemical potential of the system.

  Our main result has the following equivalent mean-field and semiclassical formulations.
  \begin{theorem}
    \label{mainthm}
    Fix $\beta > 0$ and $\zeta \in \R$. Let $v \in L^{5/2}_\mathrm{loc.}(\R^3)$ be confining, $w \in \mathcal{S}(\R^3)$ be even, repulsive and of positive type, and $s : [0, 1] \to \R$ be an entropy function
    such that $s_\ast \in C^2(\R)$ and the compatibility condition
    \[ \iint s_\ast(\beta(\abs{p}^2 + v(x) - \zeta)) \, \dd x \, \dd p > -\infty \]
    holds.

    Then $\Omega^\TF(\zeta, \beta)$ is finite, and the following equivalent formulations hold:
    \begin{enumerate}
      \item[(a)] \emph{Mean-field formulation.}
        Let $\beta_n = n^{-2/3} \beta$, $\zeta_n = n^{2/3} \zeta$, $v_n = n^{2/3} v$, $w_n = n^{-1/3} w$, $H_n = -\Delta + v_n(\hat x)$, and $\DO_n(\mathcal{H}) = \DO_{H_n}(\mathcal{H})$.
        The reduced Hartree--Fock grand potential
        \[ \Omega^\rHF(\zeta_n, \beta_n) = \inf_{\gamma \in \DO_n{(\mathcal{H})}} \left[ \Tr \left( H_n - \zeta_n \right) \gamma - \beta_n^{-1} \Tr s(\gamma) + \frac{1}{2} \inner{\rho_\gamma, \rho_\gamma}_{w_n} \right] \]
        is finite for each $n$, and
        \[ n^{-5/3} \Omega^\rHF(\zeta_n, \beta_n) \to \Omega^\TF(\zeta, \beta) \, . \]

      \item[(b)] \emph{Semiclassical formulation.}
        Let $H_\kappa = -\kappa^2 \Delta + v(\hat x)$ and $\DO_\kappa(\mathcal{H}) = \DO_{H_\kappa}(\mathcal{H})$.
        The rescaled reduced Hartree--Fock grand potential
        \[ \label{fksdef} \Omega^\rHF_\kappa(\zeta, \beta) = \inf_{\gamma \in \DO_\kappa{(\mathcal{H})}} \left[ \Tr \left( H_\kappa - \zeta \right) \gamma - \beta^{-1} \Tr s(\gamma) + \frac{\kappa^3}{2} \inner{\rho_\gamma, \rho_\gamma}_{w} \right] \equiv \inf_{\gamma \in \DO_\kappa{(\mathcal{H})}} \Omega^\rHF_\kappa(\gamma; \zeta, \beta) \]
        is finite for each $\kappa > 0$, and
        \[ \label{thmclose} \kappa^3 \Omega^\rHF_\kappa(\zeta, \beta) \to \Omega^\TF(\zeta, \beta) \, . \]
        In addition, if $\mu^\rHF_\kappa$ and $\mu^\TF$ are the corresponding mean-field potentials from Proposition \ref{ksdftequiv} and Proposition \ref{swaptf}, respectively, then
        \[ \mu^\rHF_\kappa \xrightarrow{\kappa \to 0} \mu^\TF \qquad \text{in } H^\sigma(\R^3) \text{ for all } \sigma \geq 0 \, . \]
    \end{enumerate}

    The two formulations are related by $\kappa = n^{-1/3}$, for which
    \[ \Omega^\rHF(\zeta_n, \beta_n) = n^{2/3} \Omega^\rHF_{n^{-1/3}}(\zeta, \beta) \, . \]
  \end{theorem}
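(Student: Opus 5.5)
The plan is to prove the semiclassical formulation (b) and then obtain (a) by rescaling. Under the unitary dilation $x \mapsto n^{-1/3}x$, the operator $-\Delta + n^{2/3}v$ becomes $n^{2/3}(-\kappa^2\Delta + v)$ with $\kappa = n^{-1/3}$. Since $\beta_n^{-1}= n^{2/3}\beta^{-1}$, the entropy term scales by $n^{2/3}$, and the interaction term $\tfrac12 n^{-1/3}\inner{\rho,\rho}_{w}$, after the dilation of densities, becomes $n^{2/3}\tfrac{\kappa^3}{2}\inner{\rho,\rho}_w$. This gives the identity $\Omega^\rHF(\zeta_n,\beta_n)=n^{2/3}\Omega^\rHF_{n^{-1/3}}(\zeta,\beta)$, and $n^{-5/3}=\kappa^{5}$, so (a) follows from $\kappa^3\Omega^\rHF_\kappa\to\Omega^\TF$.

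For (b), I would pass to dual (density--potential) representations on both sides. Because $w$ is of positive type, the Hartree term can be linearized as
\[ \tfrac{\kappa^3}{2}\inner{\rho,\rho}_w=\sup_{\mu}\Bigl[\inner{\mu,\rho}-\tfrac{1}{2\kappa^3}\inner{\mu,\mu}_{w^{-1}}\Bigr] \]
on the range $\mu = \kappa^3 w\ast\rho$. Minimizing over $\gamma$ for fixed $\mu$ then uses the Gibbs variational principle:
\[ \inf_\gamma\bigl[\Tr(H_\kappa+\mu-\zeta)\gamma-\beta^{-1}\Tr s(\gamma)\bigr]=\beta^{-1}\Tr s_\ast(\beta(H_\kappa+\mu-\zeta)). \]
This is the content I take from Proposition~\ref{ksdftequiv}. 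It should express $\kappa^3\Omega^\rHF_\kappa$ as a concave maximization over mean-field potentials $\mu$ with no duality gap and an optimizer $\mu^\rHF_\kappa$. Proposition~\ref{swaptf} gives the analogous statement for $\Omega^\TF$, with $\kappa^3\beta^{-1}\Tr s_\ast(\cdots)$ replaced by the phase-space integral and optimizer $\mu^\TF$. Finiteness follows from these representations together with the compatibility condition, the confinement of $v$, and the bound $s_\ast\le 0$ (recall $s_\ast(h)\le h\cdot 0 - s(0)=0$).

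The core step is a semiclassical trace formula, uniform in $\mu$:
\[ \kappa^3\Tr s_\ast(\beta(H_\kappa+\mu-\zeta))\to(2\pi)^{-3}\iint s_\ast(\beta(\abs p^2+v+\mu-\zeta))\,\dd x\,\dd p \]
uniformly over $\mu$ in a precompact subset of $L^\infty_{\mathrm{loc}}$ with a common lower bound, so that $v+\mu$ remains uniformly confining. I would prove this by coherent-state (Husimi) upper and lower bounds. The upper bound uses Berezin--Lieb with $\gamma$ built from coherent states weighted by $s_\ast'$. The lower bound uses the concavity of $s_\ast$ (Jensen/Berezin--Lieb for $\Tr f(H)$ with $f$ concave) together with IMS-type localization to handle the unbounded $v\in L^{5/2}_{\mathrm{loc}}$. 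The regularity $s_\ast\in C^2$ controls the commutator errors, which are of order $\kappa^2$ relative to the main term, and confinement controls the tails. To get uniformity over $\mu$, I would first treat a finite net of $\mu$'s and then use the Lipschitz dependence of $\Tr s_\ast$ on $\mu$ in $L^\infty$, which holds because $s_\ast'$ is bounded ($\abs{s_\ast'}\le 1$).

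With the uniform trace formula in hand, I conclude as follows. The relevant $\mu$'s are of the form $\kappa^3 w\ast\rho$ with $\kappa^3\Tr\gamma$ bounded, so they are bounded in every $H^\sigma$ (using $w\in\mathcal S$) and nonnegative since $w\ge0$. They therefore lie in a precompact family to which the trace formula applies. Evaluating the rHF dual at $\mu^\TF$ gives $\liminf \kappa^3\Omega^\rHF_\kappa\ge \Omega^\TF$. Conversely, evaluating the TF dual along $\mu^\rHF_\kappa$, whose limit points exist by compactness, gives the matching $\limsup$. Convergence of $\mu^\rHF_\kappa$ then follows from uniqueness of the TF maximizer, which comes from strict concavity of the $-\tfrac12\inner{\mu,\mu}_{w^{-1}}$ term, together with the uniform $H^\sigma$ bounds and interpolation. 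I expect the main obstacle to be the uniform trace formula, specifically the lower bound with singular, merely locally $L^{5/2}$ potentials and unbounded phase space, where tails must be controlled by confinement uniformly in $\mu$.
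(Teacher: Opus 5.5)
Your overall architecture matches the paper's. You use the dual representations from Propositions~\ref{ksdftequiv} and~\ref{swaptf}, test the rHF dual at $\mu^\TF$ and the TF dual at $\mu_\kappa^\rHF$, and rely on a semiclassical trace formula uniform over a locally precompact family of nonnegative perturbations. However, your proposed proof of that trace formula has a genuine gap.

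\textbf{Coherent-state errors cannot be absorbed.} Coherent-state (Berezin--Lieb/IMS) bounds produce errors that are additive energy shifts per particle, or, after localization, rescalings of the argument of $s_\ast$. Absorbing them requires one of the following:
a uniform bound on the semiclassical particle number $\iint s_\ast'(\abs{p}^2+v)\,\dd x\,\dd p$;
finiteness of $\iint s_\ast(\abs{p}^2+v-\epsilon)\,\dd x\,\dd p$;
or finiteness of $\iint s_\ast((1-\epsilon)(\abs{p}^2+v))\,\dd x\,\dd p$.
None of these follows from the compatibility condition. The paper proves the theorem precisely without assuming mass or energy non-criticality (see its overview of finite-temperature phenomena). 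Confinement only gives a compact resolvent; integrability of the tails comes solely from the compatibility condition, which leaves no room for any shift or rescaling. Indeed, your trial state weighted by $s_\ast'$ may already have infinite trace.

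\textbf{The Lipschitz step fails for the same reason.} The bound $\abs{s_\ast'}\le 1$ controls each eigenvalue's contribution, but there are infinitely many eigenvalues. For $\mu_1,\mu_2\ge 0$, the correct estimate is
$\kappa^3\abs{\Tr s_\ast(H_\kappa+\mu_1)-\Tr s_\ast(H_\kappa+\mu_2)}\le\norm{\mu_1-\mu_2}_{L^\infty}\,\kappa^3\Tr s_\ast'(H_\kappa)$,
and the constant $\kappa^3\Tr s_\ast'(H_\kappa)$ need not be finite. In addition, a finite net in $L^\infty_{\mathrm{loc}}$ does not give global $L^\infty$ closeness.

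\textbf{The missing idea.} The paper uses the Riesz-mean decomposition $\Tr s_\ast(H)=\int s_\ast''(t)\,\Tr(H-t)_-\,\dd t$, valid because $s_\ast(h)$ and $hs_\ast'(h)$ vanish as $h\to\infty$. Combined with the Lieb--Thirring inequality, this yields the $\kappa$-independent dominating function $-s_\ast''(t)\iint(\abs{p}^2+v_{\min}(x)-t)_-\,\dd x\,\dd p$. Its $t$-integral is a constant multiple of $-\iint s_\ast(\abs{p}^2+v_{\min})\,\dd x\,\dd p<\infty$: only a multiplicative constant appears, with no shift or rescaling. Dominated convergence in $t$ then reduces everything to Weyl asymptotics of Riesz means at fixed $t$. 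There $(v_\kappa-t)_-$ is supported in the fixed compact set $\{v_{\min}\le t\}$, so local $L^{5/2}$ precompactness suffices, via Nam's argument and Lemma~\ref{strongcompactuniform}.

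\textbf{Smaller corrections.}
First, $\kappa^3\Tr\gamma$ is not known to be bounded along minimizing sequences. The $H^\sigma$ bound on $\mu_\kappa^\rHF$ must instead come from the interaction term, as in Lemma~\ref{fksmunice}: $\int\abs{\hat\mu_\kappa^\rHF}^2\hat w^{-1}\,\dd k\le-2\kappa^3\Omega_\kappa^\DFT(\mu_\kappa^\rHF)\le-2\kappa^3\Tr s_\ast(H_\kappa)$.
Second, limit points plus uniqueness plus interpolation give only weak or local convergence on $\R^3$. Convergence in $H^\sigma(\R^3)$ follows instead from the strong-concavity bound $\frac18\int\abs{\hat\mu_\kappa^\rHF-\hat\mu^\TF}^2\hat w^{-1}\,\dd k\le\frac12\bigl(\Omega^\TF-\Omega^\TF(\mu_\kappa^\rHF)\bigr)\to0$.
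Third, part (a) needs no dilation: $H_n=n^{2/3}H_\kappa$, $\beta_n^{-1}=n^{2/3}\beta^{-1}$ and $w_n=n^{2/3}\kappa^3w$ hold directly.
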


  \noindent \emph{Remarks.}
  \begin{enumerate}
    \item The mean-field scaling in part (a) balances all contributions to the grand potential.
    For states with particle number of order $n$, the kinetic, external-potential, interaction, entropy and chemical-potential terms are all of order $n^{5/3}$:
    the energy, temperature and chemical-potential scales are of order $n^{2/3}$ per particle, while $w_n = n^{-1/3} w$ compensates for the order $n^2$
    interacting pairs.
    Thus both thermal and interaction effects persist in the limiting Thomas--Fermi theory.
    \item By the scaling relation in the theorem, the convergence of the mean-field potentials in part (b) can equivalently be restated in the mean-field scaling of part (a).
  \end{enumerate}

  Positive-temperature Thomas--Fermi theory with the standard Fermi--Dirac entropy
  \[ s(r) = -r \log{r} - (1 - r) \log{(1 - r)} \]
  has long been studied in both the physics and mathematics literature.
  Early physics treatments include the foundational work of Marshak and Bethe\cite{MarshakBethe}
  and that of Feynman, Metropolis and Teller\cite{FeynmanMetropolisTeller}.
  In the mathematics literature, Narnhofer and Sewell\cite{NarnhoferSewell} considered self-gravitating fermions in the thermodynamic limit,
  while Narnhofer and Thirring\cite{NarnhoferThirring} established the asymptotic exactness of positive-temperature Thomas--Fermi theory in a grand-canonical heavy-atom regime.
  More recently, Lewin, Madsen and Triay\cite{Lewin} studied the coupled mean-field and semiclassical limit in the closely related canonical ensemble.

  Beyond the Fermi--Dirac case, positive-temperature Thomas--Fermi models based on nonextensive entropies, including the Tsallis entropy\cite{Tsallis, MartinenkoShivamoggi}, have also been considered in the physics literature.
  More generally, Chavanis\cite{Chavanis} considered an arbitrary phase-space entropy in a canonical self-gravitating model and derived a free-energy functional of the spatial density.
  After matching conventions, the local ideal-gas term in his functional agrees with the one induced by \eqref{ftfrhodef}, although his interaction and ensemble are different.

  Nguyen\cite{Nguyen} studied the grand-canonical mean-field problem at zero temperature, corresponding formally to $\beta=\infty$, or equivalently to $s=0$.

  To the author's knowledge, the grand-canonical positive-temperature Thomas--Fermi problem has not previously been formulated and analyzed for a general class of fermionic entropy functions using the density--potential dual representation \eqref{ftfrhodef}.
  This formulation is particularly suited to the semiclassical comparison with the reduced Hartree--Fock grand potential carried out here.

  Additionally, the following secondary results support the consistency of \eqref{ftfdef} with previous results.
  As noted above, the relationship between the Thomas--Fermi grand potential and both the reduced Hartree--Fock and Hartree--Fock grand potentials was considered at zero temperature by Nguyen\cite{Nguyen},
  who defined
  \[ \omega^\rHF_\kappa(\zeta) = \inf_{\substack{\gamma \in \DO(\mathcal{H})}} \omega^\rHF_{\kappa} (\gamma; \zeta) \, , \]
  where
  \[ \omega^\rHF_{\kappa} (\gamma; \zeta) = \Tr \left( -\kappa^2 \Delta + v(\hat x) - \zeta \right) \gamma + \frac{\kappa^3}{2} \iint \rho_\gamma(x) w(x - y) \rho_\gamma(y)  \, \dd x \, \dd y \, ,  \]
  and
  \[ \label{nguyenftfdef} \omega^\TF(\zeta) = \inf_{\rho \geq 0} \left[ \int \left[ \frac{3}{5} (6 \pi^2)^{2 / 3} \rho(x)^{5 / 3} + \left[ v(x) - \zeta \right] \rho(x) \right] \, \dd x + \frac{1}{2} \iint \rho(x) w(x - y) \rho(y) \, \dd x \, \dd y \right] \, . \]
  Indeed, it was shown by Nguyen that for each fixed $\zeta \in \R$
  \[ \label{gpconv} \kappa^3 \omega^\rHF_\kappa(\zeta) \xrightarrow{\kappa \to 0} \omega^\TF(\zeta) \, . \]
  Our grand potential is consistent with that of Nguyen in the following sense.

  \begin{theorem}
    \label{mainthmtwo}

    Fix $\beta_0 > 0$ and $\zeta \in \R$. Let $v \in L^{5/2}_\mathrm{loc.}(\R^3)$ be confining, $w \in \mathcal{S}(\R^3)$ be even, repulsive and of positive type, and $s : [0, 1] \to \R$ be an entropy function
    such that the compatibility condition
    \[ \iint s_\ast(\beta_0(\abs{p}^2 + v(x) - \zeta)) \, \dd x \, \dd p > -\infty \]
    holds.

    Then $\omega^\TF(\zeta)$ is finite, $\Omega^\TF(\zeta, \beta)$ is finite for all $\beta \geq \beta_0$,  and
    \[ \Omega^\TF(\zeta, \beta) \xrightarrow{\beta \to \infty} \omega^\TF(\zeta) \, . \]
  \end{theorem}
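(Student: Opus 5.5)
The plan is to exploit two elementary properties of the rescaled conjugate
$f_\beta(h) := \beta^{-1} s_\ast(\beta h) = \inf_{0 \le r \le 1} \left( r h - \beta^{-1} s(r) \right)$.
Because $s$ is concave with $s(0) = s(1) = 0$, we have $s \ge 0$ on $[0,1]$.

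\emph{Monotonicity.} For each $r$, the quantity $r h - \beta^{-1} s(r)$ is nondecreasing in $\beta$. Hence $f_\beta$ is nondecreasing in $\beta$.

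\emph{Sandwich.} We have $-h_- - \beta^{-1}\max s \le f_\beta(h) \le \inf_r r h = -h_-$. So $f_\beta \uparrow -h_-$ pointwise as $\beta \to \infty$.

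For $\mu \in L^\infty(\R^3)$, write
\[ F_\beta(\mu) = (2\pi)^{-3} \iint f_\beta(\abs{p}^2 + v + \mu - \zeta) \, \dd x \, \dd p , \qquad F_\infty(\mu) = -(2\pi)^{-3} \iint (\abs{p}^2 + v + \mu - \zeta)_- \, \dd x \, \dd p . \]
After the $p$-integration, $F_\infty(\mu) = -c \int (v + \mu - \zeta)_-^{5/2} \, \dd x$, with $c = \frac{2}{5}(6\pi^2)^{-1} \cdot (\text{standard constant})$.

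The compatibility condition says $F_{\beta_0}(0) > -\infty$. Since $\mu \in L^\infty$ only shifts the argument by a bounded amount and $v$ is confining, the same holds for every $\mu$ and every $\beta \ge \beta_0$. This gives $F_{\beta_0}(\mu) \le F_\beta(\mu) \le F_\infty(\mu)$. Because $f_{\beta_0}$ gives an integrable lower envelope and $f_\beta$ increases to its limit, monotone convergence gives $F_\beta(\mu) \to F_\infty(\mu)$ for each fixed $\mu \in L^\infty$. In particular $F_\infty(\mu) > -\infty$, which yields the finiteness of $\Omega^\TF(\zeta,\beta)$ for $\beta \ge \beta_0$ and of $\omega^\TF(\zeta)$.

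\emph{Upper bound.} From $F_\beta \le F_\infty$ we get, for every $\rho$,
\[ \Omega^\TF(\rho;\zeta,\beta) \le \tfrac12 \inner{\rho,\rho}_w + \sup_{\mu \in L^\infty}\left[ F_\infty(\mu) - \inner{\mu,\rho} \right] . \]
The supremum is a pointwise Legendre transform of $U \mapsto -c\,U_-^{5/2}$, with the restriction to $L^\infty$ only lowering it. It is therefore at most $\int \left[ \frac35 (6\pi^2)^{2/3} \rho^{5/3} + (v-\zeta)\rho \right]$ for $\rho \ge 0$, and it equals $+\infty$ when $\rho$ has a negative part (take $\mu \to -\infty$ on $\{\rho < 0\}$). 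Taking the infimum over $\rho$ gives $\Omega^\TF(\zeta,\beta) \le \omega^\TF(\zeta)$ for all $\beta \ge \beta_0$. The same monotonicity also shows that $\Omega^\TF(\zeta,\beta)$ is nondecreasing in $\beta$.

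\emph{Lower bound.} Here I would use the zero-temperature Thomas--Fermi theory as treated by Nguyen. There exists a minimizer $\rho^\ast \ge 0$ of \eqref{nguyenftfdef} in $L^1 \cap L^{5/3}$, satisfying the Euler--Lagrange equation
$\rho^\ast = c' \left( \zeta - v - w \ast \rho^\ast \right)_+^{3/2}$.
The standard duality then gives
\[ \omega^\TF(\zeta) = F_\infty(\mu^\ast) - \tfrac12 \inner{\rho^\ast,\rho^\ast}_w , \qquad \mu^\ast = w \ast \rho^\ast . \]
Since $w \in \mathcal S$ and $\rho^\ast \in L^1$, we have $\mu^\ast \in L^\infty$, so $\mu^\ast$ is admissible in \eqref{ftfrhodef}. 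Inserting it, and using that $w$ is of positive type so that $\frac12\inner{\rho,\rho}_w - \inner{w\ast\rho^\ast,\rho} \ge -\frac12 \inner{\rho^\ast,\rho^\ast}_w$, we obtain for every $\rho$
\[ \Omega^\TF(\rho;\zeta,\beta) \ge F_\beta(\mu^\ast) - \tfrac12 \inner{\rho^\ast,\rho^\ast}_w . \]
Hence $\Omega^\TF(\zeta,\beta) \ge F_\beta(\mu^\ast) - \frac12\inner{\rho^\ast,\rho^\ast}_w$, and the right-hand side tends to $\omega^\TF(\zeta)$ by the convergence $F_\beta(\mu^\ast) \to F_\infty(\mu^\ast)$ established above. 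Combined with the upper bound, this proves the claim.

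The main obstacle is the zero-temperature input: existence of the minimizer $\rho^\ast$, its Euler--Lagrange equation, and the duality identity for $\omega^\TF$ under only $v \in L^{5/2}_{\mathrm{loc}}$ confining. I would derive these from strict convexity and coercivity of the functional in \eqref{nguyenftfdef}, using $\rho^{5/3}$ together with confinement to get tightness. The duality then follows by computing $F_\infty(\mu^\ast) - \inner{\mu^\ast,\rho^\ast} = \int \left[ \frac35(6\pi^2)^{2/3}(\rho^\ast)^{5/3} + (v-\zeta)\rho^\ast \right]$ pointwise from the Euler--Lagrange equation, which is where the Legendre relation between $\rho^{5/3}$ and $U_-^{5/2}$ is used.
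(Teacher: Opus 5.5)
Your proof is correct in substance, and its skeleton matches the paper's. The upper bound is the pointwise inequality $\beta^{-1}s_\ast(\beta h)\le -h_-$ (the paper's $\delta_\beta\le 0$). The lower bound inserts a zero-temperature optimal potential into the positive-temperature dual and lets $\beta\to\infty$ by monotone convergence from $\beta_0$. Indeed, your $F_\beta(\mu^\ast)-F_\infty(\mu^\ast)$ is exactly the paper's $(2\pi)^{-3}\iint\delta_\beta(\abs{p}^2+v+\mu_\infty)\,\dd p\,\dd x$.

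The genuine difference is where that potential comes from.
\begin{itemize}
\item The paper never produces a zero-temperature minimizer. It obtains $\mu_\infty\ge 0$ abstractly from Fenchel--Rockafellar duality with dual attainment (Proposition~\ref{swaptf}, applied to $s=0$ through Lemma~\ref{nguyenomegarep}).
\item You construct it as $\mu^\ast=w\ast\rho^\ast$ from a primal minimizer and its Euler--Lagrange equation. That costs an existence argument (coercivity from confinement and the $\rho^{5/3}$ term, tightness, weak lower semicontinuity) plus the Euler--Lagrange relation. The relation is cleanest via convexity: minimality of $\rho^\ast$ forces $\rho^\ast$ to minimize the linearized functional $\int[\frac35(6\pi^2)^{2/3}\rho^{5/3}+(v-\zeta+\mu^\ast)\rho]$, which is a pointwise problem.
\end{itemize}
In return, your route is self-contained. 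Your upper bound works directly at the level of densities and uses only the trivial half of the pointwise Legendre identity, so it needs neither $\mu_\beta$ nor Proposition~\ref{omegarhoexplicit}. Your monotonicity in $\beta$ via $s\ge 0$ needs no differentiability. By contrast, the paper's \eqref{betamonotoneone}--\eqref{betamonotonetwo} and Lemma~\ref{entropydecay} are phrased for $s_\ast\in C^1$, which Theorem~\ref{mainthmtwo} does not assume.

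Three corrections are needed.
\begin{itemize}
\item You claim $F_{\beta_0}(\mu)>-\infty$ for every $\mu\in L^\infty$ because $\mu$ ``only shifts the argument by a bounded amount.'' This is unjustified and can fail at $\beta=\beta_0$ when $\mu$ has a negative part. It is exactly the mass-critical phenomenon described in the paper, where a negative constant shift destroys integrability. You only need the claim for $\mu^\ast$. Since $w\ge 0$ and $\rho^\ast\ge 0$, you have $\mu^\ast\ge 0$, hence $F_{\beta_0}(\mu^\ast)\ge F_{\beta_0}(0)>-\infty$ by monotonicity of $s_\ast$. State this explicitly; the paper uses $\mu_\infty\ge 0$ in the same way.
\item Finiteness of $\Omega^\TF(\zeta,\beta)$ does not come from $F_\infty(\mu)>-\infty$. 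It follows from testing $\mu=0$, which gives the lower bound $F_\beta(0)\ge F_{\beta_0}(0)$ by positivity of $w$, and $\rho=0$, which gives the upper bound $0$.
\item To force $+\infty$ when $\rho$ has a negative part, you must send $\mu\to+\infty$ on $\{\rho<0\}$, as in Lemma~\ref{rhotfpositive}, not $-\infty$. That remark is not needed for your upper bound anyway.
\end{itemize}
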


  Our grand potential is also consistent with the Thomas--Fermi free energy considered by Lewin, Madsen and Triay\cite{Lewin}
  for the Fermi--Dirac entropy.
  Specifically, Lewin, Madsen and Triay defined
  \[ \label{freeenergytfdef} F^\TF(\beta) = \inf_{\substack{\norm{\rho}_1  = 1 \\ \rho \geq 0}} \int f^\ast(\rho(x)) + v(x) \rho(x) \, \dd x + \frac{1}{2} \inner{\rho, \rho}_w \]
  where $f^*(0) = 0$ and
  \[ \label{fstardef} f^*(r) = r h(r) - \frac{\beta^{-1}}{(2 \pi)^3} \int \log \left( 1 + \exp{\left( -\beta(\abs{p}^2 - h(r) \right)} \right) \, \dd p \, , \]
  with $h(r)$ the unique solution to the implicit equation
  \[ r = \frac{1}{(2\pi)^3} \int \left( 1 + \exp{(\beta (\abs{p}^2 - h(r)))} \right)^{-1} \, \dd p \]
  for $r > 0$.
  Indeed, we have the following.
  \begin{theorem}
    \label{mainthmthree}

    Fix $\beta > 0$ and $\zeta \in \R$. Let $v \in L^{5/2}_\mathrm{loc.}(\R^3)$ be confining, $w \in \mathcal{S}(\R^3)$ be even, repulsive and of positive type, and $s : [0, 1] \to \R$ such that
    \[ s(r) = -r \log{r} - (1 - r) \log {(1 - r)} \]
    and
    \[ \int \exp(-\beta v(x)_+) \, \dd x < \infty \, . \]

    Then
    \[ \Omega^\TF(\zeta, \beta) = \inf_{\substack{\rho \in L^1(\R^3) \\ \rho \geq 0}} \int f^\ast(\rho(x)) + \left[ v(x) - \zeta \right] \rho(x) \, \dd x + \frac{1}{2} \inner{\rho, \rho}_w \]
    where $f^*$ is defined as in \eqref{fstardef}.
  \end{theorem}

  \section{Proof of Theorem \ref{mainthm}}

  In this section we give a proof of Theorem \ref{mainthm} while deferring some of the more technical steps to lemmas which will be proven further in the paper.

  \begin{proof}[Proof of Theorem \ref{mainthm}]
    We prove the semiclassical formulation in part (b); the mean-field formulation in part (a) then follows from the scaling relation in the theorem.
    In this formulation, since $v - \zeta$ is an admissible external potential whenever $v$ is, and
    similarly $\beta^{-1} s$ is an admissible entropy function whenever $s$ is, we can take $\zeta = 0$ and $\beta = 1$ without loss of generality.
    As such, we drop $\zeta$ and $\beta$ from the notation throughout this proof for simplicity, e.g., $\Omega^\TF(\rho; \zeta, \beta)$ becomes $\Omega^\TF(\rho)$
    and $\Omega^\TF(\zeta, \beta)$ becomes $\Omega^\TF$.

    We first show the finiteness of $\Omega^\TF$.
    Let
    \[ \label{ftfrhomudef} \Omega^\TF(\rho, \mu) = \frac{1}{(2\pi)^3} \iint s_\ast(\abs{p}^2 + v(x) + \mu(x)) \, \dd p \, \dd x + \frac{1}{2} \inner{\rho, \rho}_w - \inner{\mu, \rho} \, , \]
    so that $\Omega^\TF(\rho) = \sup_{\mu \in L^\infty(\R^3)} \Omega^\TF(\rho, \mu)$, and
    let $\Omega^\TF(\mu) = \inf_{\rho \in L^1(\R^3)} \Omega^\TF(\rho, \mu)$ be the corresponding dual functional.

    It then follows from Proposition \ref{swaptf} that we have no duality gap, i.e.,
    \[ \Omega^\TF = \sup_{\mu \in L^\infty(\R^3)} \Omega^\TF(\mu) \]
    and thus $\Omega^\TF(\mu = 0) \leq \Omega^\TF \leq \Omega^\TF(\rho = 0)$.

    \[ \Omega^\TF(\rho = 0) = \sup_{\mu \in L^\infty(\R^3)} \frac{1}{(2\pi)^3} \iint s_\ast(\abs{p}^2 + v(x) + \mu(x)) \, \dd p \, \dd x \leq 0 \]
    since $s_\ast(h) = \inf_{0 \leq r \leq 1} ( rh - s(r) ) \leq -s(0) = 0$.

    By computation, and since $\hat w > 0$,
    \begin{align}
      \Omega^\TF(\mu = 0) &= \inf_{\rho \in L^1(\R^3)} \left[ \frac{1}{(2 \pi)^3} \iint s_\ast(\abs{p}^2 + v(x)) \, \mathrm{d} x \, \mathrm{d} p + \frac{1}{2} \inner{ \rho, \rho }_w \right] \\
      &= \frac{1}{(2 \pi)^3} \iint s_\ast(\abs{p}^2 + v(x)) \, \mathrm{d} x \, \mathrm{d} p > -\infty
    \end{align}
    by hypothesis.
    We therefore conclude that $\Omega^\TF$ is finite.

    Next, we show the finiteness of $\Omega_\kappa^\rHF$ for $\kappa > 0$.
    By Fenchel's inequality (see Lemma \ref{operatorfenchel} for details), for each $\lambda \in \mathbb{R}$ we have
    \[ \lambda \gamma - s(\gamma) \geq s_\ast(\lambda) \mathds{1} \, . \]
    By Lemma \ref{semiclassicalbound}, $H_\kappa$ has compact resolvent and therefore there exists an orthonormal basis $(u_n)$ of $\mathcal{H}$ such that for each $n$,
    $u_n$ is an eigenvector of $H_\kappa$ with corresponding eigenvalue $\lambda_n$.
    Evaluating the trace on this basis, it follows that
    \begin{align}
      \Omega_\kappa^\rHF(\gamma) &= \sum_{n = 1}^\infty \inner{ u_n, \left[ \lambda_n \gamma - s(\gamma) - s_\ast(\lambda_n) \right] u_n } + \Tr s_\ast (H_\kappa) + \frac{\kappa^3}{2} \inner{ \rho_\gamma, \rho_\gamma }_w \\
      &\geq \Tr s_\ast(H_\kappa) + \frac{\kappa^3}{2} \inner{ \rho_\gamma, \rho_\gamma }_w \, .
    \end{align}
    Since $\hat w > 0$, we have $0 = \Omega^\rHF_\kappa(\gamma = 0) \geq \Omega_\kappa^\rHF \geq \Tr s_\ast(H_\kappa)$.
    We then conclude by Lemma \ref{semiclassicalbound} that $\Omega_\kappa^\rHF$ is finite as
    \[ \Tr s_\ast(H_\kappa) \gtrsim \frac{1}{(2\pi \kappa)^3} \iint s_\ast(\abs{p}^2 + v(x)) \, \dd x \, \dd p > -\infty \, . \]

    It remains to show the convergence $\eqref{thmclose}$.
    To that end, we introduce auxiliary functionals $\Omega_\kappa^\DFT(\rho, \mu)$ and $\Omega_\kappa^\DFT(\mu)$ where
    \[ \label{fdftrhomudef} \Omega_\kappa^\DFT(\rho, \mu) = \Tr {s_\ast(H_\kappa + \mu(\hat x))} + \frac{\kappa^3}{2} \inner{ \rho, \rho }_w - \inner{ \mu, \rho }\, , \]
    and
    \[ \label{fdftmudef} \Omega_\kappa^\DFT(\mu) = \inf_{\rho \in L^1(\R^3)} \Omega_\kappa^\DFT(\rho, \mu) \, . \]

    By Proposition \ref{ksdftequiv}, the functional $\mu\mapsto\Omega_\kappa^\DFT(\mu)$ achieves a global maximum at $\mu^\DFT_\kappa$
    and $\Omega^\rHF_\kappa = \Omega^\DFT_\kappa(\mu^\DFT_\kappa)$.
    In fact, along every minimizing sequence $(\gamma^\rHF_{\kappa, n})$ for $\Omega_\kappa^\rHF$ we have
    \[ \kappa^3 w \ast \operatorname{den} \gamma^\rHF_{\kappa, n} \xrightarrow{n \to \infty} \mu^\rHF_\kappa = \mu^\DFT_\kappa \, , \]
    and it therefore suffices to show that
    \[ \kappa^3 \Omega_\kappa^\DFT(\mu_\kappa^\rHF) \xrightarrow{\kappa \to 0} \Omega^\TF \, . \]

    An upper bound for the difference between $\kappa^3 \Omega_\kappa^\DFT(\mu_\kappa^\rHF)$ and $\Omega^\TF$ is computed as follows:
    \begin{align}
      \kappa^3 \Omega_\kappa^\DFT(\mu_\kappa^\rHF) - \Omega^\TF &\leq \kappa^3 \Omega_\kappa^\DFT(\mu_\kappa^\rHF) - \Omega^\TF(\mu_\kappa^\rHF) \\[1ex]
      &= \sup_{\rho \in L^1(\R^3)} \left[ \kappa^3 \Omega_\kappa^\DFT(\mu_\kappa^\rHF) - \Omega^\TF(\rho, \mu_\kappa^\rHF) \right] \\
      &= \sup_{\rho \in L^1(\R^3)} \left[ \kappa^3 \Omega_\kappa^\DFT(\mu_\kappa^\rHF) - \Omega^\TF(\kappa^3 \rho, \mu_\kappa^\rHF) \right] \\
      &\leq \sup_{\rho \in L^1(\R^3)} \left[ \kappa^3 \Omega_\kappa^\DFT(\rho, \mu_\kappa^\rHF) - \Omega^\TF(\kappa^3 \rho, \mu_\kappa^\rHF) \right] \\
      &= \kappa^3 \Tr s_\ast(H_\kappa + \mu_\kappa^\rHF(\hat x)) - \frac{1}{(2 \pi)^3} \iint s_\ast(\abs{p}^2 + v(x) + \mu_\kappa^\rHF(x)) \, \dd x \, \dd p \, .
    \end{align}

    Similarly, by Proposition \ref{swaptf}, the functional $\mu \mapsto \Omega^\TF(\mu)$ achieves a global maximum at $\mu^\TF$,
    and we use this maximizer to derive a complementary lower bound as follows:
    \begin{align}
      \kappa^3 \Omega_\kappa^\DFT(\mu_\kappa^\rHF) - \Omega^\TF &= \kappa^3 \Omega_\kappa^\DFT(\mu_\kappa^\rHF) - \Omega^\TF(\mu^\TF) \\[1ex]
      &\geq \kappa^3 \Omega_\kappa^\DFT(\mu^\TF) - \Omega^\TF(\mu^\TF) \\[1ex]
      &= \inf_{\rho \in L^1(\R^3)} \left[ \kappa^3 \Omega_\kappa^\DFT(\rho, \mu^\TF) - \Omega^\TF(\mu^\TF) \right] \\
      &\geq \inf_{\rho \in L^1(\R^3)} \left[ \kappa^3 \Omega_\kappa^\DFT(\rho, \mu^\TF) - \Omega^\TF(\kappa^3 \rho, \mu^\TF) \right] \\
      &= \kappa^3 \Tr s_\ast(H_\kappa + \mu^\TF(\hat x)) - \frac{1}{(2 \pi)^3} \iint s_\ast(\abs{p}^2 + v(x) + \mu^\TF(x)) \, \dd x \, \dd p \, .
    \end{align}

    \eqref{thmclose} then follows if
    \[ \label{diffmuks} \left[ \kappa^3 \Tr s_\ast(H_\kappa + \mu_\kappa^\rHF(\hat x)) - \frac{1}{(2 \pi)^3} \iint s_\ast(\abs{p}^2 + v(x) + \mu_\kappa^\rHF(x)) \, \dd x \, \dd p \right] \xrightarrow{\kappa \to 0} 0  \, , \]
    and
    \[ \label{diffmutf} \kappa^3 \Tr s_\ast(H_\kappa + \mu^\TF(\hat x)) \xrightarrow{\kappa \to 0} \frac{1}{(2 \pi)^3} \iint s_\ast(\abs{p}^2 + v(x) + \mu^\TF(x)) \, \dd x \, \dd p \, . \]

    Using the decomposition
    \[ \Tr s_\ast{(H_\kappa + \mu(\hat x))} = \int s_\ast''(t) \Tr{\left( H_\kappa + \mu(\hat x) - t \right)_-} \, \dd t \, , \]
    the convergence in \eqref{diffmutf} follows straightforwardly from the well-known semiclassical convergence of Riesz means\cite{Nam, Frank, Mikkelsen}.
    The convergence in \eqref{diffmuks} is more subtle because of the dependence of $\mu^\rHF_\kappa$ on $\kappa$.

    In both cases, we apply Proposition \ref{semiclassics}.
    For \eqref{diffmutf}, we apply the proposition to $\{ v_\kappa = v + \mu^\TF : \kappa > 0 \}$
    with $v_{\min} = v \leq v + \mu^\TF$ as $\mu^\TF \geq 0$ by Proposition \ref{swaptf}.
    Indeed, $\{ \left. v_\kappa \right|_K \} \subseteq L^{5/2}(K)$ is a singleton for each $K$ compact and thus compact.

    For \eqref{diffmuks}, we apply the proposition to $\{ v_\kappa = v + \mu_\kappa^\rHF : \kappa > 0 \}$
    with $v_{\min} = v \leq v + \mu^\rHF_\kappa$ as $\mu^\rHF_\kappa \geq 0$.
    It remains to verify that $\{ \left. v_\kappa \right|_K \}$ is relatively compact for each $K$ compact.
    To that end, it follows from Lemma \ref{fksmunice} that
    \[ \norm{\mu^\rHF_\kappa}_{H^s(\R^3)}^2 \lesssim_{w, s} -\kappa^3 \Omega_\kappa^\DFT(\mu^\rHF_\kappa) \leq -\kappa^3 \Omega^\DFT_\kappa(\mu = 0) \]
    where, as we have seen above, $\Omega^\DFT_\kappa(\mu = 0) = \Tr s_\ast(H_\kappa)$.
    We then conclude by Lemma \ref{semiclassicalbound} that
    \[ \norm{\mu^\rHF_\kappa}_{H^1(\R^3)}^2 \lesssim_{w} -\frac{1}{(2\pi)^3} \iint s_\ast(\abs{p}^2 + v(x)) \, \dd x \, \dd p \, . \]
    In particular, for each $K$ compact, we have
    \[ \norm{\mu^\rHF_\kappa}_{H^1(U)}^2 \lesssim_{w} -\frac{1}{(2\pi)^3} \iint s_\ast(\abs{p}^2 + v(x)) \, \dd x \, \dd p \]
    where $U$ is some bounded Lipschitz domain containing $K$, e.g. a sufficiently large ball.

    By the Rellich--Kondrashov Theorem, it then follows that $\{ \left. \mu^\rHF_\kappa \right|_U \}$ is relatively compact in $L^{5/2}(U)$
    and by restriction $\{ \left. \mu^\rHF_\kappa \right|_K \}$ is relatively compact in $L^{5/2}(K)$.
    To conclude, since addition is continuous in $L^{5/2}(K)$, $\{ \left. v_\kappa \right|_K \}$ is relatively compact in $L^{5/2}(K)$.
    The two convergence statements we seek then follow from Proposition \ref{semiclassics}.

    By Lemma \ref{infizernice} (cf. Lemma \ref{ftfmunice}), $\mu \mapsto \Omega^\TF(\mu)$ is strongly concave, i.e.,
    \[ \frac{1}{8} \int \abs{ \hat\mu_\kappa^\rHF(k) - \hat\mu^\TF(k) }^2 \, \hat w(k)^{-1} \, \dd k \leq \Omega^\TF\left( \frac{\mu_\kappa^\rHF + \mu^\TF}{2} \right) - \frac{\Omega^\TF(\mu_\kappa^\rHF) + \Omega^\TF(\mu^\TF)}{2} \leq \frac{1}{2} \left( \Omega^\TF - \Omega^\TF(\mu_\kappa^\rHF) \right) \, . \]
    We then have
    \[ 0 \leq \Omega^\TF - \Omega^\TF(\mu^\rHF_\kappa) = (\Omega^\TF - \kappa^3 \Omega_\kappa^\rHF) - (\kappa^3 \Omega_\kappa^\DFT(\mu^\rHF_\kappa) - \Omega^\TF(\mu^\rHF_\kappa)) \, . \]
    We have already established above that the first term goes to zero as $\kappa \to 0$, while the second term goes to zero by \eqref{diffmuks}.
    We therefore conclude that
    \[ \int \abs{ \hat\mu_\kappa^\rHF(k) - \hat\mu^\TF(k) }^2 \, \hat w(k)^{-1} \, \dd k \xrightarrow{\kappa \to 0} 0 \, . \]
    In particular, since
    \[ \norm{\mu^\rHF_\kappa - \mu^\TF}_{H^\sigma(\R^3)}^2 \leq \norm{\langle - \rangle^{2\sigma} \hat w}_{L^\infty(\R^3)} \int \abs{ \hat\mu_\kappa^\rHF(k) - \hat\mu^\TF(k) }^2 \, \hat w(k)^{-1} \, \dd k \, , \]
    we have $\mu^\rHF_\kappa \xrightarrow{\kappa \to 0} \mu^\TF$ in $H^\sigma(\R^3)$ for all $\sigma \geq 0$.
  \end{proof}

  \noindent \emph{Remark.} To motivate the choice of symbol $\Omega^\DFT_\kappa$
  one can show, at least formally, that
  \[ \Omega^\DFT_\kappa(\rho) = \sup_{\mu \in L^\infty(\R^3)} \Omega_\kappa^\DFT(\rho, \mu) \]
  is in fact the Levy--Lieb density functional for the reduced Hartree--Fock model, i.e.,
  \[ \Omega_\kappa^\DFT(\rho) = \inf_{\substack{\gamma \in \DO_\kappa(\mathcal{H}) \\ \rho_\gamma = \rho}} \Omega^\rHF_\kappa(\gamma) \]
  and thus $(\rho, \mu) \mapsto \Omega^\DFT_\kappa(\rho, \mu)$ can be seen as a saddle function for the minimization of the
  Levy--Lieb density functional, and $\mu \mapsto \Omega^\DFT_\kappa(\mu)$ is the corresponding dual objective function.

  \section{Overview of Finite Temperature Phenomena}

  Before proceeding to the proof of our remaining results, we briefly comment on some of the key new phenomena in the positive temperature case.

  Formally, since $0 \leq \gamma \leq \mathds{1}$ we have
  \[ \label{bathtub} \Tr \left( -\kappa^2 \Delta + v(\hat x) - \zeta \right) \gamma \geq -\Tr{[\zeta +\kappa^2 \Delta - v(\hat x)]_+} \]
  which can be used to establish several coercivity estimates for $\omega_\kappa^\HF(-; \zeta)$.

  In particular,
  \[ \label{zcoulomb} \omega_\kappa^\rHF(\gamma; \zeta) \geq \frac{\kappa^3}{2} \inner{\rho_\gamma, \rho_\gamma}_w - \Tr{ [\zeta + \kappa^2 \Delta - v(\hat x)]_+ }\, , \]
  \[ \label{zenergy} \omega_\kappa^\rHF(\gamma; \zeta) \geq \Tr { \left( -\kappa^2 \Delta + v(\hat x) - \zeta \right) \gamma } \, , \]
  while for all $\epsilon > 0$,
  \[ \label{zmass} \omega_\kappa^\rHF(\gamma; \zeta) = \frac{\kappa^3}{2} \inner{\rho_\gamma, \rho_\gamma} + \epsilon \Tr \gamma + \Tr \left( -\kappa^2 \Delta + v(\hat x) - \zeta - \epsilon \right) \gamma \geq \epsilon \Tr \gamma - \Tr{ [\zeta + \epsilon + \kappa^2 \Delta - v(\hat x)]_+ } \, . \]

  These estimates yield uniform control over the energy, interaction energy and particle number respectively along minimizing sequences.
  In particular, we draw attention to the fact that since the eigenvalues of $-\kappa^2 \Delta + v(\hat x)$ accumulate only at infinity
  \[ \Tr{[\zeta + \epsilon + \kappa^2 \Delta - v(\hat x)]_+} < \infty \]
  for all $\epsilon > 0$.

  In the finite temperature case considered in this paper, the analogue to the estimate \eqref{bathtub} is the Fenchel inequality
  \[ \Tr \left( -\kappa^2 \Delta + v(\hat x) - \zeta \right) \gamma - \beta^{-1} \Tr {s(\gamma)} \geq \beta^{-1} \Tr {s_\ast(\beta (-\kappa^2 \Delta + v(\hat x) - \zeta))} \, . \]
  It then follows that
  \[ \Omega_\kappa^\rHF(\gamma; \zeta, \beta) \geq \frac{\kappa^3}{2} \inner{\rho_\gamma, \rho_\gamma}_w + \beta^{-1} \Tr{ s_\ast(\beta(-\kappa^2 \Delta + v(\hat x) - \zeta))} \geq \beta^{-1} \Tr{ s_\ast(\beta(-\kappa^2 \Delta + v(\hat x) - \zeta))}  \, , \]
  and therefore we have a well-defined and finite grand potential exactly when
  \[ \label{ffinite} \Tr{ s_\ast(\beta(-\kappa^2 \Delta + v(\hat x) - \zeta))} > -\infty \, . \]
  The condition \eqref{ffinite} will hold only if the growth of $v$ is compatible
  with the decay of $s_\ast$, demonstrating a genuinely new phenomenon in the $\beta < \infty$ case.

  A more subtle new phenomenon is that the condition \eqref{ffinite} does not appear to be enough on its own to get coercivity of the energy or particle number.
  As an example, one may carry out a similar calculation to \eqref{zmass} to find that if
  \[ \Tr{ s_\ast(\beta(-\kappa^2 \Delta + v(\hat x) - \zeta - \epsilon))} > -\infty \]
  for some $\epsilon > 0$ we have
  \[ \Omega_\kappa^\rHF(\gamma; \zeta, \beta) \geq \epsilon \Tr \gamma + \beta^{-1} \Tr{ s_\ast(\beta(-\kappa^2 \Delta + v(\hat x) - \zeta - \epsilon))} \]
  but there exist systems such that for $\epsilon \geq 0$,
  \[ \Tr{ s_\ast(\beta(-\kappa^2 \Delta + v(\hat x) - \zeta - \epsilon))} > -\infty \iff \epsilon = 0 \, . \]
  We call such systems \emph{mass critical} at $(\zeta, \beta)$ and it is unclear how one might recover a coercivity estimate for the particle number in these instances.

  We remark that this phenomenon is also unique to the broadened class of entropy functions considered in this paper,
  as for the case $s(r) = -r \log r - (1 - r) \log (1 - r)$ we have
  \[ s_\ast(h) = - \log \left( 1 + \exp(-h) \right) \, . \]
  One can show that $h \mapsto s_\ast(h) \exp(h)$ is increasing with $s_\ast(h) \exp(h) \to 1$ as $h \to \infty$ and thus
  \[ \label{vnfreeenergybound} -\exp(-h) \leq s_\ast(h) \lesssim_{h_0} -\exp(-h) \]
  for all $h \geq h_0$ and therefore with $h_0 = \inf \sigma(-\kappa^2 \Delta + v(\hat x) - \zeta \mathds{1})$ we have
  \begin{align}
    \Tr{ s_\ast(\beta(-\kappa^2 \Delta + v(\hat x) - \zeta - \epsilon))} &\asymp_{h_0} -\Tr\, { \exp(-\beta(-\kappa^2 \Delta + v(\hat x) - \zeta - \epsilon))} \\
    &= -\exp(\beta \epsilon) \Tr\, { \exp(-\beta(-\kappa^2 \Delta + v(\hat x) - \zeta))} \\
    &\asymp_{h_0} \exp(\beta \epsilon) \Tr{ s_\ast(\beta(-\kappa^2 \Delta + v(\hat x) - \zeta))} \, .
  \end{align}
  For this particular choice of $s$, it then follows that $\Tr{ s_\ast(\beta(-\kappa^2 \Delta + v(\hat x) - \zeta - \epsilon))}$ is finite
  whenever $\Tr{ s_\ast(\beta(-\kappa^2 \Delta + v(\hat x) - \zeta))}$ is but this is not the case in general.

  Similarly, if
  \[ \Tr{ s_\ast(\tilde\beta(-\kappa^2 \Delta + v(\hat x) - \zeta))} > -\infty \]
  for some $\tilde \beta < \beta$ one can show
  \[ \Omega^\rHF_{\kappa} (\gamma; \zeta, \beta) \geq \left(1 - \frac{\tilde\beta}{\beta}\right) \Tr \left( -\kappa^2 \Delta + v(\hat x) - \zeta \right) \gamma + \beta^{-1} \Tr { s_\ast(\tilde\beta ( -\kappa^2 \Delta + v(\hat x) - \zeta ) } \]
  but once again there exist systems that are \emph{energy critical} at $(\zeta, \beta)$ in which case for $\tilde\beta \leq \beta$
  \[ \Tr{ s_\ast(\tilde\beta(-\kappa^2 \Delta + v(\hat x) - \zeta))} > -\infty \iff \tilde\beta = \beta \, . \]

  A key objective of this paper is to prove Theorems 1--3 without assuming mass non-criticality or energy non-criticality.
  In this setting, we are unable to show the existence (or lack thereof) of states that minimize the grand potential but we do show that
  along a minimizing sequence $\kappa^3 w \ast \den \gamma_n \to \mu^\rHF_\kappa$ for some unique $\mu^\rHF_{\kappa}$.

  \section{The Thomas--Fermi Grand Potential Functional}

  In this section we establish some facts about the (positive temperature) Thomas--Fermi grand potential functional.
  To begin, we first prove a useful explicit representation for the functional.

  \begin{proposition}
    \label{omegarhoexplicit}

    Let $\rho \in L^1(\R^3)$, $v : \R^3 \to \R$ be measurable, $w \in L^\infty(\R^3)$, and $s$ be an entropy function.
    Suppose that
    \[ \frac{1}{(2 \pi)^3} \iint s_\ast(\abs{p}^2 + v(x)) \, \dd p \, \dd x > -\infty \, , \]
    and define $f : \R \to (-\infty, \infty]$ by
    \[ f(h) = \frac{1}{(2\pi)^3} \int -s_\ast(\abs{p}^2 - h) \, \dd p \, . \]
    Then $f$ is convex, $x \mapsto f(-v(x))$ is integrable, $v(x) \rho(x) + f^*(\rho(x)) \geq -f(-v(x))$, and
    \begin{align}
      \Omega^\TF(\rho) &= \frac{1}{2} \inner{\rho, \rho}_w + \sup_{\mu \in L^\infty(\R^3)} \left[ \frac{1}{(2\pi)^3} \iint s_\ast(\abs{p}^2 + v(x) + \mu(x)) \, \dd p \, \dd x - \inner{\mu, \rho} \right] \\
      &= \frac{1}{2} \inner{\rho, \rho}_w + \int v(x) \rho(x) + f^\ast(\rho(x)) \, \dd x \, ,
    \end{align}
    where $\rho \mapsto \Omega^\TF(\rho)$ is defined as in \eqref{ftfrhodef} with $\beta = 1$ and $\zeta = 0$.
  \end{proposition}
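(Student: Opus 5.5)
Throughout I write $F(h)=\frac{1}{(2\pi)^3}\int s_\ast(\abs{p}^2-h)\,\dd p=-f(h)$, so that the inner $p$-integral in $\Omega^\TF(\rho,\mu)$ is $F(-v(x)-\mu(x))$. The proof reduces to a pointwise Legendre duality for $f$, followed by an interchange of $\sup_\mu$ and $\int \dd x$.

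\emph{Convexity and integrability.} Since $s_\ast$ is an infimum of affine functions of $h$, it is concave. It is also nondecreasing, because $r\ge 0$, and $s_\ast\le 0$. Hence $h\mapsto -s_\ast(\abs{p}^2-h)$ is convex and nonnegative for each $p$, and integrating in $p$ shows that $f$ is convex, nonnegative and nondecreasing, with values in $[0,\infty]$. Integrability of $x\mapsto f(-v(x))$ is exactly the hypothesis $\iint s_\ast(\abs{p}^2+v)>-\infty$, by Tonelli applied to a nonnegative integrand.

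\emph{Pointwise inequality.} This is Fenchel's inequality $f^\ast(r)+f(h)\ge rh$ with $h=-v(x)$. It holds as an extended-real inequality, and it gives $f^\ast(\rho(x))\ge -v(x)\rho(x)-f(-v(x))>-\infty$. Since $f(-v)\in L^1$, the negative part of $v\rho+f^\ast(\rho)$ is integrable. So $\int v\rho+f^\ast(\rho)$ is well defined in $(-\infty,\infty]$.

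\emph{Identity, upper bound.} For each $\mu\in L^\infty$ and almost every $x$, Fenchel with $h=-v(x)-\mu(x)$ gives
\[
F(-v-\mu)-\mu\rho = -f(-v-\mu)-\mu\rho \le f^\ast(\rho)+v\rho .
\]
Integrating this shows that the supremum over $\mu$ is at most $\int v\rho+f^\ast(\rho)$.

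\emph{Identity, lower bound.} This is the main obstacle: I must show that $\sup_\mu\int[-f(-v-\mu)-\mu\rho]$ reaches $\int\sup_{h}[\rho h-f(h)+\dots]$, i.e. interchange supremum and integral (a Rockafellar-type normal-integrand theorem) with the restriction $\mu\in L^\infty$.

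First I would pin down the domain of $f$. If $s$ is not identically $0$ near $r=1$, then $s_\ast(h)\sim h-s(1)$ as $h\to-\infty$. Since the hypothesis forces $f(-v(x))<\infty$ on a set of positive measure, $f$ is finite on a half-line $(-\infty,h_0)$ or on all of $\R$. Moreover $f^\ast(r)=\infty$ for $r<0$, because $f$ is finite near $-\infty$ and bounded below, so the case $\rho\ge0$ is the relevant one.

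For each $x$ I would choose a measurable near-maximizer $h_x$ of $\rho(x)h-f(h)$ over rational $h$ in the domain, so that measurability is automatic, and set $\mu=-v-h$. To make $\mu$ bounded, I would truncate: let $\mu_N=(-v-h)\mathds{1}_{\{\abs{v}+\abs{h}\le N\}}$, with $\mu_N=0$ elsewhere. On the exceptional set, the integrand of $\Omega^\TF(\rho,\mu_N)$ is $-f(-v)$, which is integrable by the first step. I would then pass to the limit by monotone convergence on the good set and dominated convergence on the exceptional set, using $-f(-v)\in L^1$ as the minorant. If $\int v\rho+f^\ast(\rho)=\infty$, the same truncation argument with $h_x$ chosen to make the integrand exceed an arbitrary level $M$ (capped) gives $\sup=\infty$. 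Combining the two bounds gives the identity.
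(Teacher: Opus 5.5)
Your proof is correct. The convexity, integrability and Fenchel steps, including the easy inequality $\sup_\mu \le \int v\rho+f^\ast(\rho)$, agree with what the paper does (the paper treats them in passing). The routes differ in the lower bound.

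The paper writes the integrand as $\phi(x,\lambda)=\lambda\rho(x)-f(\lambda-v(x))$. It notes that $\phi(x,\cdot)$ is concave and upper semicontinuous (Fatou), and takes \emph{exact} maximizers over the compact interval $[-n,n]$. A measurable selection of the argmax gives $\mu_n$ with $\norm{\mu_n}_{L^\infty}\le n$. The constrained maxima $\phi_n$ increase to $\phi_\infty=v\rho+f^\ast(\rho)$ because the constraint sets are nested, and $\phi_n\ge\phi(\cdot,0)=-f(-v)$ gives the integrable minorant. One application of monotone convergence then finishes, with the case $\int\phi_\infty=\infty$ needing no separate treatment.

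Your countable search over rational $h$ replaces the selection theorem and makes measurability elementary. You pay elsewhere:
\begin{itemize}
\item Boundedness of the potential has to be restored by the separate truncation to $\{\abs{v}+\abs{h}\le N\}$.
\item The infinite case needs the capped variant.
\item You need $\sup_{h\in\mathbb{Q}\cap\operatorname{dom} f}(\rho h-f(h))=f^\ast(\rho)$. This holds because a proper lower semicontinuous convex function on $\R$ is continuous relative to its interval domain; you should state it.
\end{itemize}

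One detail must be fixed when you write it out. On the good set, the integrand $g=v\rho+h_x\rho-f(h_x)$ is only bounded below by $-f(-v)-\epsilon(x)$, where $\epsilon(x)$ is your near-maximization tolerance, not by $-f(-v)$ alone. So the tolerance must be integrable, e.g. $\epsilon(x)=\epsilon e^{-\abs{x}^2}$; a constant tolerance leaves a non-integrable deficit on $\R^3$. Then ``monotone convergence on the good set'' should be applied to the nonnegative increasing functions $\mathds{1}_{G_N}\bigl(g+f(-v)+\epsilon(\cdot)\bigr)$, with the subtracted part handled by dominated convergence.

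Two minor remarks on the domain discussion:
\begin{itemize}
\item The asymptotics of $s_\ast$ at $-\infty$ are not needed. All you use is that $\operatorname{dom} f$ is a nonempty interval unbounded below, which follows from monotonicity of $f$. Its right endpoint may or may not be included, so not necessarily $(-\infty,h_0)$.
\item $f^\ast=\infty$ on $(-\infty,0)$ holds because $f$ is bounded \emph{above} on $(-\infty,h]$ for any $h\in\operatorname{dom} f$, not because it is bounded below.
\end{itemize}
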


  \begin{proof}
    Since $\frac{1}{2} \inner{\rho, \rho}_w$ appears on both sides of the desired identity, we can take $w = 0$ without loss of generality.

    For simplicity of notation, we write $\Omega^\TF(\rho, \mu) = \int \phi(x, -\mu(x)) \, \dd x$ with $\phi : \R^3 \times \R \to [-\infty, \infty)$ such that
    \[ \phi(x, \mu) = \mu \rho(x) - \frac{1}{(2 \pi)^3} \int -s_\ast(\abs{p}^2 + v(x) - \mu) \, \dd p \, . \]

    Notice that
    \[ \phi_\infty(x) \equiv \sup_{\mu \in \R} \phi(x, \mu) = \sup_{\mu \in \R} \phi(x, v(x) + \mu) = v(x) \rho(x) + f^\ast(\rho(x)) \, . \]

    With this notation, our desired identity is
    \[ \sup_{\mu \in L^\infty(\R^3)} \int \phi(x, \mu(x)) \, \dd x = \int \phi_\infty(x) \, \dd x \, . \]

    We note that $\int \phi_\infty(x) \, \dd x$ is well-defined as, by definition, $\phi_\infty(x) \geq \phi_0(x) \equiv \frac{1}{(2 \pi)^3} \int s_\ast(\abs{p}^2 + v(x)) \, \dd p$
    and this is an integrable lower bound by the compatibility condition between $v$ and $s$.

    It is clear from the definition of $\phi_\infty$ that
    \[ \sup_{\mu \in L^\infty(\R^3)} \int \phi(x, \mu(x)) \, \dd x \leq \int \phi_\infty(x) \, \dd x \]
    and therefore to conclude we show the other inequality.

    To that end, for each $n \in \N$, let
    \[ \phi_n(x) = \max_{-n \leq \mu \leq n} \phi(x, \mu) \, . \]
    This is well-defined as for each fixed $x$, $\phi(x, -)$ is concave and upper semicontinuous (by Fatou's lemma applied to $f$)
    and therefore the extreme value theorem guarantees the existence of maximizers.
    As such, there exists $\mu_n \in L^\infty(\R^3)$ with $\norm{\mu_n}_{L^\infty(\R^3)} \leq n$ and $\phi_n(x) = \phi(x, \mu_n(x))$.
    To find such $\mu_n$, we simply take a measurable selection such that $\mu_n(x) \in \argmax_{-n \leq \mu \leq n} \phi(x, \mu)$ for each $x \in \R^3$.
    Indeed, such a measurable selection exists because $\phi$ is jointly measurable and upper semicontinuous in $\mu$.

    We then have that $\phi_n(x) \upuparrows \phi_\infty(x)$ for each $x \in \R^3$ while $\phi_0$ is a uniform, integrable, lower bound for the sequence.
    By the monotone convergence theorem (for functions with an integrable lower bound), we then conclude that
    \[ \int \phi_\infty(x) \, \dd x = \lim_{n \to \infty} \int \phi_n(x) \, \dd x = \lim_{n \to \infty} \int \phi(x, \mu_n(x)) \, \dd x \leq \sup_{\mu \in L^\infty(\R^3)} \int \phi(x, \mu(x)) \, \dd x \, . \]
  \end{proof}

  \noindent \emph{Remark.} The above proof shows that
  \[ \int v(x) \rho(x) + f^\ast(\rho(x)) \, \dd x \geq \frac{1}{(2\pi)^3} \iint s_\ast(\abs{p}^2 + v(x)) \, \dd p \dd x \, . \]
  However one cannot in general split this integral without further assumptions on $\rho$.

  Next, we show that $\Omega^\TF(\rho) = \infty$ whenever $\rho < 0$ outside a null set.

  \begin{lemma}
    \label{rhotfpositive}
    Let $v : \R^3 \to \R$ be measurable, $w \in L^\infty(\R^3)$ be repulsive and of positive type, and $s : [0, 1] \to \R$ be an entropy function
    such that the compatibility condition
    \[ \iint s_\ast(\abs{p}^2 + v(x)) \, \dd p \, \dd x > -\infty \]
    holds.
    Define $\Omega^\TF(\rho)$
    as in \eqref{ftfrhodef} with $\beta = 1$ and $\zeta = 0$.

    Then $\Omega^\TF(\rho) = \infty$ unless $\rho(x) \geq 0$ for almost every $x \in \R^3$.
  \end{lemma}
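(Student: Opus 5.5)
Suppose $\rho<0$ on a set $A$ of positive measure. We show directly from the sup over $\mu$ in \eqref{ftfrhodef} (with $\beta=1$, $\zeta=0$) that $\Omega^\TF(\rho)=\infty$, by testing against potentials that are large and positive on $A$. Since $w\in L^\infty$ and $\rho\in L^1$, the term $\frac12\inner{\rho,\rho}_w$ is finite; the repulsive and positive-type hypotheses are not even needed beyond this. So it suffices to make the bracket in \eqref{ftfrhodef} arbitrarily large.

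Shrinking $A$ if necessary, I will take $A$ of finite positive measure with $\rho\le-\delta$ on $A$ for some $\delta>0$. This is possible by writing $\{\rho<0\}=\bigcup_{j,R}\{\rho<-1/j\}\cap B_R$. For $t>0$ set $\mu_t=t\mathds 1_A\in L^\infty(\R^3)$. Then
\[ -\inner{\mu_t,\rho}=t\int_A(-\rho)\,\dd x\ \ge\ t\,\delta\abs{A}\xrightarrow{t\to\infty}\infty . \]
It remains to check that the phase-space term stays bounded below uniformly in $t$.

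For this I use monotonicity of $s_\ast$. As an infimum of the nondecreasing affine functions $h\mapsto rh-s(r)$ with $r\ge0$, $s_\ast$ is nondecreasing. Since $\mu_t\ge0$, we get
\[ \iint s_\ast(\abs{p}^2+v(x)+\mu_t(x))\,\dd p\,\dd x\ \ge\ \iint s_\ast(\abs{p}^2+v(x))\,\dd p\,\dd x>-\infty \]
by the compatibility condition. The integrand is also bounded above by $0$, because $s_\ast\le -s(0)=0$, so the integral is well defined.

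Hence the bracket in \eqref{ftfrhodef} evaluated at $\mu_t$ is at least $C+t\delta\abs{A}$, where $C$ is finite and independent of $t$. Letting $t\to\infty$ gives $\Omega^\TF(\rho)=\infty$. The only point needing care is the well-definedness and the lower bound of the phase-space integral, which comes from the monotonicity of $s_\ast$ and the compatibility condition. There is no real obstacle beyond choosing $A$ with finite measure and $\rho$ bounded away from zero on it.
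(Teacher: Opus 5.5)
Your proof is correct and is essentially the paper's argument. The paper tests with $\mu_n = n\mathds{1}_U$ on $U=\{\rho\le 0\}$, bounds the phase-space term below using the monotonicity of $s_\ast$ and the compatibility condition, and obtains $\Omega^\TF(\rho)\ge C + n\norm{\rho}_{L^1(U)}\to\infty$. Your preliminary shrinking of $A$ to a set of finite measure on which $\rho\le-\delta$ is harmless but unnecessary: since $\rho\in L^1$, the integral $\int_U(-\rho)\,\dd x$ is already finite and positive, and $\mathds{1}_U\in L^\infty$ regardless of $\abs{U}$.
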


  \begin{proof}
    Let $\rho \in L^1(\R^3)$, $U = \{ \rho(x) \leq 0 \}$ and $\mu_n = n \mathds{1}_U$.
    Then it follows that
    \[ \Omega^\TF(\rho) \geq \Omega^\TF(\rho, \mu_n) = \frac{1}{(2\pi)^3} \iint s_\ast(\abs{p}^2 + v(x) + \mu_n(x)) \, \dd p \, \dd x + \frac{1}{2} \inner{\rho, \rho}_w - \inner{\mu_n, \rho} \, . \]

    Since $s_\ast$ is non-decreasing, $\mu_n \geq 0$ and $\left. \rho \right|_U \leq 0$, it then follows that
    \[ \Omega^\TF(\rho) \geq \frac{1}{(2\pi)^3} \iint s_\ast(\abs{p}^2 + v(x)) \, \dd p \, \dd x + \frac{1}{2} \inner{\rho, \rho}_w + n \norm{\rho}_{L^1(U)} \, . \]

    If $\norm{\rho}_{L^1(U)} \neq 0$, we have $\Omega^\TF(\rho) = \infty$ by taking $n \to \infty$. Otherwise, it follows that $\rho(x) = 0$ for almost every $x \in U$ and thus the set on which $\rho(x) < 0$ has measure zero.
  \end{proof}

  Next, we show a key technical proposition that is the cornerstone of the proof of Theorem \ref{mainthm}.

  \begin{proposition}
    \label{swaptf}
    Let $v : \R^3 \to \R$ be measurable, $w \in L^\infty(\R^3)$ be repulsive and of positive type, and $s : [0, 1] \to \R$ be an entropy function
    such that the compatibility condition
    \[ \iint s_\ast(\abs{p}^2 + v(x) - \zeta) \, \dd p \, \dd x > -\infty \]
    holds.

    Define $\Omega^\TF$ and $\Omega^\TF(-, -)$
    as in \eqref{ftfdef} and \eqref{ftfrhomudef} respectively with $\beta = 1$ and $\zeta = 0$.
    For $\mu \in L^\infty(\R^3)$, let
    $\Omega^\TF(\mu) = \inf_{\rho \in L^1(\R^3)} \Omega^\TF(\rho, \mu)$.

    Then
    \[ \Omega^\TF = \sup_{\mu \in L^\infty(\R^3)} \Omega^\TF(\mu) \]
    and the supremum is attained, i.e., there exists $\mu^\TF \in L^\infty(\R^3)$ with $\mu^\TF \geq 0$ such that $\Omega^\TF = \Omega^\TF(\mu^\TF)$.
  \end{proposition}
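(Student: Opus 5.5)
We prove the no-gap identity and existence of a maximizer by computing the dual functional explicitly and then using strong concavity in a Hilbert space adapted to $w$. In the sense of distributions, the inner infimum in $\Omega^\TF(\mu)$ is attained at $\rho$ solving $w\ast\rho = \mu$. Since $\hat w>0$ and $w \in L^\infty$ with $\hat w$ integrable (for $w \in L^\infty$ of positive type we would restrict to $\mu$ with $\hat\mu/\hat w$ suitable, approximating otherwise), we get
\[ \Omega^\TF(\mu) = \frac{1}{(2\pi)^3}\iint s_\ast(\abs{p}^2+v(x)+\mu(x))\,\dd p\,\dd x - \frac12 \int \abs{\hat\mu(k)}^2 \hat w(k)^{-1}\,\dd k , \]
up to the normalization of the Fourier transform. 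This holds when $\mu$ lies in the range of $\rho\mapsto w\ast\rho$ from $L^1$, and equals $-\infty$ otherwise, since the quadratic-minus-linear infimum over $\rho$ is unbounded below unless $\mu$ is in the dual of the $w$-energy space. Thus $\mu\mapsto\Omega^\TF(\mu)$ is the sum of a concave local term and a strongly concave quadratic term. This is the strong concavity invoked later in the proof of Theorem \ref{mainthm}.

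Weak duality $\sup_\mu \Omega^\TF(\mu) \le \inf_\rho \sup_\mu \Omega^\TF(\rho,\mu) = \Omega^\TF$ is immediate. For the reverse inequality and attainment, I would work in the Hilbert space $\mathcal{V}$ of $\mu$ with $\norm{\mu}_{\mathcal V}^2 = \int \abs{\hat\mu}^2\hat w^{-1} < \infty$, where the objective above is concave, upper semicontinuous and coercive. Upper semicontinuity of the local term follows from Fatou's lemma: $s_\ast\le 0$, and $s_\ast$ is continuous and nondecreasing. Coercivity holds because the local term is bounded above by $0$ while the quadratic term tends to $-\infty$. A maximizing sequence is therefore bounded in $\mathcal V$, and a weak limit $\mu^\TF$ is a maximizer; by strong concavity it is unique. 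To get $\mu^\TF \ge 0$, note that $s_\ast$ is nondecreasing and that the positive-type form does not increase when $\mu$ is replaced by $\mu_+$. The latter needs $w\ge0$, i.e.\ repulsiveness: heuristically the optimal $\rho$ is nonnegative by Lemma \ref{rhotfpositive}, so $\mu = w\ast\rho\ge0$. More cleanly, I would derive it from the optimality condition below. Since $w\in L^\infty$ and $\rho\in L^1$, the maximizer is also in $L^\infty$.

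For the absence of a duality gap, I would use the first-order (Euler–Lagrange) condition at $\mu^\TF$. Set $\rho^\TF(x) = \frac{1}{(2\pi)^3}\int s_\ast'(\abs{p}^2+v(x)+\mu^\TF(x))\,\dd p$, which is $\ge 0$ since $s_\ast' \in [0,1]$, and lies in $L^1$ because it is the density associated with the finite local integral via the compatibility condition. Variations of $\mu$ then give $w\ast\rho^\TF = \mu^\TF$, which also shows $\mu^\TF\ge0$ using $w\ge0$. By Proposition \ref{omegarhoexplicit}, $\Omega^\TF(\rho^\TF) = \sup_\mu \Omega^\TF(\rho^\TF,\mu)$. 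Since $\mu \mapsto \Omega^\TF(\rho^\TF,\mu)$ is concave with derivative $\rho^\TF - \rho^\TF = 0$ at $\mu^\TF$ (the $\mu$-derivative of the local term being $\rho^\TF$ pointwise), its supremum equals $\Omega^\TF(\rho^\TF,\mu^\TF) = \Omega^\TF(\mu^\TF)$. Hence $\Omega^\TF \le \Omega^\TF(\mu^\TF)$, and combined with weak duality this gives equality.

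The main obstacle is rigor in this variational step. Differentiating the local term requires care when $s_\ast$ is merely concave and not $C^1$; one-sided derivatives or subgradients would be used, with $\rho^\TF$ chosen as a measurable selection of subgradients. One must also justify integrability of $\rho^\TF$ and the passage between $L^\infty$ and $\mathcal V$. If this proves delicate, the fallback is an abstract minimax theorem (Ky Fan/Sion) on truncated sets $\{\norm{\mu}_\infty\le n\}$ paired with $\rho$ in a bounded $w$-energy ball, followed by $n\to\infty$ via the monotone convergence argument of Proposition \ref{omegarhoexplicit}.
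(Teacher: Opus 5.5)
\textbf{There is a genuine gap in your no-duality-gap step.} You claim that $\rho^\TF(x)=(2\pi)^{-3}\int s_\ast'(\abs{p}^2+v(x)+\mu^\TF(x))\,\dd p$ lies in $L^1(\R^3)$ ``via the compatibility condition''. This does not follow, and it is false in general.

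The compatibility condition controls $\iint s_\ast$, not $\iint s_\ast'$, which is the particle number. Concavity only gives $\epsilon\, s_\ast'(\lambda+\epsilon)\le -s_\ast(\lambda)$, and that bounds the density only for a potential shifted \emph{upward} by $\epsilon>0$. This is the mass-critical phenomenon the paper discusses among the finite-temperature phenomena, and it has a classical analogue. Take $s(r)\approx r\sqrt{\log(1/r)}$ near $0$, so $-s_\ast(u)\approx e^{-u^2}/(2u)$, and $v(x)=\sqrt{3\log\abs{x}}$ for large $\abs{x}$. Then $f(-v(x))\sim\abs{x}^{-3}(\log\abs{x})^{-5/4}$ is integrable, but $f'(-v(x))\sim\abs{x}^{-3}(\log\abs{x})^{-3/4}$ is not. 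Any $\mu\ge 0$ with finite dual value lies in $L^2$, so it is tiny on most of each dyadic shell and cannot cut off this tail. Hence $\rho^\TF\notin L^1$, and the primal infimum over $L^1$ is in general not attained.

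Everything you derive from $\rho^\TF$ then fails:
\begin{itemize}
\item Proposition \ref{omegarhoexplicit} and the definition of $\Omega^\TF$ only involve $\rho\in L^1$, so the bound $\Omega^\TF\le\Omega^\TF(\rho^\TF)$ is unavailable.
\item Your proof that $\mu^\TF\in L^\infty$ used $\rho^\TF\in L^1$.
\item Your opening claim that the dual is finite only on $w\ast L^1$ would make $\Omega^\TF(\mu^\TF)=-\infty$. By Lemma \ref{infizernice}, the quadratic part is finite exactly on the strictly larger space $\mathcal V$, and here $\mu^\TF\notin w\ast L^1$.
\end{itemize}
Your argument does go through when the density of the maximizer happens to be integrable, but the proposition is meant to hold without any non-criticality assumption.

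Two smaller points. The non-differentiability of $s_\ast$ that you flag is actually harmless: $h\mapsto\int s_\ast(\abs{p}^2+h)\,\dd p$ is differentiable, because $s_\ast$ fails to be differentiable only on a countable set. Also, your $\hat w^{-1}$ formula needs more than the hypothesis here, which is only that $w\in L^\infty$ is of positive type.

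\textbf{Your Sion fallback does not close as stated either.} Even granting the minimax identity on $\{\norm{\mu}_\infty\le n\}\times L^1$, letting $n\to\infty$ requires exchanging $\lim_n$ with $\inf_\rho$. That needs compactness in $\rho$, which you do not have. The monotone-convergence argument of Proposition \ref{omegarhoexplicit} is for fixed $\rho$ and does not help here.

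\textbf{How the paper avoids this.} It never constructs a primal optimizer.
\begin{itemize}
\item It shows that $\mu\mapsto-(2\pi)^{-3}\iint s_\ast(\abs{p}^2+v(x)-\mu(x))\,\dd p\,\dd x$ is convex, proper and weak-$\ast$ lower semicontinuous on $L^\infty=(L^1)^\ast$, using a normal convex integrand and Krein--\v{S}mulian.
\item Hence this functional is the conjugate of a proper, convex, lower semicontinuous $f$ on $L^1$.
\item It then applies Fenchel--Rockafellar duality with $g(\rho)=\frac12\inner{\rho,\rho}_w$. Continuity of $g$ on $L^1$ is the qualification condition, and it gives both zero gap and attainment of the dual directly in $L^\infty$.
\item Nonnegativity follows from $w\ast\rho_n\xrightharpoonup{*}\mu^\TF$ along a nonnegative minimizing sequence (Lemma \ref{rhotfpositive}). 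This is the rigorous form of your heuristic.
\end{itemize}
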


  \begin{proof}
    Let $X = L^1(\R^3)$ so that $X^* = L^\infty(\R^3)$, let $f^* : X^* \to (-\infty, \infty]$ such that
    \[ f^*(\mu) = -\frac{1}{(2\pi)^3} \iint s_\ast(\abs{p}^2 + v(x) - \mu(x)) \, \dd p \, \dd x \, , \]
    and let $g : X \to \R$ such that $g(\rho) = \frac{1}{2} \inner{\rho, \rho}_w$.

    To begin, we show that $f^*$ is convex, proper, and weak-star lower semicontinuous and to that end, we write
    \[ f^*(\mu) = \int \phi(x, \mu(x)) \, \dd x \]
    with
    \[ \phi(x, \lambda) = \frac{1}{(2\pi)^3} \int -s_\ast(\abs{p}^2 + v(x) - \lambda) \, \dd p \, . \]

    Since $s_\ast$ is non-positive, $\phi(x, -)$ and therefore $f^*$ are non-negative, while by hypothesis, $f^*(0) < \infty$ so $f^*$ is proper.
    Since $s_\ast$ is concave, for each fixed $x$, $\phi(x, -)$ is an integral of convex functions and thus convex in $\lambda$.
    We then conclude that $f^*$ is convex.
    By Fatou's lemma, $\phi(x, -)$ is also lower semicontinuous for each fixed $x$.

    It follows that $\phi$ satisfies the hypotheses of Ref. \onlinecite[Theorem 6.56]{FonsecaLeoni}: it is measurable,
    lower semicontinuous and convex in the second variable, and is bounded below by $0$.

    We therefore conclude that $f^*$ is sequentially weak-star lower semicontinuous.
    Furthermore, since $f^*$ is convex, the sublevel sets of $f^*$ are convex,
    sequentially weak-star closed and therefore weak-star closed by Ref. \onlinecite[Corollary 12.7]{Conway}.
    It follows that $f^*$ is weak-star lower semicontinuous.

    Recall that the proper, convex, weak-star lower semicontinuous functions are in isomorphism
    with the proper, convex, (norm) lower semicontinuous functions (cf. Ref. \onlinecite[Theorem 2.3.3]{Zalinescu}) and
    therefore (as the notation suggests) $f^*$ is the convex conjugate of some proper,
    convex and lower semicontinuous function $f$ such that $f = f^{**}$.

    It then follows that
    \[ \Omega^\TF(\rho, \mu) =  g(\rho) - f^\ast(-\mu) - \inner{\mu, \rho}\, , \]
    from which we find $\Omega^\TF(\mu) = -g^\ast(\mu) - f^\ast(-\mu)$ and $\Omega^\TF(\rho) = g(\rho) + f^{**}(\rho) = g(\rho) + f(\rho)$.

    Next, we show that $g$ is convex and continuous.
    To that end, by Young's convolution inequality, we have $\inner{\rho, \eta}_w \leq \norm{w}_{L^\infty(\R^3)} \norm{\rho}_{L^1(\R^3)} \norm{\eta}_{L^1(\R^3)}$ and
    since every bounded bilinear form is continuous, it follows that $g$ is continuous.
    On the other hand, $g$ is convex since $w$ is of positive type and thus $\inner{-, -}_w$ is a positive definite bilinear form.

    To conclude, we apply the Fenchel--Rockafellar duality formula (cf. Ref. \onlinecite[Corollary 2.8.5]{Zalinescu}), which applies as $g$ is continuous and $f$ is proper,
    and states
    \begin{align}
      \Omega^\TF = \inf_{\rho \in X} \Omega^\TF(\rho) = \inf_{\rho \in X} f(\rho) + g(\rho) = \sup_{\mu \in X^*} -f^*(-\mu) - g^*(\mu) = \sup_{\mu \in X^*} \Omega^\TF(\mu)
    \end{align}
    and that the supremum is attained.

    Lastly, we show $\mu^\TF \geq 0$.
    To that end, let $(\rho_n)$ be a minimizing sequence so that $\Omega^\TF(\rho_n) \xrightarrow{n \to \infty} \Omega^\TF = \Omega^\TF(\mu^\TF)$,
    let $t \neq 0$ and let $\eta \in L^1(\R^3)$ be arbitrary.
    By Lemma \ref{rhotfpositive}, we can take $\rho_n \geq 0$ without loss of generality.

    We then have
    \[ \Omega^\TF = \Omega^\TF(\mu^\TF) \leq \Omega^\TF(\rho_n, \mu^\TF) \leq \Omega^\TF(\rho_n) \xrightarrow{n \to \infty} \Omega^\TF \, , \]
    and therefore $(\rho_n)$ is also a minimizing sequence for $\Omega^\TF(\mu^\TF)$.

    For each $n$, we have
    \[ \Omega^\TF(\rho_n + t \eta, \mu^\TF) - \Omega^\TF(\rho_n, \mu^\TF) = t \inner{\eta, w \ast \rho_n - \mu^\TF} + \frac{t^2}{2} \inner{ \eta, \eta }_w \, . \]

    The left-hand side is bounded below by $\Omega^\TF(\mu^\TF) - \Omega^\TF(\rho_n, \mu^\TF)$, which goes to $0$ as $n \to \infty$ and thus
    \[ 0 \leq \frac{t^2}{2} \inner{ \eta, \eta }_w + \liminf_{n \to \infty}{\left( t \times \inner{\eta, w \ast \rho_n - \mu^\TF}\right)} \, . \]

    Taking $t \to 0$ from above and below yields
    \[ 0 = \lim_{n \to \infty} {\inner{\eta, w \ast \rho_n - \mu^\TF} } \]
    and thus $w \ast \rho_n \xrightharpoonup{*} \mu^\TF$.
    We conclude, as $w$ is repulsive and $\rho_n \geq 0$, that $\mu^\TF \geq 0$ as the positive functions in $L^\infty(\R^3)$ are weak-star closed.
  \end{proof}

  Indeed, the above shows that the Thomas--Fermi grand potential functional admits a convenient dual functional which can also be used to calculate the Thomas--Fermi grand potential.
  The following results show that the (finite) support of the dual functional is restricted to very nice mean-field potentials.

  \begin{lemma}
    \label{infizernice}
    Let $\mu \in L^\infty(\R^3)$ and let $w \in \mathcal{S}(\R^3)$ be even and of positive type.
    Then
    \[ \label{infval} \inf_{\rho \in L^1(\R^3)} \left[ \frac{1}{2} \inner{\rho, \rho}_w - \inner{\mu, \rho} \right] = -\frac{1}{2} \int \abs{\hat \mu(k)}^2 \, \hat w(k)^{-1} \, \dd k \]
    if $\mu \in L^2(\R^3)$, and
    \[ \inf_{\rho \in L^1(\R^3)} \left[ \frac{1}{2} \inner{\rho, \rho}_w - \inner{\mu, \rho} \right] = -\infty \]
    otherwise.
  \end{lemma}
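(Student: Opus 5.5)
Both identities will come from Fourier analysis: for $\mu\in L^2$ we complete the square, and for $\mu\notin L^2$ we build test densities $\rho$ whose Fourier transform mimics $\hat\mu/\hat w$ on growing frequency sets. Throughout, $\hat w>0$ a.e., $\hat w\in\mathcal S$ is real and even, and for $\rho\in L^1$ we have $\inner{\rho,\rho}_w=\int\abs{\hat\rho}^2\hat w\,\dd k$ (suitable normalization) with $\hat\rho$ bounded and continuous. Write $F(\rho)=\frac12\inner{\rho,\rho}_w-\inner{\mu,\rho}$.

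\emph{Lower bound when $\mu\in L^2$.} First take $\rho\in L^1\cap L^2$. Then $\inner{\mu,\rho}=\int\overline{\hat\mu}\hat\rho$ by Plancherel, and pointwise
\[ \tfrac12\abs{\hat\rho}^2\hat w-\Re(\overline{\hat\mu}\hat\rho)=\tfrac12\hat w\abs{\hat\rho-\hat\mu/\hat w}^2-\tfrac12\abs{\hat\mu}^2\hat w^{-1}\geq-\tfrac12\abs{\hat\mu}^2\hat w^{-1}. \]
For general $\rho\in L^1$, I would either approximate by $\rho\ast\chi_\epsilon$ truncated (so $L^1\cap L^2$) and use continuity of $F$ in $L^1$, or argue directly with $\mu\ast$mollifiers. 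The pointwise bound still makes sense as an inequality of integrals even when $\int\abs{\hat\mu}^2\hat w^{-1}=\infty$, in which case the lower bound is trivial.

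\emph{Upper bound / approximate minimizers.} Set $\hat\rho_R=\mathds 1_{\{\abs{k}\le R,\ \hat w\ge 1/R\}}\hat\mu/\hat w$. This lies in $L^2$ with compact support, so $\rho_R\in L^2\cap C^\infty$, but it need not be in $L^1$. So I would smooth in Fourier space, replacing $\hat\rho_R$ by $\hat\rho_R\ast\psi_\epsilon$ with a Schwartz mollifier $\psi_\epsilon$. Then $\rho_{R,\epsilon}=\rho_R\cdot\check\psi(\epsilon\,\cdot)$, and since $\rho_R\in L^2$ and $\check\psi(\epsilon\cdot)\in L^2$, Cauchy--Schwarz gives $\rho_{R,\epsilon}\in L^1$. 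As $\epsilon\to0$ we have $\hat\rho_{R,\epsilon}\to\hat\rho_R$ in $L^2$ with uniformly bounded support, so
\[ F(\rho_{R,\epsilon})\to\tfrac12\int_{A_R}\abs{\hat\mu}^2\hat w^{-1}-\int_{A_R}\abs{\hat\mu}^2\hat w^{-1}=-\tfrac12\int_{A_R}\abs{\hat\mu}^2\hat w^{-1}, \]
where $A_R$ is the truncation set and $\hat w$ is bounded on it. For the pairing $\inner{\mu,\rho_{R,\epsilon}}$ this uses $\mu\in L^2$ together with Plancherel. Letting $R\to\infty$ and using monotone convergence gives the infimum $-\frac12\int\abs{\hat\mu}^2\hat w^{-1}$, including the case where this is $-\infty$.

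\emph{Case $\mu\in L^\infty\setminus L^2$.} Here I expect the main obstacle: $\hat\mu$ is only a tempered distribution. I would test with $\rho=t\,\phi$ for Schwartz $\phi$ with $\hat\phi$ compactly supported. Then $F(t\phi)=\frac{t^2}{2}\int\abs{\hat\phi}^2\hat w-t\inner{\mu,\phi}$, so it suffices to find $\phi$ with $\inner{\mu,\phi}^2/\int\abs{\hat\phi}^2\hat w$ arbitrarily large. Since $\hat w\le\norm{\hat w}_\infty$, it is enough that $\sup\{\abs{\inner{\mu,\phi}}:\norm{\phi}_2\le1,\ \phi\in L^1\cap L^2\}=\infty$. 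This holds: if the supremum were finite, $\phi\mapsto\inner{\mu,\phi}$ would extend to a bounded functional on $L^2$, forcing $\mu\in L^2$. Concretely, take $\phi=\mu\mathds 1_{B_R}$, whose $L^2$ norm tends to infinity while $\inner{\mu,\phi}=\norm{\phi}_2^2$. With $\rho=t\mu\mathds 1_{B_R}\in L^1$ we get $\inf_t F\le-\frac{\norm{\phi}_2^4}{2\norm{\hat w}_\infty\norm{\phi}_2^2}\to-\infty$. This concrete choice avoids Fourier support issues entirely.
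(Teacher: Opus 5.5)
Your argument is correct, but it takes a different route from the paper's.

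\textbf{The paper's route.} It first complexifies the pairing, noting that adding an imaginary part to $\rho$ only increases the functional. It then tests with $\rho=\alpha z\check\phi$ for Schwartz $\phi$ and applies the Riesz representation theorem twice, once in $L^2$ and once in the weighted space $L^2(\hat w(k)\,\dd k)$. This shows that a finite infimum forces $\hat\mu=\hat w\eta$ with $\int\abs{\hat\mu}^2\hat w^{-1}\,\dd k\le -2\inf$. That single duality step covers both the case $\mu\notin L^2$ and the case $\mu\in L^2$ with divergent weighted integral. Completing the square against Schwartz approximants of $\hat\mu/\hat w$ then gives equality.

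\textbf{Your route.} You do three things differently.
\begin{enumerate}
\item You prove the lower bound by completing the square on $L^1\cap L^2$ and extending to all of $L^1$ by $L^1$-continuity of the functional. The paper only writes the completed square out for Schwartz $\check\phi$ and leaves this extension implicit.
\item You build explicit approximate minimizers $\mathds{1}_{A_R}\hat\mu/\hat w$, regularized into $L^1$, and send $R\to\infty$ by monotone convergence. This handles the infinite-integral case directly.
\item For $\mu\notin L^2$, you replace the Riesz argument with the explicit test density $t\mu\mathds{1}_{B_R}$. This is more elementary and avoids identifying $\hat\mu$ through duality.
\end{enumerate}

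\textbf{Comparison.} The paper's route yields the a priori bound $\int\abs{\hat\mu}^2\hat w^{-1}\,\dd k\le-2\inf$ in one stroke, which is the form used in Lemmas~\ref{ftfmunice} and~\ref{fksmunice}. Yours is constructive throughout.

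\textbf{Two small points to tidy.}
\begin{enumerate}
\item The infimum is over real $\rho$, so you should check that your test densities are real. This holds if you take $\psi$ real and even, since $\mathds{1}_{A_R}\hat\mu/\hat w$ is Hermitian-symmetric when $\mu$ is real and $\hat w$ is real and even. Alternatively, use the paper's observation that $\operatorname{Re}\rho$ does at least as well as $\rho$, since $w$ is even.
\item ``Uniformly bounded support'' would require $\psi\in C_c^\infty$, not merely Schwartz. It is not needed, however: $L^2$ convergence of $\hat\rho_{R,\epsilon}$ together with boundedness of $\hat w$ suffices. A physical-space cutoff $\rho_R\mathds{1}_{B_L}\in L^1$ would do the same job more simply.
\end{enumerate}
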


  \begin{proof}
    To begin, we introduce the sesquilinear inner product
    \[ (\rho, \rho)_w = \iint \rho(x) w(x - y) \bar\rho(y) \, \dd x \, \dd y \, . \]
    and duality pairing
    \[ (\mu, \rho) = \int \bar \mu(x) \rho(x) \, \dd x \, . \]

    We briefly show that
    \[ \inf_{\rho \in L^1(\R^3)} \left[ \frac{1}{2} \inner{\rho, \rho}_w - \inner{\mu, \rho} \right] = \inf_{\rho \in L^1(\R^3; \C)} \left[ \frac{1}{2} (\rho, \rho)_w - \operatorname{Re}{(\mu, \rho)} \right] \, . \]
    Indeed, since $\mu$ is real-valued, we have
    \[ \frac{1}{2} (\rho, \rho)_w - \operatorname{Re}{(\mu, \rho)} = \frac{1}{2} \inner{ \Re{\rho}, \Re{\rho} }_w + \frac{1}{2} \inner{ \Im{\rho}, \Im{\rho} }_w - \inner{\mu, \Re{\rho}} \]
    and therefore adding an imaginary part to a real-valued $\rho$ can only increase the quantity being minimized.
    We therefore work with the complexified pairings for the rest of the proof.

    We first show that if
    \[ \inf_{\rho \in L^1(\R^3)} \left[ \frac{1}{2} \inner{\rho, \rho}_w - \inner{\mu, \rho} \right] > -\infty \]
    then $\mu \in L^2(\R^3)$.

    To that end, since $\mu \in L^\infty(\R^3) \subseteq \mathcal{S}'(\R^3)$ consider $\hat \mu \in \mathcal{S}'(\R^3; \C)$ and let $\phi \in \mathcal{S}(\R^3; \C)$ be arbitrary.
    Let $z \in \C$ such that $z \times (\hat \mu, \phi) = \abs{(\hat \mu, \phi)}$ and let $\alpha \in \R$ be arbitrary.
    It then follows that
    \[ \frac{1}{2} (\alpha z \check \phi, \alpha z \check \phi)_w - \Re{(\mu, \alpha z \check \phi)} = \frac{\alpha^2}{2} \int \abs{\phi(k)}^2 \, \hat w(k) \, \dd k - \alpha \abs{(\hat \mu, \phi)} \, . \]

    Optimizing over $\alpha \in \R$, we find that
    \[ \inf_{\rho \in L^1(\R^3; \C)} \left[ \frac{1}{2} (\rho, \rho)_w - \operatorname{Re}{(\mu, \rho)} \right] \leq -\frac{1}{2} \frac{\abs{ \inner{ \hat \mu, \phi } }^2}{\int \abs{ \phi(k) }^2 \, \hat w(k) \, \mathrm{d} k} \, . \]

    By assumption, and by testing with $\rho = 0$, the left-hand side is finite and non-positive thus
    \begin{align}
      \abs{ \inner{ \hat \mu, \phi } }^2 &\leq -2 \inf_{\rho \in L^1(\R^3; \C)} \left[ \frac{1}{2} (\rho, \rho)_w - \operatorname{Re}{(\mu, \rho)} \right] \left( \int \abs{ \phi(k) }^2 \, \hat w(k) \, \mathrm{d} k \right) \\
      &\leq -2 \inf_{\rho \in L^1(\R^3; \C)} \left[ \frac{1}{2} (\rho, \rho)_w - \operatorname{Re}{(\mu, \rho)} \right] \norm{\hat w}_{L^\infty(\R^3; \C)} \norm{\phi}_{L^2(\R^3; \C)}^2 \, .
    \end{align}

    As Schwartz functions are dense in $L^2(\R^3; \C)$ we conclude by the Riesz representation theorem that $\hat \mu \in L^2(\R^3; \C)$ and thus $\mu \in L^2(\R^3)$.
    In addition, Schwartz functions are dense in the weighted space $L^2(\R^3; \C; \hat w(k) \, \dd k)$ and we similarly conclude that
    \[ \inner{\hat \mu, \phi} = \int \bar\eta(k) \phi(k) \, \hat w(k) \, \dd k \]
    for some $\eta \in L^2(\R^3; \C; \hat w(k) \, \dd k)$
    with the estimate
    \[  \int \abs{\eta(k)}^2 \, \hat w(k) \, \dd k \leq - 2 \inf_{\rho \in L^1(\R^3; \C)} \left[ \frac{1}{2} (\rho, \rho)_w - \operatorname{Re}{(\mu, \rho)} \right] = - 2 \inf_{\rho \in L^1(\R^3)} \left[ \frac{1}{2} \inner{\rho, \rho}_w - \inner{\mu, \rho} \right] \, . \]
    We then find that $\hat \mu(k) = \hat w(k) \eta(k)$ and since $\eta \in L^2(\R^3; \C; \hat w(k) \, \dd k)$
    we conclude
    \[ \int \abs{\eta(k)}^2 \, \hat w(k) \, \dd k = \int \abs{\hat \mu(k)}^2 \, \hat w(k)^{-1} \, \dd k \leq - 2 \inf_{\rho \in L^1(\R^3)} \left[ \frac{1}{2} \inner{\rho, \rho}_w - \inner{\mu, \rho} \right] \, . \]

    Applying the contrapositive, it remains only to show \eqref{infval} in the case where
    \[ \int \abs{\hat \mu(k)}^2 \, \hat w(k)^{-1} \, \dd k < \infty \, . \]

    We conclude by showing \eqref{infval}.
    In this case, since $\mu \in L^2(\R^3)$, we have by completing the square that
    \[ \frac{1}{2} (\check \phi, \check \phi)_w - \Re{(\mu, \check \phi)} + \frac{1}{2} \int \abs{\hat \mu(k)}^2 \, \hat w(k)^{-1} \, \dd k = \frac{1}{2} \int \abs*{\phi(k) - \hat \mu(k) \hat w(k)^{-1} }^2 \, \hat w(k) \, \dd k  \geq 0 \,  . \]
    Since Schwartz functions are contained in $L^1(\R^3; \C)$ and simultaneously dense in $L^2(\R^3; \C; \hat w(k) \, \dd k)$ we conclude that
    \[ 0 \leq \inf_{\rho \in L^1(\R^3; \C)} \left[ \frac{1}{2} (\rho, \rho)_w - \Re{(\mu, \rho)} + \frac{1}{2} \int \abs{\hat \mu(k)}^2 \, \hat w(k)^{-1} \, \dd k \right] \leq \lim_{n \to \infty} \left[ \frac{1}{2} \int \abs*{\phi_n(k) - \hat \mu(k) \hat w(k)^{-1} }^2 \, \hat w(k) \, \dd k \right] = 0 \]
    where $(\phi_n)$ is a sequence of Schwartz functions approximating $\hat \mu(k) \hat w(k)^{-1} = \eta(k)$ in $L^2(\R^3; \C; \hat w(k) \, \dd k)$.
  \end{proof}

  \begin{lemma}
    \label{ftfmunice}
    Let $\mu \in L^\infty(\R^3)$, $v : \R^3 \to \R$ be measurable, $w \in \mathcal{S}(\R^3)$ be even and of positive type, and $s : [0, 1] \to \R$ be an entropy function.
    Suppose that $\inf_{\rho \in L^1(\R^3)} \Omega^\TF(\rho, \mu)$ is finite where $\Omega^\TF(-, -)$ is defined as in \eqref{ftfrhomudef}.

    Then $\mu \in H^\sigma(\R^3)$ for all $\sigma \geq 0$ with
    \[ \norm{\mu}_{H^\sigma(\R^3)}^2 \leq - 2 \left( \, \inf_{\rho \in L^1(\R^3)} \Omega^\TF(\rho, \mu) \right) \norm{ { \langle - \rangle }^{2\sigma} \, \hat w }_{L^\infty(\R^3)} \, . \]
  \end{lemma}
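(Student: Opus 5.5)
The plan is to split $\Omega^\TF(\rho, \mu)$ into the entropic term and the quadratic-minus-linear term, and then reduce to Lemma \ref{infizernice}. Write
\[ \Omega^\TF(\rho, \mu) = A(\mu) + \left[ \frac{1}{2} \inner{\rho, \rho}_w - \inner{\mu, \rho} \right], \qquad A(\mu) = \frac{1}{(2\pi)^3} \iint s_\ast(\abs{p}^2 + v(x) + \mu(x)) \, \dd p \, \dd x \, . \]
The first term does not depend on $\rho$. Since $s_\ast(h) \leq -s(0) = 0$ for every $h$ (as noted in the proof of Theorem \ref{mainthm}), we have $A(\mu) \in [-\infty, 0]$.

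\textbf{Step 1: $A(\mu)$ is finite and the infimum splits.} By hypothesis $\inf_\rho \Omega^\TF(\rho, \mu)$ is finite. Taking $\rho = 0$ gives $\inf_\rho \Omega^\TF(\rho,\mu) \leq A(\mu)$, so $A(\mu) > -\infty$ (otherwise the infimum would be $-\infty$). Hence $A(\mu)$ is a finite non-positive number, and
\[ \inf_{\rho} \Omega^\TF(\rho, \mu) = A(\mu) + \inf_{\rho} \left[ \frac{1}{2} \inner{\rho, \rho}_w - \inner{\mu, \rho} \right] \, . \]
The second infimum is therefore finite as well.

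\textbf{Step 2: apply Lemma \ref{infizernice}.} That lemma now gives $\mu \in L^2(\R^3)$ and
\[ \inf_{\rho} \left[ \frac{1}{2} \inner{\rho, \rho}_w - \inner{\mu, \rho} \right] = -\frac{1}{2} \int \abs{\hat\mu(k)}^2 \hat w(k)^{-1} \, \dd k \, . \]
Using $A(\mu) \leq 0$, we obtain
\[ \int \abs{\hat\mu(k)}^2 \hat w(k)^{-1} \, \dd k = -2\left(\inf_\rho \Omega^\TF(\rho,\mu) - A(\mu)\right) \leq -2 \inf_\rho \Omega^\TF(\rho,\mu) \, . \]

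\textbf{Step 3: Sobolev bound.} Insert the weight $\hat w \cdot \hat w^{-1}$ into the Sobolev norm:
\[ \norm{\mu}_{H^\sigma}^2 = \int \langle k\rangle^{2\sigma} \abs{\hat\mu(k)}^2 \, \dd k = \int \langle k \rangle^{2\sigma} \hat w(k) \, \abs{\hat\mu(k)}^2 \hat w(k)^{-1} \, \dd k \leq \norm{\langle - \rangle^{2\sigma} \hat w}_{L^\infty} \int \abs{\hat\mu}^2 \hat w^{-1} \, \dd k \, . \]
The constant $\norm{\langle - \rangle^{2\sigma} \hat w}_{L^\infty}$ is finite for every $\sigma$ because $\hat w \in \mathcal{S}$. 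Combining this with Step 2 gives the claimed estimate.

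The only point needing care is Step 1. We must exclude $A(\mu) = -\infty$, and we must justify splitting the infimum over a sum in which one term is independent of $\rho$. Both follow directly from the finiteness hypothesis together with the sign $s_\ast \leq 0$.
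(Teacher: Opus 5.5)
Your proposal is correct and follows the paper's proof: split off the $\rho$-independent entropy term, apply Lemma \ref{infizernice} to get $\mu \in L^2(\R^3)$ with $\frac{1}{2}\int \abs{\hat\mu}^2 \hat w^{-1} \, \dd k \leq -\inf_\rho \Omega^\TF(\rho,\mu)$ using $s_\ast \leq 0$, then conclude by H\"older in Fourier space. Your Step 1, which explicitly rules out an infinite entropy term before splitting the infimum, makes precise a point the paper leaves implicit in its case formula.
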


  \begin{proof}
    By Lemma \ref{infizernice}, we have
    \[ \inf_{\rho \in L^1(\R^3)} \Omega^\TF(\rho, \mu) = \begin{cases} \frac{1}{(2\pi)^3} \iint s_\ast(\abs{p}^2 + v(x) + \mu(x)) \, \dd x \, \dd p - \frac{1}{2} \int \abs{\hat \mu(k)}^2 \, \hat w(k)^{-1} \, \dd k & \mu \in L^2(\R^3) \\ -\infty & \mathrm{otherwise}  \end{cases} \, . \]
    Suppose that the left-hand side is finite.
    It follows that $\mu \in L^2(\R^3)$ and since $s_\ast \leq 0$,
    \[ \frac{1}{2} \int \abs{\hat \mu(k)}^2 \, \hat w(k)^{-1} \, \dd k \leq - \inf_{\rho \in L^1(\R^3)} \Omega^\TF(\rho, \mu) \, . \]
    The result then follows from $\norm{\mu}^2_{H^\sigma(\R^3)} = \int \langle k \rangle^{2\sigma} \abs{\hat \mu(k)}^{2} \, \dd k$ and H\"older's inequality.
  \end{proof}

  We now turn to the proofs of Theorem \ref{mainthmtwo} and Theorem \ref{mainthmthree}.
  The following lemma establishes that the zero-temperature Thomas--Fermi grand potential functional established by Nguyen in Ref. \onlinecite{Nguyen}
  is pointwise equal to the Thomas--Fermi grand potential functional defined in \eqref{ftfrhodef} for all $\beta > 0$ when $s = 0$.

  \begin{lemma}
    \label{nguyenomegarep}

    Let $\beta > 0$ and $\zeta \in \R$, let $v \in L^{5/2}_\mathrm{loc.}(\R^3)$ be confining and suppose $s = 0$.
    Define
    \[ \omega^\TF(\rho; \zeta) = \int \left[ \frac{3}{5} (6 \pi^2)^{2/3} \rho(x)^{5/3} + \left[ v(x) - \zeta \right] \rho(x) \right] \, \dd x + \frac{1}{2} \inner{\rho, \rho}_w \, , \]
    so that $\omega^\TF(\zeta) = \inf_{\substack{\rho \in L^1(\R^3; \R_+)}} \omega^\TF(\rho; \zeta)$
    as defined in \eqref{nguyenftfdef}.
    Further define
    \[ \Omega^\TF(\rho, \mu; \zeta, \beta) = \frac{\beta^{-1}}{(2\pi)^3} \iint s_\ast(\beta(\abs{p}^2 + v(x) - \zeta + \mu(x))) \, \dd p \, \dd x + \frac{1}{2} \inner{\rho, \rho}_w - \inner{\mu, \rho} \]
    as in \eqref{ftfrhomudef}.

    Then
    \[ \label{rhofin} \omega^\TF(\rho; \zeta) = \sup_{\mu \in L^\infty(\R^3)} \Omega^\TF(\rho, \mu; \zeta, \beta) \]
    for all $\rho \geq 0$, and
    \[ \label{rhoinf} \sup_{\mu \in L^\infty(\R^3)} \Omega^\TF(\rho, \mu; \zeta, \beta) = \infty \]
    otherwise.
    In particular,
    \[ \omega^\TF(\zeta) = \inf_{\substack{\rho \in L^1(\R^3) \\ \rho \geq 0}} \sup_{\mu \in L^\infty(\R^3)} \Omega^\TF(\rho, \mu; \zeta, \beta) = \inf_{\rho \in L^1(\R^3)} \sup_{\mu \in L^\infty(\R^3)} \Omega^\TF(\rho, \mu; \zeta, \beta) \, . \]
  \end{lemma}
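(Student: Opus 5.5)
The plan is to reduce everything to Proposition \ref{omegarhoexplicit}, applied with the potential $v - \zeta$ and the entropy $s = 0$. The whole proof then comes down to computing the conjugate function $f^\ast$ explicitly.

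First I compute $s_\ast$ for $s = 0$. We have $s_\ast(h) = \inf_{0 \leq r \leq 1} rh = \min(h, 0) = -h_-$. This function is positively homogeneous, so $\beta^{-1} s_\ast(\beta h) = s_\ast(h)$ for every $\beta > 0$. Consequently $\Omega^\TF(\rho, \mu; \zeta, \beta)$ does not depend on $\beta$ and equals $\Omega^\TF(\rho, \mu)$ from \eqref{ftfrhomudef} with $v$ replaced by $v - \zeta$ and $\beta = 1$. This reduces us to the setting of Proposition \ref{omegarhoexplicit}.

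Next I verify the compatibility hypothesis of that proposition. Using $\int (\lambda - \abs{p}^2)_+ \, \dd p = \frac{8\pi}{15} \lambda_+^{5/2}$, we get
\[ \frac{1}{(2\pi)^3} \iint s_\ast(\abs{p}^2 + v(x) - \zeta) \, \dd p \, \dd x = -\frac{1}{15\pi^2} \int (v(x) - \zeta)_-^{5/2} \, \dd x \, . \]
This is finite because $v$ is confining, so the set $\{ v < \zeta \}$ is bounded, and because $v \in L^{5/2}_\mathrm{loc.}(\R^3)$. The same computation identifies the convex function of Proposition \ref{omegarhoexplicit} as
\[ f(h) = \frac{1}{15\pi^2} h_+^{5/2} \, . \]
Its Legendre transform is then computed by elementary calculus. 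For $r < 0$, letting $h \to -\infty$ in $rh - f(h)$ gives $f^\ast(r) = +\infty$. For $r \geq 0$, the supremum is attained at $h = (6\pi^2 r)^{2/3}$, giving $f^\ast(r) = \frac{3}{5}(6\pi^2)^{2/3} r^{5/3}$.

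Proposition \ref{omegarhoexplicit} (with $w \in \mathcal{S} \subseteq L^\infty$) then gives
\[ \sup_{\mu \in L^\infty(\R^3)} \Omega^\TF(\rho, \mu; \zeta, \beta) = \frac{1}{2} \inner{\rho, \rho}_w + \int \left[ (v(x) - \zeta) \rho(x) + f^\ast(\rho(x)) \right] \dd x \, , \]
where the integrand is bounded below by the integrable function $-f(\zeta - v(x))$. For $\rho \geq 0$ this is exactly $\omega^\TF(\rho; \zeta)$, which proves \eqref{rhofin}. If instead $\rho < 0$ on a set of positive measure, the integrand equals $+\infty$ there while having an integrable lower bound, so the integral is $+\infty$. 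This proves \eqref{rhoinf}; alternatively, one can invoke Lemma \ref{rhotfpositive}. The final identity follows by taking the infimum over $\rho$ in both cases.

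The only delicate point is the bookkeeping of the extended-valued integral in the case $\rho \geq 0$. The splitting of $\int (v - \zeta)\rho + f^\ast(\rho)$ should be kept identical to the form of the integrand in the definition of $\omega^\TF(\rho; \zeta)$, and both sides should be read as integrals with an integrable lower bound, as in the Remark following Proposition \ref{omegarhoexplicit}. I would not split the integral into separate terms.
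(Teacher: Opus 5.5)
Your proposal is correct and follows the paper's proof essentially step for step. You reduce to Proposition \ref{omegarhoexplicit} with potential $v-\zeta$, using the $\beta$-independence that follows from $s_\ast(h)=-h_-$ being positively homogeneous. You verify compatibility via $\int(\lambda-\abs{p}^2)_+\,\dd p=\frac{8\pi}{15}\lambda_+^{5/2}$, compute $f^\ast$ explicitly, and handle $\rho<0$ through $f^\ast=+\infty$ on a set of positive measure, with the integrable lower bound $-f(\zeta-v)$ keeping the unsplit integral well defined, a point the paper leaves implicit.
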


  \begin{proof}
    To begin, we apply Proposition \ref{omegarhoexplicit} with $s = 0$ and external potential $v - \zeta$.
    Since $s = 0$, it suffices to take $\beta = 1$ without loss of generality.
    We first verify that
    \[ \frac{1}{(2\pi)^3} \iint s_\ast(\abs{p}^2 + v(x) - \zeta) \, \dd p \, \dd x = -\frac{1}{(2\pi)^3} \iint (\abs{p}^2 + v(x) - \zeta)_- \, \dd p \, \dd x > -\infty \, . \]

    By computation,
    \[ \int [\abs{p}^2 + h]_- \, \dd p = 4 \pi \int_0^\infty [r^2 + h]_- \, r^2 \, \dd r = -4 \pi \int_0^{h_-^{1 / 2}} r^4 - h_- r^2 \, \dd r = \frac{8}{15} \pi h_-^{5 / 2} \]
    and thus
    \[ \frac{1}{(2\pi)^3} \iint [\abs{p}^2 + v(x) - \zeta]_- \, \dd p \, \dd x = \frac{1}{15 \pi^2} \int \left( v(x) - \zeta \right)_-^{5 / 2} \, \dd x \]
    which is finite as $(v - \zeta)_-$ is compactly supported and $v \in L^{5 / 2}_\mathrm{loc.}(\R^3)$.

    Applying the result yields
    \[ \sup_{\mu \in L^\infty(\R^3)} \Omega^\TF(\rho, \mu; \zeta, \beta) = \frac{1}{2} \inner{\rho, \rho}_w + \int [v(x) - \zeta] \rho(x) + f^*(\rho(x)) \, \dd x \]
    with
    \[ \label{omegaexpr} f(h) = \frac{1}{(2\pi)^3} \int (\abs{p}^2 - h)_- \, \dd p = \frac{1}{15\pi^2} h^{5/2} = \frac{1}{6 \pi^2} \times \frac{2}{5} h^{5/2} \]
    if $h \geq 0$ and $0$ otherwise.

    It then follows that
    \[ f^*(r) = \frac{1}{(6 \pi^2)} \times \frac{3}{5} (6 \pi^2 r)^{5/3} = \frac{3}{5} (6 \pi^2)^{2/3} r^{5/3} \]
    if $r \geq 0$ and $f^\ast(r) = \infty$ otherwise.
    Plugging this in, we conclude \eqref{rhofin}.

    If $\rho < 0$ on some set of positive measure, then $f^\ast \circ \rho = \infty$ on a set of positive measure and therefore integrates to $\infty$.
    Otherwise,
    \[ \sup_{\mu \in L^\infty(\R^3)} \Omega^\TF(\rho, \mu; \zeta, \beta) = \frac{1}{2} \inner{\rho, \rho}_w + \int [v(x) - \zeta] \rho(x) + \frac{3}{5} (6 \pi^2)^{2/3} \rho(x)^{5/3} \, \dd x \]
  \end{proof}

  We then proceed to the proof of Theorem \ref{mainthmtwo}.

  \begin{proof}[Proof of Theorem \ref{mainthmtwo}]
    Since $v - \zeta$ satisfies the hypotheses for the theorem whenever $v$ does, we take $\zeta = 0$ without loss of generality and remove $\zeta$ from all notation.
    For $\mu \in L^\infty(\R^3)$, let $\Omega^\TF(\mu; \beta) = \inf_{\rho \in L^1(\R^3)} \Omega^\TF(\rho, \mu; \beta)$ with $s$ and $\beta$ defined as in the statement of the theorem
    and $\omega^\TF(\mu) = \inf_{\rho \in L^1(\R^3)} \Omega^\TF(\rho, \mu; \beta)$ with $s = 0$ and $\beta = 1$.

    By Proposition \ref{swaptf} and Lemma \ref{nguyenomegarep}, we have $\Omega^\TF(\beta) = \Omega^\TF(\mu_\beta; \beta)$ and $\omega^\TF = \omega^\TF(\mu_\infty)$ for some $\mu_\beta$
    and $\mu_\infty$ in $L^\infty(\R^3)$.

    By the same manipulations as in the proof of Theorem \ref{mainthm} we have
    \[ \Omega^\TF(\beta) - \omega^\TF \geq \frac{1}{(2\pi)^3} \iint \delta_\beta(\abs{p}^2 + v(x) + \mu_\infty(x)) \, \dd p \, \dd x \, , \]
    and
    \[ \Omega^\TF(\beta) - \omega^\TF \leq \frac{1}{(2\pi)^3} \iint \delta_\beta(\abs{p}^2 + v(x) + \mu_\beta(x)) \, \dd p \, \dd x \, , \]
    where
    \[ \delta_\beta(h) = \beta^{-1} \delta_{\beta = 1}(\beta h) = \beta^{-1} s_\ast(\beta h) + (h)_- = \begin{cases} \beta^{-1} s_\ast(\beta h) & h \geq 0 \\ \beta^{-1} s_\ast(\beta h) - h & h \leq 0 \end{cases} \, . \]

    Additionally, we have $s_\ast(h) - h = \inf_{0 \leq r \leq 1} h(r - 1) - s(r) = \inf_{0 \leq r \leq 1} (-h) r - s(1 - r) = \tilde s_\ast(-h)$
    where $\tilde s(r) = s(1 - r)$ and thus
    \[ \delta_\beta(h) =  \begin{cases} \beta^{-1} s_\ast(\beta h) & h \geq 0 \\ \beta^{-1} \tilde s_\ast(-\beta h) & h \leq 0 \end{cases} \, . \]

    Using established properties of entropy functions, in particular Lemma \ref{entropydecay}, we conclude that
    $\delta_\beta \leq 0$ and that $\delta_\beta$ is non-decreasing for $h \geq 0$, non-increasing for $h \leq 0$ and therefore has a global minimum at $h = 0$.
    In particular, we have
    \[ 0 \geq \Omega^\TF(\beta) - \omega^\TF \geq \frac{1}{(2\pi)^3} \iint \delta_\beta(\abs{p}^2 + v(x) + \mu_\infty(x)) \, \dd p \, \dd x \, . \]

    To conclude, we recall from \eqref{betamonotoneone}-\eqref{betamonotonetwo} that $\beta \mapsto \beta^{-1} s_\ast(\beta h)$ is non-decreasing
    and therefore by the dominated convergence theorem,
    with the integrand dominated (from below) pointwise by $\delta_{\beta_0}(\abs{p}^2 + v(x) + \mu_\infty(x))$,
    we conclude that $\Omega^\TF(\beta) - \omega^\TF \to 0$ as $\beta \to \infty$.
  \end{proof}

  We now conclude with the proof of Theorem \ref{mainthmthree}.

  \begin{proof}[Proof of Theorem 3]
    Suppose $s(r) = -r\log r - (1 - r) \log{(1-r)}$ and let $\beta > 0$ and $\zeta \in \R$ be arbitrary.
    By scaling, it is equivalent to take $s(r) = \beta^{-1} (-r\log r - (1 - r) \log{(1-r)})$
    and $v(x) \to v(x) - \zeta$.

    To begin, we verify the assumptions of Proposition \ref{omegarhoexplicit}.
    Indeed, by computation, we find
    \[ s_\ast(h) = -\beta^{-1} \log{(1 + \exp(-\beta h))} \, ,  \]
    and therefore $s_\ast'(h) = \left( 1 + \exp(\beta h) \right)^{-1}$.
    In particular, $0 \leq s_\ast'(h) \leq \min{\{ 1,\,\exp(-\beta h) \}}$.

    For each fixed $x$ we have
    \[ 0 \leq - \int s_\ast(\abs{p}^2 + v(x)) \, \dd p = -4\pi \int_0^\infty r^2 s_\ast(r^2 + v(x)) \, \dd r = -2\pi \int_0^\infty u^{1/2} s_\ast(u + v(x)) \, \dd u \, . \]
    After integration by parts, we have
    \begin{align}
      0 \leq - \int s_\ast(\abs{p}^2 + v(x)) \, \dd p &= \frac{4}{3} \pi \int_0^\infty u^{3/2} s_\ast'(u + v(x)) \, \dd u \\
      &\leq \frac{4}{3} \pi \int_0^{v(x)_-} u^{3/2} \, \dd u + \frac{4}{3} \pi \exp(-\beta v(x)) \int_{v(x)_-}^\infty u^{3/2} \exp(-\beta u) \, \dd u \\
      &= \frac{4}{3} \pi \int_0^{v(x)_-} u^{3/2} \, \dd u + \frac{4}{3} \pi \exp(-\beta v(x)_+) \int_{0}^\infty (u + v(x)_-)^{3/2} \exp(-\beta u) \, \dd u \, .
    \end{align}

    Using the bound $(x + y)^{3/2} \leq 2^{1/2} (x^{3/2} + y^{3/2})$ we have
    \[ \int_{0}^\infty (u + v(x)_-)^{3/2} \exp(-\beta u) \, \dd u \leq 2^{1/2} \left( \beta^{-5/2} \Gamma(5/2) + \beta^{-1} v(x)_-^{3/2} \right) \, . \]
    We therefore conclude that
    \[ - \int s_\ast(\abs{p}^2 + v(x)) \, \dd p \leq \frac{8}{15} \pi v(x)_-^{5/2} + \frac{2^{5/2}}{3} \pi \beta^{-1} v(x)_-^{3/2} + \frac{2^{5/2}}{3} \pi \Gamma(5/2) \beta^{-5/2} \exp(-\beta v(x)_+) \, . \]
    All three terms are integrable: the first two from $v_- \in L^{5/2}(\R^3)$ with compact support and the last by assumption.
    It therefore follows that $\iint s_\ast(\abs{p}^2 + v(x)) \, \dd p \, \dd x$ is finite.

    Applying Proposition \ref{omegarhoexplicit}, we then conclude that
    \[ \Omega^\TF(\rho; \zeta, \beta) = \frac{1}{2} \inner{\rho, \rho}_w + \int [v(x) - \zeta] \rho(x) + f^\ast(\rho(x)) \, \dd x \]
    with
    \[ f(h) = \frac{1}{(2\pi)^3} \int -s_\ast( \abs{p}^2 - h ) \, \dd p = \frac{\beta^{-1}}{(2\pi)^3} \int \log\left( 1 + \exp{\left( -\beta (\abs{p}^2 - h)\right)} \right) \, \dd p  \, . \]

    Indeed, a standard calculation for $r > 0$ shows that $f^*(r) = r h(r) - f(h(r))$ with $h$ the inverse function of
    $f' : \R \to (0, \infty)$ where
    \[ f'(h) = \frac{1}{(2\pi)^3} \int s_\ast'(\abs{p}^2 - h) \, \dd p = \frac{1}{(2\pi)^3} \int \left( 1 + \exp{(\beta (\abs{p}^2 - h))} \right)^{-1} \, \dd p \, . \]

    It follows that for $r > 0$, $h(r)$ is the unique real number such that
    \[ r = \frac{1}{(2\pi)^3} \int \left( 1 + \exp{(\beta (\abs{p}^2 - h(r)))} \right)^{-1} \, \dd p \, . \]

    One can also calculate that $f^*(0) = 0$ and $f^*(r) = \infty$ for $r < 0$.
    It remains to justify the restriction of the infimum in \eqref{ftfdef} to only densities $\rho$ that are non-negative almost everywhere,
    but if $\rho < 0$ on a set of positive measure, it follows that $f^\ast \circ \rho$ is equal to $\infty$ on a set of positive measure and
    $\Omega^\TF(\rho; \zeta, \beta) = \infty$.
  \end{proof}

  \section{The Reduced Hartree--Fock Grand Potential Functional}

  To begin this section, we briefly review some well-known technicalities in the definition of the reduced Hartree--Fock grand potential functional.
  Since $H_\kappa$ is unbounded, the quantity
  \[ \Tr \left( -\kappa^2 \Delta + v(\hat x) - \zeta \right) \gamma - \beta^{-1} \Tr s(\gamma) \]
  in the definition of $\Omega^\rHF_\kappa (\gamma; \zeta, \beta)$ given in \eqref{fksdef} is understood in the quadratic-form sense as
  \[ K_{H_\kappa - \zeta \mathds{1}} (\gamma) - \beta^{-1} S(\gamma) \]
  where $S(\gamma) = \Tr s(\gamma)$. More generally, for a semibounded operator $H$ with form domain $Q(H)$, choose $E \in \R$ such that $H + E \mathds{1} \geq 0$ and define
  \[ \label{khdef} K_H(\gamma) = \norm{ (H + E)^{1 / 2} \gamma^{1 / 2} }_{\mathfrak{S}_2(\mathcal{H})}^2 - E \Tr \gamma \in (-\infty, \infty] \]
  when $\Ran \gamma^{1 / 2} \subseteq Q(H)$ and $K_H(\gamma) = \infty$ otherwise.

  This definition does not depend on the choice of $E$. Moreover, whenever the right-hand sides are finite,
  \[ K_H(t \gamma_1 + (1 - t) \gamma_2) = t K_H(\gamma_1) + (1 - t) K_H(\gamma_2) \]
  and $K_{H + A} (\gamma) = K_H(\gamma) + \Tr \left( A \gamma \right)$ for $A$ bounded.

  If $H$ has compact resolvent, with eigenvectors $(u_n)$ and corresponding eigenvalues $(\lambda_n)$, then this agrees with the spectral representation
  \[ \label{eigeniden} K_H(\gamma) = \sum_{n = 1}^\infty \lambda_n \inner{u_n, \gamma u_n} \]
  introduced above. The following lemma shows, in particular, that the restriction $\gamma \in \DO_\kappa{(\mathcal{H})}$ avoids the indeterminate form $\infty - \infty$ in $\Omega_\kappa^\rHF(\gamma; \zeta, \beta)$ and records the convexity used below.

  \begin{lemma}
    \label{fksconvexity}

    Let $H$ be semibounded with compact resolvent and (form) domain $Q(H)$ and let $s$ be an entropy function.
    Suppose that $\Tr s_\ast(H)$ is finite.

    Define $\mathcal{F}_H : \DO_H{(\mathcal{H})} \to \R$ by
    \[ \mathcal{F}_H(\gamma) = K_H(\gamma) - S(\gamma) \]
    where
    \[ \DO_H{(\mathcal{H})} = \{ \gamma \in \DO{(\mathcal{H})} : K_H(\gamma) < \infty \} \, . \]

    Then $\DO_H{(\mathcal{H})}$ is convex and $\mathcal{F}_H$ is well-defined, convex and bounded below by $\Tr s_\ast(H)$.
  \end{lemma}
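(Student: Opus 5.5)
The plan is to work in the eigenbasis $(u_n)$ of $H$, with eigenvalues $\lambda_n$, which exists because $H$ has compact resolvent. The key tool is the operator Fenchel inequality $\lambda\gamma - s(\gamma) \geq s_\ast(\lambda)\mathds{1}$, i.e. Lemma \ref{operatorfenchel}, applied diagonal entry by diagonal entry. Concretely, I want the pointwise estimate
\[ \lambda_n \inner{u_n, \gamma u_n} - \inner{u_n, s(\gamma) u_n} \geq s_\ast(\lambda_n) \, . \]
To get it, use the spectral measure $\nu_n$ of $\gamma$ in the vector $u_n$. This is a probability measure on $[0,1]$. By the scalar Fenchel inequality $\lambda r - s(r) \geq s_\ast(\lambda)$, integrating against $\nu_n$ gives exactly the estimate above. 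This is just Lemma \ref{operatorfenchel} tested against $u_n$.

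Next I check well-definedness. Since $s$ is continuous on $[0,1]$ with $s(0)=s(1)=0$ and is concave, it lies above the chord, so $s\geq 0$. In particular $s(\gamma)\geq 0$.

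I also need $S(\gamma)<\infty$ for $\gamma\in\DO(\mathcal{H})$. Write $\gamma$ spectrally with eigenvalues $r_j\in[0,1]$ and $\sum r_j<\infty$. For small $r$, concavity and $s(0)=0$ give $s(r)\le s'(0^+)\,r$, but $s'(0^+)$ may be $+\infty$, so this does not settle finiteness. I therefore organize the argument so that finiteness of $S$ on $\DO_H$ comes out of the inequality itself.

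Summing the diagonal estimate over $n$ gives, for $\gamma$ with $K_H(\gamma)<\infty$,
\[ S(\gamma) = \sum_n \inner{u_n, s(\gamma)u_n} \leq \sum_n \bigl(\lambda_n\inner{u_n,\gamma u_n} - s_\ast(\lambda_n)\bigr) = K_H(\gamma) - \Tr s_\ast(H) < \infty \, . \]
Here I use \eqref{eigeniden}, the identity $\Tr s(\gamma) = \sum_n \inner{u_n,s(\gamma)u_n}$ for the positive operator $s(\gamma)$, and finiteness of $\Tr s_\ast(H)$.

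This inequality does two things at once. First, $S(\gamma)$ is finite, so $\mathcal{F}_H(\gamma) = K_H(\gamma) - S(\gamma)$ is a difference of two finite numbers, and there is no $\infty - \infty$. Note that $K_H(\gamma) > -\infty$ because $H$ is semibounded and $\Tr\gamma<\infty$. Second, $\mathcal{F}_H(\gamma) \geq \Tr s_\ast(H)$. One caveat: the sum $\sum_n \lambda_n\inner{u_n,\gamma u_n}$ has finitely many negative terms plus a nonnegative tail, so rearranging sums is legitimate.

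For convexity of the domain, take $\gamma_1,\gamma_2\in\DO_H$ and $t\in[0,1]$. The convex combination $t\gamma_1+(1-t)\gamma_2$ stays in $\DO(\mathcal{H})$. Its value $K_H$ is finite by the affine identity for $K_H$ stated before the lemma, or directly from the spectral sum, which is linear in $\gamma$ term by term with nonnegative tails.

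Convexity of $\mathcal{F}_H$ then reduces to concavity of $S$ on $\DO(\mathcal{H})$, because $K_H$ is affine there. I expect the concavity of $S$ to be the main obstacle. The trace functional $\gamma\mapsto\Tr s(\gamma)$ is concave when $s$ is concave; this is the standard fact that $\Tr f(A)$ is convex for convex $f$. I would prove it through Peierls' inequality. Let $\gamma_t = t\gamma_1+(1-t)\gamma_2$ and take its eigenbasis $(e_j)$. By the Jensen step above, $\inner{e_j, s(\gamma_i)e_j} \le s(\inner{e_j,\gamma_i e_j})$, and since $(e_j)$ is orthonormal, $\Tr s(\gamma_i) \le \sum_j s(\inner{e_j,\gamma_i e_j})$. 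Then concavity of scalar $s$ gives
\[ t\,S(\gamma_1)+(1-t)\,S(\gamma_2) \le \sum_j s(\inner{e_j,\gamma_t e_j}) = S(\gamma_t) \, . \]
Everything here is a sum of nonnegative terms, so no convergence issues arise. Combining this with the affine $K_H$ gives convexity of $\mathcal{F}_H$.
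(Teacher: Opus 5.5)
Your proposal is correct and follows essentially the same route as the paper: you diagonalize $H$ and apply the Fenchel inequality of Lemma~\ref{operatorfenchel} entrywise, which gives both $\mathcal{F}_H(\gamma)\geq\Tr s_\ast(H)$ and $S(\gamma)\leq K_H(\gamma)-\Tr s_\ast(H)<\infty$, and you deduce convexity from the affinity of $K_H$ together with the concavity of $S$. The only difference is that you prove the concavity of $S$ directly, via the Peierls--Jensen argument in the eigenbasis of $\gamma_t$, whereas the paper cites Lieb--Pedersen for it.
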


  \begin{proof}
    Since $K_H$ is affine, $\DO_H{(\mathcal{H})}$ is convex. Moreover, by \eqref{eigeniden},
    \[ \mathcal{F}_H(\gamma) = \sum_{n = 1}^\infty \inner{u_n, [\lambda_n \gamma - s(\gamma) - s_\ast(\lambda_n)] u_n} + \Tr s_\ast(H) \, . \]
    By Lemma \ref{operatorfenchel}, each summand in the infinite sum is non-negative and therefore
    \[ \mathcal{F}_H(\gamma) \geq \Tr s_\ast(H) \, . \]

    The same inequality gives
    \[ \Tr s(\gamma) \leq K_H(\gamma) - \Tr s_\ast(H) \]
    and hence $S(\gamma)$ is finite whenever $\gamma \in \DO_H(\mathcal{H})$. Thus $\mathcal{F}_H$ is well-defined. Finally, $S$ is concave (cf. Ref. \onlinecite{LiebPedersen}) while $K_H$ is affine, and therefore $\mathcal{F}_H$ is convex.
  \end{proof}

  With these technicalities explained, we show that the reduced Hartree--Fock grand potential can be
  calculated by means of the optimization of an appropriate dual functional.

  Indeed, we have the following:

  \begin{proposition}
  \label{ksdftequiv}
    Let $\kappa > 0$, $v \in L^{5/2}_\mathrm{loc.}(\R^3)$ be confining, $w \in L^\infty(\R^3)$ be even, repulsive and of positive type, and $s : [0, 1] \to \R$ be an entropy function
    such that $s_\ast \in C^2(\R)$ and the compatibility condition
    \[ \iint s_\ast(\abs{p}^2 + v(x)) \, \dd x \, \dd p > -\infty \]
    holds.
    Let
    \[ \Omega_\kappa^\DFT(\rho, \mu) = \Tr s_\ast(H_\kappa + \mu(\hat x)) + \frac{\kappa^3}{2} \inner{ \rho, \rho }_w - \inner{ \mu, \rho } \, , \]
    $\Omega_\kappa^\DFT(\mu) = \inf_{\rho \in L^1(\R^3)} \Omega_\kappa^\DFT(\rho, \mu)$, and let $\Omega^\rHF_\kappa$ and $\gamma \mapsto \Omega^\rHF_\kappa(\gamma)$ be defined as in \eqref{fksdef} with $\beta = 1$ and $\zeta = 0$.

    Then there exists $\mu_\kappa^\rHF \in L^\infty(\R^3)$ such that the following hold:
    \begin{enumerate}
      \item For every minimizing sequence $(\gamma_n)$ of $\Omega_\kappa^\rHF(-)$, i.e., $\Omega_\kappa^\rHF(\gamma_n) \xrightarrow{n \to \infty} \Omega_\kappa^\rHF$, we have $\kappa^3 w \ast \den \gamma_n \to \mu_\kappa^\rHF$ in $L^\infty(\R^3)$.
      \item $\Omega_\kappa^\rHF = \sup_{\mu \in L^\infty(\R^3)} \Omega_\kappa^\DFT(\mu) = \Omega_\kappa^\DFT(\mu^\rHF_\kappa)$.
    \end{enumerate}
  \end{proposition}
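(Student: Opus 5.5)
The plan is to treat Proposition \ref{ksdftequiv} as a Fenchel--Rockafellar duality in which the density $\rho$ is replaced by the potential $\mu$. Weak duality is proved directly, and the gap is closed with minimizing sequences using strong convexity of the interaction term.

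\emph{Weak duality.} For any $\gamma \in \DO_\kappa(\mathcal{H})$ and $\mu \in L^\infty(\R^3)$, write $K_{H_\kappa}(\gamma) = K_{H_\kappa + \mu(\hat x)}(\gamma) - \inner{\mu, \rho_\gamma}$. Applying Lemma \ref{fksconvexity} to $H_\kappa + \mu(\hat x)$, which has compact resolvent as a bounded perturbation of $H_\kappa$, gives
\[ \Omega^\rHF_\kappa(\gamma) \geq \Tr s_\ast(H_\kappa + \mu(\hat x)) + \frac{\kappa^3}{2}\inner{\rho_\gamma,\rho_\gamma}_w - \inner{\mu,\rho_\gamma} = \Omega^\DFT_\kappa(\rho_\gamma,\mu) \geq \Omega^\DFT_\kappa(\mu). \]
Hence $\Omega^\rHF_\kappa \geq \sup_\mu \Omega^\DFT_\kappa(\mu)$. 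The finiteness of $\Tr s_\ast(H_\kappa+\mu)$ follows from that of $\Tr s_\ast(H_\kappa)$, since $s_\ast$ is nondecreasing and $\mu$ is bounded. Lemma \ref{semiclassicalbound} supplies the finiteness of $\Tr s_\ast(H_\kappa)$.

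\emph{Convergence of mean-field potentials.} Take a minimizing sequence $(\gamma_n)$ and set $\rho_n = \den\gamma_n$. The map $\gamma \mapsto K - S$ is convex by Lemma \ref{fksconvexity}, and $\rho \mapsto \frac{\kappa^3}{2}\inner{\rho,\rho}_w$ satisfies the parallelogram identity. Evaluating at $\tfrac12(\gamma_n+\gamma_m)$ then gives
\[ \frac{\kappa^3}{8}\inner{\rho_n-\rho_m,\rho_n-\rho_m}_w \leq \tfrac12\Omega^\rHF_\kappa(\gamma_n)+\tfrac12\Omega^\rHF_\kappa(\gamma_m) - \Omega^\rHF_\kappa \to 0 . \]
This makes $\hat\rho_n$ Cauchy in $L^2(\hat w\,\dd k)$. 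Because $\hat w$ is bounded, $\widehat{w\ast\rho_n} = \hat w\hat\rho_n$ is Cauchy in $L^1(\dd k)$ by Cauchy--Schwarz, using $\int \hat w\,\dd k<\infty$ since $w\in\mathcal S$. Therefore $\kappa^3 w\ast\rho_n$ is Cauchy in $L^\infty$ and converges to some $\mu^\rHF_\kappa$. Interleaving two minimizing sequences shows the limit does not depend on the sequence, which proves part 1. This step needs $\hat w \in L^1$, which the main theorem's hypothesis $w\in\mathcal S$ provides; with only $w\in L^\infty$ one would need a more careful argument.

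\emph{Strong duality and attainment.} It remains to show $\Omega^\DFT_\kappa(\mu^\rHF_\kappa) \geq \Omega^\rHF_\kappa$. Write $\mu_n = \kappa^3 w\ast\rho_n$. By Lemma \ref{infizernice}, the infimum over $\rho$ in $\Omega^\DFT_\kappa(\cdot,\mu)$ is attained formally at $\kappa^3 w\ast\rho = \mu$ and equals $-\frac{\kappa^3}{2}\inner{\rho,\rho}_w$ there. Hence
\[ \Omega^\DFT_\kappa(\mu_n) = \Tr s_\ast(H_\kappa+\mu_n) - \frac{\kappa^3}{2}\inner{\rho_n,\rho_n}_w . \]
Comparing with $\Omega^\rHF_\kappa(\gamma_n) = \mathcal F_{H_\kappa+\mu_n}(\gamma_n) - \frac{\kappa^3}{2}\inner{\rho_n,\rho_n}_w$, what is needed is
\[ \mathcal F_{H_\kappa+\mu_n}(\gamma_n) - \Tr s_\ast(H_\kappa+\mu_n) \to 0, \]
that is, $\gamma_n$ asymptotically saturates the operator Fenchel inequality for its own mean field.

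I expect this to be the main obstacle. My plan is a variational argument using $C^2$-regularity of $s_\ast$. Let $\tilde\gamma_n = s_\ast'(H_\kappa+\mu_n)$, the Gibbs-type state, which satisfies $\mathcal F_{H_\kappa+\mu_n}(\tilde\gamma_n) = \Tr s_\ast(H_\kappa+\mu_n)$; one must check it lies in $\DO_\kappa$ using the compatibility condition. Then test minimality with $\gamma_n^t = (1-t)\gamma_n + t\tilde\gamma_n$. By convexity of $\mathcal F$ and the quadratic interaction,
\[ \Omega^\rHF_\kappa(\gamma_n^t) \leq \Omega^\rHF_\kappa(\gamma_n) - t\,\delta_n + O(t^2), \qquad \delta_n = \mathcal F_{H_\kappa+\mu_n}(\gamma_n)-\Tr s_\ast(H_\kappa+\mu_n) \geq 0, \]
because the linear part of the interaction is exactly absorbed by $\mu_n$. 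The $O(t^2)$ constant is $\kappa^3\inner{\tilde\rho_n-\rho_n,\tilde\rho_n-\rho_n}_w$, which must be bounded uniformly in $n$. This needs $\Tr\tilde\gamma_n$ bounded, which follows from $\mu_n$ being bounded in $L^\infty$ and $\Tr s_\ast'(H_\kappa - C)<\infty$. If mass-critical cases defeat this bound, I would instead bound $\inner{\rho,\rho}_w \le \norm{\hat w}_\infty \norm{\hat\rho}_2^2$ and use local estimates. Minimality then gives $t\delta_n \leq o(1) + Ct^2$, so $\delta_n\to0$. Finally, $\Omega^\DFT_\kappa$ is upper semicontinuous in $L^\infty$, because $\Tr s_\ast(H_\kappa+\mu)$ is $L^\infty$-Lipschitz (since $0\le s_\ast'\le1$ and traces are finite) and the quadratic part converges. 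Therefore $\Omega^\DFT_\kappa(\mu^\rHF_\kappa) \geq \limsup_n \Omega^\DFT_\kappa(\mu_n) = \Omega^\rHF_\kappa$. Combined with weak duality, this yields part 2.
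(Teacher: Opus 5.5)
\textbf{Gap: the choice of competitor.} Your skeleton matches the paper's. You use weak duality, then the Cauchy property of $\rho_n$ from convexity of $\mathcal F_{H_\kappa}$ and the parallelogram law. You reduce to $\delta_n\to0$ by perturbing $\gamma_n$ toward a competitor, and finish with upper semicontinuity. The step you flagged as the main obstacle, however, is closed with an assumption the hypotheses do not give.

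Your competitor $\tilde\gamma_n=s_\ast'(H_\kappa+\mu_n(\hat x))$ is exactly the ``intuitive choice'' the paper explicitly rejects. The hypotheses only give $\Tr s_\ast(H_\kappa)>-\infty$, and this does not place $\tilde\gamma_n$ in $\DO_\kappa(\mathcal H)$:
\begin{itemize}
\item Concavity only yields $\epsilon\Tr s_\ast'(H_\kappa)\le-\Tr s_\ast(H_\kappa-\epsilon)$. This is useless in the mass-critical case of Section 3, where $\Tr s_\ast(H_\kappa-\epsilon)=-\infty$ for every $\epsilon>0$.
\item With $H=H_\kappa+\mu_n(\hat x)$, the energy $K_H(s_\ast'(H))=\Tr s_\ast(H)+\Tr s(s_\ast'(H))$ need not be finite (the energy-critical case).
\item The bound $\Tr s_\ast'(H_\kappa-C)<\infty$ you invoke is a non-criticality assumption, not a consequence of the compatibility condition.
\end{itemize}
If $\tilde\gamma_n\notin\DO_\kappa(\mathcal H)$, then $\gamma_n^t\notin\DO_\kappa(\mathcal H)$ for every $t>0$. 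The comparison with $\Omega^\rHF_\kappa$ then gives nothing, so your fallback via $\norm{\hat w}_\infty\norm{\hat\rho}_2^2$ does not address the real difficulty.

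The same flaw affects your claim that $\mu\mapsto\Tr s_\ast(H_\kappa+\mu(\hat x))$ is $L^\infty$-Lipschitz. The Lipschitz constant is a trace of $s_\ast'$ of a shifted operator, and $0\le s_\ast'\le 1$ does not control it. Only upper semicontinuity is available, via Lemma~\ref{lsc} applied to $-s_\ast\ge0$ or via dominated convergence using $\mu_n\ge0$, but that is all you need. Likewise, in weak duality $\Tr s_\ast(H_\kappa+\mu(\hat x))$ may be $-\infty$ when $\mu$ has a negative part. That is harmless, since the inequality is then trivial, but your finiteness claim there is wrong.

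\textbf{The missing idea: a regularized competitor.} The paper takes $\eta_n^{\alpha,\epsilon}=s_\ast'(\alpha(H_\kappa+\mu_n(\hat x)+\epsilon))$ with $\alpha>1$ and $\epsilon>0$.
\begin{itemize}
\item Using concavity and the monotonicity of $\alpha\mapsto\alpha^{-1}s_\ast(\alpha\lambda)$, Lemma~\ref{etaalphaepsilon} gives $\Tr\eta_n^{\alpha,\epsilon}\le-\epsilon^{-1}\Tr s_\ast(H_\kappa)$ and $K_{H_\kappa}(\eta_n^{\alpha,\epsilon})\le-(\alpha-1)^{-1}\Tr s_\ast(H_\kappa)$.
\item Both bounds are uniform in $n$ because $\mu_n\ge0$; this is where repulsivity of $w$ enters. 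So the competitors are admissible and your $O(t^2)$ term is uniformly bounded.
\item The price is a nonzero defect $\Tr r_{\alpha,\epsilon}(H_\kappa+\mu_n(\hat x))$. For $\epsilon=\epsilon(\alpha)$ as in the paper, it is dominated termwise by $-s_\ast(\lambda_k(0))$.
\item It therefore vanishes by dominated convergence, first as $n\to\infty$ (after $t\to0$) and then as $\alpha\to1$. Hence $\delta_n\to0$.
\end{itemize}
With this substitution the rest of your argument goes through.

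\textbf{Minor point on part 1.} You do not need $\hat w\in L^1$. Cauchy--Schwarz for $\inner{\cdot,\cdot}_w$ gives $\norm{w\ast\rho}_{L^\infty}\le\norm{w}_{L^\infty}^{1/2}\norm{\rho}_w$, which covers $w\in L^\infty$ as in the statement.
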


  Before proceeding to the proof, we show the following lemma, which
  establishes that a class of approximate Gibbs states belongs to the domain of $\gamma \mapsto \Omega^\rHF(\gamma)$.

  \begin{lemma}
    \label{etaalphaepsilon}

    Let $H$ be semibounded with compact resolvent and (form) domain $Q(H)$, let $A \in \mathcal{B}(\mathcal{H})$ such that $A \geq 0$,
    let $s : [0, 1] \to \R$ be an entropy function such that $s_\ast \in C^1(\R)$ and let
    \[ \eta_{\alpha, \epsilon} = s_\ast'(\alpha (H + A + \epsilon)) \, . \]

    Then for each $\alpha > 1$ and $\epsilon > 0$, we have $0 \leq \eta_{\alpha, \epsilon} \leq \mathds{1}$ and $\operatorname{Ran} \eta_{\alpha, \epsilon}^{1/2} \subseteq Q(H)$
    with the estimate
    \[ \Tr \eta_{\alpha, \epsilon} \leq - \frac{1}{\epsilon} \Tr s_\ast(H + A) \leq - \frac{1}{\epsilon} \Tr s_\ast(H) \, , \]
    and $\eta_{\alpha, \epsilon} \in \DO_H(\mathcal{H})$ whenever $\Tr s_\ast(H) < \infty$ and it satisfies
    \[ K_{H} (\eta_{\alpha, \epsilon}) \leq K_{H + A} (\eta_{\alpha, \epsilon}) \leq - \frac{1}{\alpha - 1} \Tr s_\ast(H + A) \leq - \frac{1}{\alpha - 1} \Tr s_\ast(H) \, . \]
  \end{lemma}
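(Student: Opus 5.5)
The plan is to work in an eigenbasis of $H + A$ and reduce every claim to a scalar inequality for the concave function $s_\ast$, which is then summed over the spectrum.

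\textbf{Setup.} Since $A$ is bounded and $A \geq 0$, the operator $H + A$ is semibounded, has form domain $Q(H + A) = Q(H)$, and has compact resolvent. So there is an orthonormal basis $(u_n)$ with $(H + A) u_n = \lambda_n u_n$, and $\eta_{\alpha, \epsilon}$ acts diagonally with eigenvalues $\eta_n = s_\ast'(x_n)$, where $x_n = \alpha(\lambda_n + \epsilon)$.

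The scalar facts needed about $s_\ast$ are these:
\begin{itemize}
\item $s_\ast$ is an infimum of affine maps $h \mapsto rh - s(r)$ with slopes $r \in [0, 1]$, so it is concave and non-decreasing, with $0 \leq s_\ast' \leq 1$.
\item Testing $r = 0$ gives $s_\ast \leq 0$.
\item Testing $r = 1$ gives $s_\ast(h) \leq h$.
\item Concavity gives the tangent inequality $s_\ast(y) \leq s_\ast(x) + s_\ast'(x)(y - x)$.
\end{itemize}
From $0 \leq s_\ast' \leq 1$ we immediately get $0 \leq \eta_{\alpha,\epsilon} \leq \mathds{1}$.

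\textbf{Key pointwise inequalities.} Apply the tangent inequality with $x = x_n$ and $y = \lambda_n$, and use $s_\ast(x_n) \leq 0$:
\[ -s_\ast(\lambda_n) \geq s_\ast'(x_n)\bigl((\alpha - 1)\lambda_n + \alpha\epsilon\bigr) . \]
\begin{itemize}
\item \emph{Trace bound.} If $\lambda_n \geq -\epsilon$, then $(\alpha - 1)\lambda_n + \alpha\epsilon = (\alpha - 1)(\lambda_n + \epsilon) + \epsilon \geq \epsilon$, so $\epsilon\,\eta_n \leq -s_\ast(\lambda_n)$. If $\lambda_n < -\epsilon$, use $s_\ast(\lambda_n) \leq \lambda_n$ and $\eta_n \leq 1$ to get $-s_\ast(\lambda_n) \geq -\lambda_n > \epsilon \geq \epsilon\,\eta_n$. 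In both cases $\eta_n \leq -s_\ast(\lambda_n)/\epsilon$.
\item \emph{Energy bound.} Since $\alpha\epsilon\, s_\ast'(x_n) \geq 0$, the displayed inequality gives $(\alpha - 1)\lambda_n \eta_n \leq -s_\ast(\lambda_n)$.
\end{itemize}

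\textbf{Summation.} Summing the trace bound over $n$ gives $\Tr \eta_{\alpha,\epsilon} \leq -\epsilon^{-1} \Tr s_\ast(H + A)$. Summing the energy bound gives $\sum_n \lambda_n \eta_n \leq -(\alpha - 1)^{-1} \Tr s_\ast(H + A)$. This series makes sense because only finitely many $\lambda_n$ are negative.

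The energy bound also yields the range statement. The finiteness of $\sum_n (\lambda_n + E)\eta_n = \norm{(H + A + E)^{1/2} \eta^{1/2}}_{\mathfrak{S}_2}^2$ gives $\operatorname{Ran} \eta^{1/2} \subseteq Q(H + A) = Q(H)$. Then \eqref{eigeniden} identifies the sum as $K_{H + A}(\eta_{\alpha,\epsilon})$.

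\textbf{Comparisons and membership.} By min-max, $A \geq 0$ implies that the ordered eigenvalues of $H + A$ dominate those of $H$. Since $s_\ast$ is non-decreasing, $\Tr s_\ast(H + A) \geq \Tr s_\ast(H)$, which gives the second inequality in each chain. Next, $K_{H + A}(\eta) = K_H(\eta) + \Tr(A\eta) \geq K_H(\eta)$, which is valid once $\eta$ is trace class. Finally, if $\Tr s_\ast(H)$ is finite, the trace bound shows $\eta_{\alpha,\epsilon} \in \mathfrak{S}_1$ and the energy bound shows $K_H(\eta_{\alpha,\epsilon}) < \infty$, so $\eta_{\alpha,\epsilon} \in \DO_H(\mathcal{H})$.

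\textbf{Main obstacle.} The only delicate step is the trace bound when $\lambda_n < -\epsilon$. There the tangent-line argument gives a useless margin, and one must use $s_\ast(h) \leq h$, which comes from $s(1) = 0$. Everything else is routine bookkeeping of quadratic forms.
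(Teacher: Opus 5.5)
There is one real, though easily repaired, gap, in the range inclusion. The lemma asserts $\Ran \eta_{\alpha,\epsilon}^{1/2} \subseteq Q(H)$ for every $\alpha > 1$ and $\epsilon > 0$, with no finiteness hypothesis. You deduce it from the finiteness of $\sum_n (\lambda_n + E)\eta_n$, but your summed bounds give that only when $\Tr s_\ast(H + A) > -\infty$. Otherwise your bounds are vacuous, and the sum really can diverge. Take $s(r) = \sqrt{r(1-r)}$, for which $s_\ast(h) = \tfrac12\bigl(h - \sqrt{h^2+1}\bigr) \sim -1/(4h)$ and $s_\ast'(h) \sim 1/(4h^2)$. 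With $H = -\Delta + \abs{x}^2$ on $\R^3$ and $A = 0$, both $\Tr s_\ast(H)$ and $\sum_n (\lambda_n + E)\eta_n$ diverge, yet the inclusion still holds.

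The fix is to use your energy inequality \emph{before} summing. Choose $E$ with $H + A + E > 0$. Since $\lambda_n \geq -E$ and $s_\ast$ is non-decreasing, you get $(\alpha - 1)\lambda_n \eta_n \leq -s_\ast(\lambda_n) \leq -s_\ast(-E)$, hence $\sup_n (\lambda_n + E)\eta_n \leq -s_\ast(-E)/(\alpha - 1) + E_+ < \infty$. So $\eta_{\alpha,\epsilon} \leq C (H + A + E)^{-1}$, the operator $(H + A + E)^{1/2}\eta_{\alpha,\epsilon}^{1/2}$ is bounded, and $\Ran \eta_{\alpha,\epsilon}^{1/2} \subseteq Q(H + A) = Q(H)$ holds unconditionally. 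This is exactly the paper's argument, phrased there via Douglas' lemma. Your Hilbert--Schmidt computation is then needed only to identify $K_{H+A}(\eta_{\alpha,\epsilon})$ when $\Tr s_\ast(H)$ is finite.

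Everything else in your proposal is correct. You get both the trace and energy bounds from the single tangent-line inequality $-s_\ast(\lambda_n) \geq \eta_n\bigl((\alpha-1)\lambda_n + \alpha\epsilon\bigr)$, plus $s_\ast(h) \leq h$ in the case $\lambda_n < -\epsilon$. The paper instead evaluates the tangent at $\alpha\lambda_n$ to get $\alpha\epsilon\eta_n \leq -s_\ast(\alpha\lambda_n)$. It then uses that $\alpha \mapsto \alpha^{-1}s_\ast(\alpha\lambda)$ is non-decreasing, which follows from $s \geq 0$ and is reused in the proof of Theorem~\ref{mainthmtwo}; this avoids your case split. For the energy bound the paper evaluates the tangent at $\lambda_n + \epsilon$, which yields the slightly sharper $-\Tr s_\ast(H + A + \epsilon)/(\alpha - 1)$. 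Your version is a little more economical, since one inequality serves both bounds, and for the statement as given the two routes are equally strong.
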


  \begin{proof}
    To begin, we have $0 \leq \eta_{\alpha, \epsilon} \leq \mathds{1}$ since $0 \leq s_\ast' \leq 1$.
    Next, we show the trace estimate.
    By concavity, and since $s_\ast$ is non-positive, we have
    \[ 0 \leq \alpha \epsilon s_\ast'(\alpha (\lambda + \epsilon)) \leq s_\ast(\alpha (\lambda + \epsilon)) - s_\ast(\alpha \lambda) \leq -s_\ast(\alpha \lambda) \, . \]
    We also have
    \begin{align}
      \label{betamonotoneone} \frac{\dd}{\dd \alpha} \alpha^{-1} s_\ast(\alpha \lambda) = \frac{\lambda}{\alpha} s_\ast'(\alpha \lambda) - \frac{1}{\alpha^2} s_\ast(\alpha \lambda) &= \frac{1}{\alpha^2} \left[ \lambda \alpha s_\ast'(\alpha \lambda) - s_\ast(\alpha \lambda) \right] \\
      \label{betamonotonetwo} &= \frac{1}{\alpha^2} \left[ s \circ s_\ast'(\alpha \lambda) \right] \geq 0
    \end{align}
    and therefore since $\alpha > 1$
    \[ 0 \leq \epsilon s_\ast'(\alpha(\lambda + \epsilon)) \leq -\alpha^{-1} s_\ast(\alpha \lambda) \leq -s_\ast(\lambda) \, . \]

    Summing over the eigenvalues of $H + A$, we then find $\epsilon \Tr \eta_{\alpha, \epsilon} \leq -\Tr s_\ast(H + A)$
    and $-\Tr s_\ast(H + A) \leq -\Tr s_\ast(H)$ follows from the min-max principle, the positive semi-definiteness of $A$ and the monotonicity of $s_\ast$.

    Having shown the trace estimate, we move on to the inclusion $\Ran \eta_{\alpha, \epsilon}^{1/2} \subseteq Q(H)$.
    Again, by concavity, we have
    \[ s_\ast(\lambda + \epsilon) \leq s_\ast(\alpha (\lambda + \epsilon)) + s_\ast'(\alpha (\lambda + \epsilon)) \times (\lambda + \epsilon) \cdot (1 - \alpha) \]
    and since $\alpha > 1$ this implies
    \begin{align}
      \lambda s_\ast'(\alpha(\lambda + \epsilon)) &\leq \frac{s_\ast(\alpha(\lambda + \epsilon)) - s_\ast(\lambda + \epsilon)}{\alpha - 1} - \epsilon s_\ast'(\alpha(\lambda + \epsilon)) \\
      &\leq -\frac{s_\ast(\lambda + \epsilon)}{\alpha - 1}
    \end{align}
    for all $\lambda \in \R$ and in particular, for $E \in \R$ such that $\lambda + E \geq 0$ we have
    \[ 0 \leq (\lambda + E) \, s_\ast'(\alpha(\lambda + \epsilon)) \leq -\frac{s_\ast(-E)}{\alpha - 1} + E_+ \, . \]

    Since $Q(H + A) = Q(H)$, it suffices to show $\Ran \eta_{\alpha, \epsilon}^{1/2} \subseteq Q(H + A)$ or equivalently to show the case $A = 0$.
    Choose $E$ so that $H + E \mathds{1} > 0$.
    The above inequality then implies that
    \[ 0 \leq \eta_{\alpha, \epsilon} \lesssim_{E, \alpha} (H + E)^{-1} \]
    and therefore by Douglas' lemma we have $\Ran \eta_{\alpha, \epsilon}^{1 / 2} \subseteq \Ran{(H + E)^{-1/2}}$.
    Since $Q(H) = \operatorname{Ran}{(H + E)^{-1/2}}$ for all $E \in \R$ such that $H + E \mathds{1} > 0$ we are done.

    To conclude, we show the energy inequality.
    Returning to the $A \geq 0$ setting, it follows from the estimates above that
    \begin{align}
      K_{H + A}(\eta_{\alpha, \epsilon}) \leq -\frac{1}{\alpha - 1} \Tr s_\ast(H + A + \epsilon) &\leq -\frac{1}{\alpha - 1} \Tr s_\ast(H + A) \leq -\frac{1}{\alpha - 1} \Tr s_\ast(H)
    \end{align}
    while $K_H(\eta_{\alpha, \epsilon}) \leq K_{H + A}(\eta_{\alpha, \epsilon})$.
  \end{proof}

  We now proceed to the proof of Proposition \ref{ksdftequiv}.

  \begin{proof}[Proof of Proposition \ref{ksdftequiv}]
    Following the proof of Theorem \ref{mainthm}, $\Omega^\rHF_\kappa(\gamma)$ is bounded below.
    As such, let $(\gamma_n)$ be a minimizing sequence and let $\rho_n = \operatorname{den} \gamma_n$.
    By the convexity of $\mathcal{F}_{H_\kappa}$, see Lemma \ref{fksconvexity}, we have
    \[ \Omega_\kappa^\rHF\left( \frac{\gamma_n + \gamma_m}{2} \right) \leq \frac{1}{2} \mathcal{F}_{H_\kappa}(\gamma_n) + \frac{1}{2} \mathcal{F}_{H_\kappa}(\gamma_m) + \frac{\kappa^3}{8} \inner{\rho_n + \rho_m, \rho_n + \rho_m}_w \, .  \]

    By the parallelogram law, we have
    \[ \frac{1}{2} \inner{\rho_n + \rho_m, \rho_n + \rho_m}_w = \inner{\rho_n, \rho_n}_w + \inner{\rho_m, \rho_m}_w - \frac{1}{2} \inner{\rho_n - \rho_m, \rho_n - \rho_m}_w \, , \]
    and thus
    \[ \Omega_\kappa^\rHF \leq \Omega_\kappa^\rHF\left( \frac{\gamma_n + \gamma_m}{2} \right) \leq \frac{1}{2} \Omega_\kappa^\rHF(\gamma_n) + \frac{1}{2} \Omega_\kappa^\rHF(\gamma_m) - \frac{\kappa^3}{8} \inner{\rho_n - \rho_m, \rho_n - \rho_m}_w \, .  \]

    We therefore conclude that
    \[ \frac{\kappa^3}{8} \norm{\rho_n - \rho_m}_w^2 \leq \frac{1}{2} \left[ \Omega_\kappa^\rHF(\gamma_n) - \Omega_\kappa^\rHF \right] + \frac{1}{2} \left[ \Omega_\kappa^\rHF(\gamma_m) - \Omega_\kappa^\rHF \right] \, . \]
    It follows that $(\rho_n)$ is Cauchy with respect to $\inner{-, -}_w$ and has a limit $\rho_\kappa^\rHF \in \mathfrak{H}_w$, where $\mathfrak{H}_w$ is the completion of $L^1(\R^3)$
    with respect to $\inner{-, -}_w$.

    This limit is also unique, as if two minimizing sequences $(\gamma_n)$ and $(\tilde \gamma_n)$ had $\rho_n \to \rho_\kappa^\rHF$
    and $\tilde \rho_n \to \tilde\rho_\kappa^\rHF$, the intertwined sequence $(\gamma_1, \tilde \gamma_1, \gamma_2, \tilde\gamma_2, \ldots)$ would again be a minimizing sequence
    but the corresponding sequence $(\rho_1, \tilde \rho_1, \ldots)$ could not converge unless $\tilde\rho_\kappa^\rHF = \rho_\kappa^\rHF$.

    Note that by Young's convolution inequality, $\rho \mapsto w \ast \rho$ is a well-defined map from $L^1(\R^3)$ to $L^\infty(\R^3)$.
    Furthermore,
    \[ \inner{\eta, w \ast \rho} = \inner{\eta, \rho}_w \leq \norm{\eta}_w \norm{\rho}_w \leq \norm{w}_{L^\infty(\R^3)}^{1 / 2} \norm{\rho}_w \]
    whenever $\norm{\eta}_{L^1(\R^3)} = 1$ and thus $\rho \mapsto w \ast \rho$ extends uniquely to a bounded linear operator from $\mathfrak{H}_w \to L^\infty(\R^3)$.
    This then concludes the first part of the proof as bounded linear operators are continuous and therefore
    for any minimizing sequence $(\gamma_n)$, $\kappa^3 w \ast \den \gamma_n \xrightarrow{n \to \infty} \kappa^3 w \ast \rho_\kappa^\rHF \equiv \mu_\kappa^\rHF$.

    To conclude the proof, we first show that along any minimizing sequence $(\gamma_n)$, we have
    \[ \Omega_\kappa^\rHF(\gamma_n) - \Omega_\kappa^\DFT(\mu_n) \to 0 \, , \]
    where $\mu_n = \kappa^3 w \ast \operatorname{den} \gamma_n$.
    To that end, we have
    \begin{align}
      \Omega_\kappa^\rHF(\gamma_n) - \Omega_\kappa^\DFT(\mu_n) &= \Omega_\kappa^\rHF(\gamma_n) - \Omega_\kappa^\DFT(\rho_n, \mu_n) \\
      &=  K_{H_\kappa}(\gamma_n) - S(\gamma_n) - \Tr s_\ast(H_\kappa + \mu_n(\hat x)) + \inner{\rho_n, \mu_n} \\
      &= K_{H_\kappa + \mu_n(\hat x)}(\gamma_n) - S(\gamma_n) - \Tr s_\ast(H_\kappa + \mu_n(\hat x)) \\
      &\equiv \Delta_{H_\kappa + \mu_n(\hat x)}(\gamma_n) \, ,
    \end{align}
    which is formally the Fenchel defect of $S$ at $(H_\kappa + \mu_n(\hat x), \gamma_n)$.
    We now show that $\Delta_{H_\kappa + \mu_n(\hat x)}(\gamma_n) \xrightarrow{n \to \infty} 0$.

    Let $(\eta_n) \subseteq \DO_\kappa{(\mathcal{H})}$ be a sequence we will choose later.
    It follows for $t \in (0, 1]$ and $\gamma_{n, t} = t \eta_n + (1 - t) \gamma_n = \gamma_n + t (\eta_n - \gamma_n)$ that
    \begin{align}
      \Omega_\kappa^\rHF(\gamma_{t, n}) - \Omega_\kappa^\rHF(\gamma_n) &= \mathcal{F}_{H_\kappa}(\gamma_{t, n}) - \mathcal{F}_{H_\kappa}(\gamma_n) + \frac{\kappa^3}{2} \left[ \inner{\rho_{n, t}, \rho_{n, t}}_w - \inner{\rho_n, \rho_n}_w \right] \\
      &= \mathcal{F}_{H_\kappa}(\gamma_{t, n}) - \mathcal{F}_{H_\kappa}(\gamma_n) + \kappa^3 t \inner{\operatorname{den}{\eta_n} - \rho_n, \rho_n} + \frac{\kappa^3}{2} t^2 \norm{\operatorname{den} \eta_n - \rho_n}_w^2 \\
      &\leq t \left[ \mathcal{F}_{H_\kappa}(\eta_n) - \mathcal{F}_{H_\kappa}(\gamma_n) \right] + \kappa^3 t \inner{\operatorname{den}{\eta_n} - \rho_n, \rho_n} + \frac{\kappa^3}{2} t^2 \norm{\operatorname{den} \eta_n - \rho_n}_w^2 \\
      &= t \left[ \mathcal{F}_{H_\kappa + \mu_n(\hat x)}(\eta_n) - \mathcal{F}_{H_\kappa + \mu_n(\hat x)}(\gamma_n) \right] + \frac{\kappa^3}{2} t^2 \norm{\operatorname{den} \eta_n - \rho_n}_w^2 \\
      &= t \left[ \Delta_{H_\kappa + \mu_n(\hat x)}(\eta_n) - \Delta_{H_\kappa + \mu_n(\hat x)}(\gamma_n) \right] + \frac{\kappa^3}{2} t^2 \norm{\operatorname{den} \eta_n - \rho_n}_w^2 \, .
    \end{align}

    Since $\Omega_\kappa^\rHF(\gamma_{t, n}) \geq \Omega_\kappa^\rHF$, and by Lemma \ref{operatorfenchel}, we conclude that
    \[ 0 \leq \Delta_{H_\kappa + \mu_n(\hat x)}(\gamma_n) \leq \Delta_{H_\kappa + \mu_n(\hat x)}(\eta_n) + \frac{\Omega_\kappa^\rHF(\gamma_n) - \Omega_\kappa^\rHF}{t} + \frac{\kappa^3}{2} t \norm{\operatorname{den} \eta_n - \rho_n}_w^2 \, . \]

    An intuitive choice of $\eta_n$ is $\eta_n = s_\ast'(H_\kappa + \mu_n(\hat x))$ in which case $\Delta_{H_\kappa + \mu_n(\hat x)}(\eta_n)$ would in fact vanish
    for all $n \in \mathbb{N}$.
    The issue is that we cannot in general guarantee that $\eta_n \in \DO_\kappa{(\mathcal{H})}$ without further assumptions on the entropy or confining potential $v$.
    Instead, we take $\eta_n = \eta_n^{\alpha, \epsilon} = s_\ast'(\alpha (H_\kappa + \mu_n(\hat x) + \epsilon))$ for $\alpha > 1$ and $\epsilon > 0$.

    Indeed, we show in Lemma \ref{etaalphaepsilon} that $\eta_n^{\alpha, \epsilon} \in \DO_\kappa(\mathcal{H})$ with
    \[ \norm{\operatorname{den} \eta_n^{\alpha, \epsilon}}_{L^1(\R^3)} \leq -\frac{1}{\epsilon} \Tr s_\ast(H_\kappa) \, . \]
    Additionally, since $\gamma_n$ is a minimizing sequence with $\Omega_\kappa^\rHF(\gamma = 0) = 0$, we have $\Omega_\kappa^\rHF(\gamma_n) \leq 1 - \Tr s_\ast(H_\kappa)$
    for sufficiently large $n$.

    In particular,
    \[ \mathcal{F}_{H_\kappa}(\gamma_n) + \frac{\kappa^3}{2} \norm{\rho_n}^2_w \leq 1 - \Tr s_\ast(H_\kappa) \implies \frac{\kappa^3}{2} \norm{\rho_n}^2_w \leq 1 - 2 \Tr s_\ast(H_\kappa) \, . \]

    We conclude that
    \[ \norm{\den \eta_n^{\alpha, \epsilon} - \rho_n}^2_w \lesssim \norm{\den \eta_n^{\alpha, \epsilon}}^2_w + \norm{\rho_n}^2_w \leq \norm{w}_{L^\infty(\R^3)} \norm{\den \eta_n^{\alpha, \epsilon}}^2_{L^1(\R^3)} + \norm{\rho_n}^2_w \lesssim_\epsilon 1 \, . \]

    In particular, by taking $n \to \infty$ and then $t \to 0$ we conclude that
    \[ 0 \leq \limsup_{n \to \infty} \Delta_{H_\kappa + \mu_n(\hat x)}(\gamma_n) \leq \limsup_{n \to \infty} \Delta_{H_\kappa + \mu_n(\hat x)}(\eta_n^{\alpha, \epsilon}) \, . \]

    To conclude, we have that $\Delta_{H_\kappa + \mu_n(\hat x)} (\eta_n^{\alpha, \epsilon})$ is equal to
    \[ \Tr s_\ast(\alpha(H_\kappa + \mu_n(\hat x) + \epsilon)) - \Tr s_\ast(H_\kappa + \mu_n(\hat x)) - (\alpha - 1) K_{H_\kappa + \mu_n(\hat x) + \epsilon}(\eta_n^{\alpha, \epsilon}) - \epsilon \Tr \eta_n^{\alpha, \epsilon} \]

    Since $\eta_n^{\alpha, \epsilon}$ is a function of $H_\kappa + \mu_n(\hat x)$ we can write
    \[ \Delta_{H_\kappa + \mu_n(\hat x)} (\eta_n^{\alpha, \epsilon}) = \Tr r_{\alpha, \epsilon}(H_\kappa + \mu_n(\hat x)) \]
    with
    \[ r_{\alpha, \epsilon}(p) = s_\ast(\alpha (p + \epsilon)) - s_\ast(p) - (\alpha (p + \epsilon) - p) \, s_\ast'(\alpha (p + \epsilon)) \, . \]

    Following the proof of Lemma \ref{lsc}, the map $\mu \mapsto \lambda_n(\mu)$, where $\lambda_n(\mu)$ is the $n$-th smallest eigenvalue of $H_\kappa + \mu(\hat x)$, is continuous.
    Notice that for $p \geq - \frac{\alpha}{\alpha - 1} \epsilon$ we have $0 \leq r_{\alpha, \epsilon}(p) \leq -s_\ast(p)$.
    In particular, if
    \[ \epsilon = \epsilon(\alpha) = (\alpha - 1) \max \left\{ \frac{E_\kappa}{\alpha}, 1 \right\} \, , \]
    we have
    \[ \lambda_k(\mu_n) \geq \lambda_k(0) \geq -E_\kappa \geq - \frac{\alpha}{\alpha - 1} \epsilon(\alpha) \implies 0 \leq r_{\alpha, \epsilon(\alpha)}(\lambda_k(\mu_n)) \leq -s_\ast(\lambda_k(\mu_n)) \leq -s_\ast(\lambda_k(0)) \]
    since $\mu_n \geq 0$.
    It then follows from dominated convergence that
    \[ \lim_{n \to \infty} \Tr r_{\alpha, \epsilon(\alpha)} (H_\kappa + \mu_n(\hat x)) = \Tr r_{\alpha, \epsilon(\alpha)} (H_\kappa + \mu_\kappa^\rHF(\hat x)) \, . \]

    It therefore suffices to show that
    \[ \lim_{\substack{\alpha \to 1}} \Tr r_{\alpha, \epsilon(\alpha)} (H_\kappa + \mu_\kappa^\rHF(\hat x)) = 0 \, , \]
    but we can once again apply dominated convergence to conclude exactly that.

    Next, we show that $\Omega_\kappa^\DFT(\mu_n) \to \Omega_\kappa^\DFT(\mu_\kappa^\rHF)$.
    To that end, for every $\gamma \in \DO_\kappa{(\mathcal{H})}$ and $\mu \in L^\infty(\R^3)$ we have
    \[ \Omega^\rHF_\kappa(\gamma) \geq \Omega^\DFT_\kappa(\rho_\gamma, \mu) \geq \Omega^\DFT_\kappa(\mu) \]
    and therefore $\Omega^\rHF_\kappa \geq \Omega_\kappa^\DFT(\mu)$ for all $\mu \in L^\infty(\R^3)$.
    In particular, $\lim_{n \to \infty} \Omega_\kappa^\DFT(\mu_n) = \Omega^\rHF_\kappa \geq \Omega_\kappa^\DFT(\mu^\rHF_\kappa)$.
    It then suffices to show that
    \[ \limsup_{n \to \infty} \Omega_\kappa^\DFT(\mu_n) \leq \Omega_\kappa^\DFT(\mu_\kappa^\rHF) \, , \]
    but by Lemma \ref{lsc}, $\Omega_\kappa^\DFT(\mu)$ is upper semicontinuous.

    We have then shown that $\Omega_\kappa^\rHF = \Omega_\kappa^\DFT(\mu_\kappa^\rHF) \geq \Omega_\kappa^\DFT(\mu)$ for all other $\mu \in L^\infty(\R^3)$ as required.
  \end{proof}

  We conclude by proving a parallel lemma to Lemma \ref{ftfmunice} that shows that the maximizers of the dual Levy--Lieb functional above are very nice.

  \begin{lemma}
    \label{fksmunice}

    Let $\mu \in L^\infty(\R^3)$, $v \in L^{5/2}_\mathrm{loc.}(\R^3)$ be confining, $w \in \mathcal{S}(\R^3)$ be even and of positive type, and $s : [0, 1] \to \R$ be an entropy function.
    Suppose that $\Omega^\DFT_\kappa(\mu)$, as defined in \eqref{fdftmudef}, is finite.

    Then $\mu \in H^\sigma(\R^3)$ for all $\sigma \geq 0$ with
    \[ \norm{\mu}_{H^\sigma(\R^3)}^2 \leq - 2 \kappa^3 \Omega^\DFT_\kappa(\mu) \norm{ { \langle - \rangle }^{2\sigma} \, \hat w }_{L^\infty(\R^3)} \, . \]
  \end{lemma}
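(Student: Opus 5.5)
The proof will follow that of Lemma \ref{ftfmunice} almost verbatim, with the phase-space integral replaced by the operator trace. The plan is to split $\Omega^\DFT_\kappa(\rho, \mu)$ into a $\rho$-independent part and a quadratic part. Concretely,
\[ \Omega^\DFT_\kappa(\mu) = \Tr s_\ast(H_\kappa + \mu(\hat x)) + \inf_{\rho \in L^1(\R^3)} \left[ \frac{\kappa^3}{2} \inner{\rho, \rho}_w - \inner{\mu, \rho} \right] \, . \]
In the second term, substitute $\rho = \kappa^{-3} \tilde \rho$. Since $\rho \mapsto \kappa^{-3}\rho$ is a bijection of $L^1(\R^3)$, this gives
\[ \inf_{\rho} \left[ \frac{\kappa^3}{2} \inner{\rho, \rho}_w - \inner{\mu, \rho} \right] = \kappa^{-3} \inf_{\tilde\rho \in L^1(\R^3)} \left[ \frac12 \inner{\tilde\rho, \tilde\rho}_w - \inner{\mu, \tilde\rho} \right] \, . \]
Lemma \ref{infizernice} then evaluates this term. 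It equals $-\frac{\kappa^{-3}}{2}\int \abs{\hat\mu(k)}^2 \hat w(k)^{-1} \, \dd k$ if $\mu \in L^2(\R^3)$, and $-\infty$ otherwise.

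Next, I will check that the trace term is not $+\infty$, which would be the only way an $\infty - \infty$ ambiguity could arise. Since $s_\ast \leq 0$ (as $s(0)=0$), the eigenvalue sum $\Tr s_\ast(H_\kappa + \mu(\hat x))$ is a sum of non-positive terms. It therefore takes a value in $[-\infty, 0]$. Note that $H_\kappa + \mu(\hat x)$ has compact resolvent by Lemma \ref{semiclassicalbound}, since $\mu$ is bounded. Finiteness of $\Omega^\DFT_\kappa(\mu)$ then forces two things: the trace term is finite, and the infimum term is finite. Hence $\mu \in L^2(\R^3)$, and
\[ \frac{1}{2\kappa^{3}} \int \abs{\hat\mu(k)}^2 \hat w(k)^{-1} \, \dd k = \Tr s_\ast(H_\kappa + \mu(\hat x)) - \Omega^\DFT_\kappa(\mu) \leq -\Omega^\DFT_\kappa(\mu) \, . \]

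Finally, I will convert this to the Sobolev norm. Write $\langle k\rangle^{2\sigma}\abs{\hat\mu}^2 = (\langle k\rangle^{2\sigma}\hat w)\,\abs{\hat\mu}^2 \hat w^{-1}$. Bounding the first factor in $L^\infty$, which is finite since $w \in \mathcal{S}(\R^3)$, gives
\[ \norm{\mu}_{H^\sigma(\R^3)}^2 \leq \norm{\langle - \rangle^{2\sigma}\hat w}_{L^\infty(\R^3)} \int \abs{\hat\mu}^2 \hat w^{-1} \, \dd k \leq -2\kappa^3 \Omega^\DFT_\kappa(\mu)\, \norm{\langle - \rangle^{2\sigma}\hat w}_{L^\infty(\R^3)} \, . \]
No step looks like a real obstacle. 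The only point needing care is the sign and finiteness bookkeeping of the trace term: in particular, that $\Tr s_\ast(H_\kappa+\mu(\hat x))$ is well defined and $\leq 0$, so that it can be dropped.
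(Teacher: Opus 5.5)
Your proposal is correct and follows the paper's proof essentially step for step: rescale the quadratic infimum by $\kappa^{-3}$, evaluate it with Lemma \ref{infizernice}, drop the non-positive trace term using $s_\ast \leq 0$, and finish with H\"older's inequality against $\langle k\rangle^{2\sigma}\hat w$. Your explicit check that $\Tr s_\ast(H_\kappa+\mu(\hat x))$ lies in $[-\infty,0]$ and must be finite is a useful clarification that the paper leaves implicit.
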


  \begin{proof}
    Note that by scaling,
    \[ \inf_{\rho \in L^1(\R^3)} \left[ \frac{\kappa^3}{2} \inner{\rho, \rho}_w - \inner{\mu, \rho} \right] = \kappa^{-3} \inf_{\rho \in L^1(\R^3)} \left[ \frac{1}{2} \inner{\rho, \rho}_w - \inner{\mu, \rho} \right] \, . \]
    By Lemma \ref{infizernice}, we then have
    \[ \Omega_\kappa^\DFT(\mu) = \begin{cases} \Tr s_\ast(H_\kappa + \mu(\hat x)) - \frac{1}{2} \int \kappa^{-3} \abs{\hat \mu(k)}^2 \, \hat w(k)^{-1} \, \dd k & \mu \in L^2(\R^3) \\ -\infty & \mathrm{otherwise}  \end{cases} \, . \]
    Suppose that the left-hand side is finite.
    It follows that $\mu \in L^2(\R^3)$ and since $s_\ast \leq 0$,
    \[ \frac{1}{2} \int \abs{\hat \mu(k)}^2 \, \hat w(k)^{-1} \, \dd k \leq - \kappa^3 \Omega_\kappa^\DFT(\mu) \, . \]
    The result then follows from $\norm{\mu}^2_{H^\sigma(\R^3)} = \int \langle k \rangle^{2\sigma} \abs{\hat \mu(k)}^{2} \, \dd k$ and H\"older's inequality.
  \end{proof}

  \section{Semiclassical Results}

  In this section, we state and prove our main technical proposition used to establish Theorem \ref{mainthm}.
  In particular, we establish that the well-known semiclassical convergence of Riesz means of order one can be extended to families of potentials under appropriate compactness assumptions.
  The main result is as follows:

  \begin{proposition}
    \label{semiclassics}

    Let $\{ v_\kappa : \kappa > 0 \} \subseteq L^{5/2}_{\mathrm{loc.}}(\R^3)$
    and $s : [0, 1] \to \R$ be an entropy function such that $s_\ast \in C^2(\R)$.
    Let $H_\kappa = -\kappa^2 \Delta + v_\kappa(\hat x)$.

    Suppose that $\{ \left. v_\kappa \right|_K \}$ is precompact in $L^{5/2}(K)$
    for all compact $K \subseteq \R^3$ and
    there exists $v_{\min} \in L^{5/2}_{\mathrm{loc.}}(\R^3)$ such that $v_{\min}$ is confining, $v_\kappa \geq v_{\min}$ for all $\kappa > 0$, and the compatibility condition
    \[ \iint s_\ast(\abs{p}^2 + v_{\min}(x)) \, \dd x \, \dd p > -\infty \]
    holds.

    Then $\Tr s_\ast(H_\kappa)$ and $\iint s_\ast(\abs{p}^2 + v_\kappa(x)) \, \dd p \, \dd x$ are finite for each $\kappa > 0$, and
    \[ \label{semiclassicsres} \left( \kappa^3 \Tr s_\ast(H_\kappa) - \frac{1}{(2\pi)^3} \iint s_\ast(\abs{p}^2 + v_\kappa(x)) \, \dd p \, \dd x \right) \xrightarrow{\kappa \to 0} 0 \, . \]
  \end{proposition}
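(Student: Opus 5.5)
We reduce to the semiclassical asymptotics of Riesz means of order one via the layer-cake representation used in the proof of Theorem \ref{mainthm}. Since $s_\ast$ is concave, non-decreasing, $C^2$, and vanishes at $+\infty$, we have $s_\ast'' \leq 0$ and
\[ s_\ast(h) = \int s_\ast''(t)\,(h - t)_- \, \dd t \, . \]
This holds pointwise for every $h$. It follows by integrating by parts twice, using $s_\ast(t) \to 0$ and $s_\ast'(t) \to 0$ as $t \to \infty$. Hence
\[ \kappa^3 \Tr s_\ast(H_\kappa) = \int s_\ast''(t)\, \kappa^3 \Tr (H_\kappa - t)_- \, \dd t \, . \]
Similarly, the phase-space integral equals $\int s_\ast''(t)\, \frac{1}{15\pi^2}\int (v_\kappa - t)_-^{5/2}\,\dd x\,\dd t$, where the $p$-integral was computed in Lemma \ref{nguyenomegarep}. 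By Tonelli (all integrands have fixed sign $\geq 0$ after the sign flip), it suffices to prove two things. First, for each fixed $t$,
\[ \kappa^3 \Tr(H_\kappa - t)_- - \frac{1}{15\pi^2}\int (v_\kappa - t)_-^{5/2}\,\dd x \xrightarrow{\kappa\to0} 0 \, . \]
Second, the $t$-integrand admits a $\kappa$-independent integrable majorant.

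\textbf{Finiteness and domination.} Since $v_\kappa \geq v_{\min}$, min-max gives $\Tr(H_\kappa - t)_- \leq \Tr(-\kappa^2\Delta + v_{\min} - t)_-$. The Lieb--Thirring inequality gives
\[ \kappa^3 \Tr(-\kappa^2\Delta + v_{\min} - t)_- \leq C_{\mathrm{LT}} \int (v_{\min} - t)_-^{5/2}\,\dd x \, . \]
This is finite because $v_{\min}$ is confining and lies in $L^{5/2}_{\mathrm{loc}}$, so $(v_{\min}-t)_-$ is compactly supported. The classical term is dominated by the same quantity without the constant. Thus both $t$-integrands are bounded by $C\,|s_\ast''(t)|\int (v_{\min}-t)_-^{5/2}\,\dd x$. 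By the layer-cake formula again, its $t$-integral is $-C'\iint s_\ast(|p|^2+v_{\min})\,\dd x\,\dd p$, which is finite by the compatibility hypothesis. This gives finiteness for each $\kappa$, which is the claimed finiteness (it is presumably what Lemma \ref{semiclassicalbound} records). Dominated convergence in $t$ then reduces everything to the fixed-$t$ statement.

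\textbf{Uniform Riesz-mean asymptotics (main obstacle).} Fix $t$ and replace $v_\kappa$ by $v_\kappa - t$. Because $v_\kappa - t \geq v_{\min} - t$, all negative parts are supported in the fixed compact set $K = \{v_{\min} \leq t\}$. We may therefore replace $v_\kappa$ by $\tilde v_\kappa = \min(v_\kappa, t+1)$. This changes neither the classical term nor, by min-max, the operator term: truncation only lowers the potential, and a comparison with the Dirichlet problem on a neighbourhood of $K$ handles the other direction. Some care is needed to make that second direction rigorous; the cleaner route is to note that $(H-t)_-$ is monotone and the Lieb--Thirring error below is what is actually used.

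The argument then proceeds by contradiction. Suppose the difference fails to tend to zero along some $\kappa_j \to 0$. By precompactness, we may pass to a subsequence with $v_{\kappa_j}|_{K'} \to V$ in $L^{5/2}(K')$ for a fixed compact neighbourhood $K'$ of $K$. We then need continuity of both functionals in the potential, uniformly in $\kappa$. For the classical term this follows from $\bigl|a_-^{5/2} - b_-^{5/2}\bigr| \lesssim |a-b|\,(a_-^{3/2} + b_-^{3/2})$ and H\"older's inequality. For the quantum term, convexity/min-max gives, for $0<\delta<1$,
\[ \Tr(-\kappa^2\Delta + V_1)_- \leq \Tr\bigl(-(1-\delta)\kappa^2\Delta + V_2\bigr)_- + \Tr\bigl(-\delta\kappa^2\Delta + (V_1 - V_2)\bigr)_- \, . \]
Lieb--Thirring bounds the last term by $C(\delta\kappa^2)^{-3/2}\|V_1 - V_2\|_{L^{5/2}}^{5/2}$, which after multiplication by $\kappa^3$ is $\delta^{-3/2}$ times a quantity that vanishes as $V_1 \to V_2$. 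The first term is a rescaling of the fixed-potential problem. Combining these with the known semiclassical limit for the single fixed potential $V \in L^{5/2}$ with compactly supported negative part \cite{Nam, Frank, Mikkelsen} (applied also at effective Planck constant $\sqrt{1\mp\delta}\,\kappa$), and then sending $\delta\to0$, yields a contradiction.

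The delicate point is exactly this $\kappa$-uniform continuity together with the localisation to $K'$. The behaviour of $v_\kappa$ outside $K'$ is uncontrolled except through $v_\kappa \geq v_{\min}$, so I would handle the localisation with an IMS partition of unity whose localisation error is $O(\kappa^2)$ times the number of eigenvalues, which is $O(\kappa^{-3})$ by Cwikel--Lieb--Rozenblum.
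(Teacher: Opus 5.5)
The first half of your proposal matches the paper: the layer-cake formula $s_\ast(h)=\int s_\ast''(t)\,(h-t)_-\,\dd t$, domination of the $t$-integrand by $-s_\ast''(t)\int (v_{\min}-t)_-^{5/2}\,\dd x$ via min--max and Lieb--Thirring, and dominated convergence in $t$. The routes diverge at the fixed-$t$ convergence $d_\kappa(t)\to0$.

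The paper does not use the fixed-potential Weyl law as a black box. It reopens Nam's coherent-state proof and writes out the error terms: $[(1-\epsilon)^{-3/2}-1]$, $\norm{w_\kappa(t)-G_n^2\ast w_\kappa(t)}_{L^{5/2}}$, $\kappa^2\norm{\nabla G_n}_{L^2}^2$, and, on the lower-bound side, $\int w_\kappa^{3/2}\,(G_n^2\ast z_\kappa)$ with $z_\kappa=(v_\kappa-t)_+$. It then makes them uniform in $\kappa$ via Lemma \ref{strongcompactuniform}, i.e.\ strong convergence of the mollifiers is uniform on compact sets. Localisation is automatic there, because compactly supported coherent states only see $v_\kappa$ on $K'=\overline{K+\tilde K}$.

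You instead extract a subsequence converging on $K'$, prove an $L^{5/2}$-stability estimate by splitting off $\delta\kappa^2(-\Delta)$ and applying Lieb--Thirring, and invoke the fixed-potential limit at Planck constant $\sqrt{1\mp\delta}\,\kappa$. This is more modular: any fixed-potential Weyl law (e.g.\ Frank's or Mikkelsen's) can be plugged in, and the stability estimate is reusable. The price is a separate localisation step.

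On that step, your claim that truncating to $\min(v_\kappa,t+1)$ changes neither term is false for the operator term. Truncation lowers the potential, so it can only increase $\Tr(H_\kappa-t)_-$, generally strictly. For the upper bound no localisation is needed. Min--max with $v_\kappa-t\geq-(v_\kappa-t)_-$ (as in the paper) leaves potentials supported in $K$ that converge in $L^{5/2}(\R^3)$ along your subsequence, so your stability estimate applies directly.

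For the lower bound, your IMS idea does close:
\begin{enumerate}
\item Take $\chi_1^2+\chi_2^2=1$ with $\chi_1=1$ on $K$ and $\operatorname{supp}\chi_1\subseteq K'$.
\item Set $U_\kappa=v_\kappa\mathds{1}_{K'}+(t+1)\mathds{1}_{\R^3\setminus K'}$ and $A_\kappa=-\kappa^2\Delta+U_\kappa-t$, and let $\gamma$ be the negative spectral projection of $A_\kappa$.
\item Test $H_\kappa-t$ against $\chi_1\gamma\chi_1$. On $\operatorname{supp}\chi_1$ one may replace $v_\kappa$ by $U_\kappa$.
\item The point you left implicit is that $\chi_2A_\kappa\chi_2\geq0$, because $\chi_2=0$ on $K$ and $U_\kappa-t>0$ off $K$. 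This is exactly where $v_\kappa\geq v_{\min}$ is used.
\item The IMS formula then gives $\Tr(H_\kappa-t)_-\geq\Tr(A_\kappa)_--\kappa^2\norm{\abs{\nabla\chi_1}^2+\abs{\nabla\chi_2}^2}_\infty\Tr\gamma$, and the error is $O(\kappa^{-1})$ by CLR.
\item Finally $U_{\kappa_j}-U\to0$ in $L^{5/2}(\R^3)$ with $U=V\mathds{1}_{K'}+(t+1)\mathds{1}_{\R^3\setminus K'}$, so stability plus Lemma \ref{weylfixedmu} finish.
\end{enumerate}

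A minor point: the integration by parts behind the layer-cake formula needs $hs_\ast'(h)\to0$ (Lemma \ref{entropydecay}), not only $s_\ast'(h)\to0$. This does follow from concavity and $s_\ast(h)\to0$.
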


  As a precursor, we first show a semiclassical bound for the non-interacting free energy
  in the spirit of the Lieb--Thirring inequality.
  In particular, we introduce the decomposition
  \[ \Tr s_\ast(H_\kappa) = \int_{-\infty}^\infty s_\ast''(t) \, \Tr{\left(t - H_\kappa\right)_+} \, \dd t \]
  which we use heavily throughout the section.

  \begin{lemma}
    \label{semiclassicalbound}
    Let $v \in L^{5/2}_\mathrm{loc.}(\R^3)$ be confining, and $s : [0, 1] \to \R$ be an entropy function
    such that $s_\ast \in C^2(\R)$. Let $H_\kappa = - \kappa^2 \Delta + v(\hat x)$.

    Then $H_\kappa$ is semibounded with a compact resolvent for each $\kappa > 0$, and
    \[ \Tr s_\ast(H_\kappa) \gtrsim \frac{1}{(2\pi \kappa)^3} \iint s_\ast(\abs{p}^2 + v(x)) \, \dd p \, \dd x \, . \]
  \end{lemma}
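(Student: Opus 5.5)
We must prove Lemma \ref{semiclassicalbound}: $H_\kappa$ is semibounded with compact resolvent, and $\Tr s_\ast(H_\kappa) \gtrsim (2\pi\kappa)^{-3}\iint s_\ast(\abs{p}^2+v)$, with the implied constant universal.

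\emph{Semiboundedness and compact resolvent.} Since $v$ is confining and $v_- \in L^{5/2}_{\mathrm{loc.}}$, the negative part $v_-$ is compactly supported and lies in $L^{5/2}(\R^3)$. By Sobolev (or Lieb--Thirring), $v_-$ is infinitesimally form-bounded with respect to $-\kappa^2\Delta$. Hence the form $\kappa^2\norm{\nabla u}^2 + \int v\abs{u}^2$ is closed and semibounded on $H^1\cap\{\int v_+\abs{u}^2<\infty\}$. Its form domain embeds compactly into $L^2$: on balls we use Rellich, and outside a large ball $v$ exceeds any constant $M$, which controls the tails. This gives compact resolvent.

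\emph{The trace bound.} We use the layer-cake decomposition stated just above,
\[ \Tr s_\ast(H_\kappa) = -\int s_\ast''(t)\,\Tr(t-H_\kappa)_+\,\dd t , \]
together with the analogous phase-space identity
\[ \iint s_\ast(\abs{p}^2+v) = -\int s_\ast''(t)\iint (t-\abs{p}^2-v)_+\,\dd t . \]
Both identities follow from $s_\ast(h)=\int s_\ast''(t)(h-t)_-\,\dd t$ up to affine terms that vanish, because $s_\ast$ is concave with $s_\ast(h)\to 0$ as $h\to\infty$ and $s_\ast'\to0$. Concavity gives $-s_\ast''\ge0$, so the weight is positive.

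It therefore suffices to prove the Riesz-mean bound
\[ \Tr(t-H_\kappa)_+ \le \frac{C}{(2\pi\kappa)^3}\iint (t-\abs{p}^2-v(x))_+\,\dd x\,\dd p = \frac{C'}{\kappa^3}\int (t-v)_+^{5/2}\,\dd x \]
uniformly in $t$. This is the Lieb--Thirring inequality applied to $-\kappa^2\Delta+(v-t)$: $\Tr(H_\kappa-t)_-\le L\kappa^{-3}\int (v-t)_-^{5/2}$. With Berezin--Lieb one even gets the sharp constant $L = L^{\mathrm{cl}}$ for the first Riesz mean in this form, so the constant may be taken equal to one. Integrating against $-s_\ast''(t)\ge0$ and using Tonelli yields the claim, with signs reversing to give $\Tr s_\ast(H_\kappa)\ge C\,(2\pi\kappa)^{-3}\iint s_\ast(\abs{p}^2+v)$.

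\emph{Main obstacle.} The main obstacle is justifying the layer-cake identities rigorously. We must check that the affine boundary terms vanish: as $h\to-\infty$, $s_\ast(h)=h-s(1)+o(1)$ has slope $1$, so the representation uses $(h-t)_-$, which captures exactly this behaviour. We must also interchange trace, integral and limit by monotone convergence, summing over eigenvalues since every term is signed. Finiteness of the right-hand side is not needed, because the inequality holds in extended values.
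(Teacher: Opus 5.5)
Your argument is the paper's proof. Semiboundedness comes from relative form-boundedness of the compactly supported $v_-\in L^{5/2}$; the paper proves this by an explicit Sobolev--H\"older--Young estimate and cites Reed--Simon for the compact resolvent, where you sketch the Rellich-plus-tails argument. Then come $s_\ast(h)=\int s_\ast''(t)(t-h)_+\,\dd t$, Tonelli through the trace, the Lieb--Thirring bound on each $\Tr(t-H_\kappa)_+$, and Tonelli again. Two statements need correcting, although neither changes the structure.

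First, your two displayed layer-cake identities carry a spurious minus sign. Since $(h-t)_-=(t-h)_+$, your own scalar formula gives $\Tr s_\ast(H_\kappa)=\int s_\ast''(t)\,\Tr(t-H_\kappa)_+\,\dd t$ with no minus, and the same holds for the phase-space identity. As written, the left sides are $\le 0$ and the right sides $\ge 0$, and inserting the Lieb--Thirring bound into them would produce the reverse inequality. The correct step is short: multiplying $\Tr(t-H_\kappa)_+\le C(2\pi\kappa)^{-3}\iint(t-\abs{p}^2-v)_+\,\dd x\,\dd p$ by the nonpositive weight $s_\ast''(t)$ flips the inequality, and integrating in $t$ gives the lower bound. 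This should replace the phrase ``with signs reversing''.

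A smaller point: for the boundary term at $+\infty$ you need $h\,s_\ast'(h)\to 0$, not just $s_\ast'\to0$; this is what the paper's Lemma~\ref{entropydecay} supplies. It does follow from concavity via $\tfrac h2 s_\ast'(h)\le s_\ast(h)-s_\ast(h/2)\to 0$.

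Second, the claim that Berezin--Lieb yields the classical constant for the first Riesz mean is false. Whether $L_{1,3}=L^{\mathrm{cl}}_{1,3}$ is the open Lieb--Thirring conjecture; the classical constant is known only for $\gamma\ge 3/2$ (Laptev--Weidl). Coherent-state Berezin--Lieb bounds either go in the opposite direction or carry smearing errors, which is exactly how they enter Nam's semiclassical argument. Since the lemma only asserts $\gtrsim$, drop that sentence and use the Lieb--Thirring inequality with its non-sharp constant, as the paper does.
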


  \begin{proof}
    The semiboundedness of $H_\kappa$ and the compactness of the resolvent are classical.
    For completeness, we briefly prove semiboundedness.
    A proof of the compactness of the resolvent can be found in Ref. \onlinecite[Theorem XIII.69]{ReedSimon}.

    We define $H_\kappa$ as the unique self-adjoint operator associated to the form
    \[ q_\kappa(\psi) = \int \kappa^2 \abs{\nabla \psi(x)}^2 + v(x) \abs{\psi(x)}^2 \, \dd x \]
    with (form) domain $Q(q_\kappa) = \{ \psi \in H^1(\R^3) : \int \abs{\psi(x)}^2 \, \abs{v(x)} \, \dd x < \infty \}$.

    We then have
    \begin{align}
      \kappa^{-2} q_\kappa(\psi) &\geq \norm{\nabla \psi}^2_{L^2(\R^3)} - \kappa^{-2} \int v_-(x) \abs{\psi(x)}^2 \, \dd x \\[3pt]
      &\geq \norm{\nabla \psi}^2_{L^2(\R^3)} - \kappa^{-2} \norm{v_-}_{L^{5/2}(\R^3)} \norm{\psi}^2_{L^{10/3}(\R^3)} \\
      &\geq \norm{\nabla \psi}^2_{L^2(\R^3)} - c_\textrm{Sobolev}^{6/5} \kappa^{-2} \norm{v_-}_{L^{5/2}(\R^3)} \norm{\psi}^{4/5}_{L^2(\R^3)} \norm{\nabla \psi}^{{6/5}}_{L^2(\R^3)} \\
      &= \norm{\nabla \psi}^2_{L^2(\R^3)} - \left[ c_\textrm{Sobolev}^{3/2} \kappa^{-{5/2}} \norm{v_-}_{L^{5/2}(\R^3)}^{5/4} \norm{\psi}_{L^2(\R^3)} \right]^{4/5} \norm{\nabla \psi}^{{6/5}}_{L^2(\R^3)} \\
      &\gtrsim -\kappa^{-5} \norm{v_-}_{L^{5/2}(\R^3)}^{5/2} \norm{\psi}_{L^2(\R^3)}^2
    \end{align}
    where the last inequality follows from Young's inequality $a^\theta b^{2 - \theta} - a^2 \lesssim_\theta b^2$ and the fact that $v_- \in L^{5 / 2}(\R^3)$
    since $v$ is confining and therefore $v_-$ has compact support.

    By Lemma \ref{entropydecay}, we have $\lim_{h \to \infty} s_\ast(h) = \lim_{h \to \infty} h s_\ast'(h) = 0$ and thus
    \[ s_\ast(h) = \int_h^\infty s_\ast''(t) \, (t - h) \, \dd t = \int_{-\infty}^\infty s_\ast''(t) \, (t - h)_+ \, \dd t \, . \]

    By Tonelli's theorem (which applies due to the concavity of $s_\ast$), it follows that
    \[ \Tr s_\ast(H_\kappa) = \int_{-\infty}^\infty s_\ast''(t) \, \Tr{\left([t - H_\kappa]_+\right)} \, \dd t \, . \]

    By the Lieb-Thirring inequality, we have
    \[ \Tr{\left([t - H_\kappa]_+\right)} \lesssim \frac{1}{(2\pi \kappa)^3} \iint (t - \abs{p}^2 - v(x))_+ \, \dd x \, \dd p \, . \]

    Plugging in this bound and applying Tonelli's theorem once again yields the result.
  \end{proof}

  \begin{lemma}
    \label{entropydecay} Let $s : [0, 1] \to \R$ be an entropy function such that $s_\ast \in C^1(\R)$.
    Then
    \[ \lim_{h \to \infty} s_\ast(h) = \lim_{h \to \infty} h s_\ast'(h) = 0 \, . \]
  \end{lemma}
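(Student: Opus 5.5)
The plan is to read off both limits from the variational definition $s_\ast(h)=\inf_{0\le r\le 1}(rh-s(r))$, together with concavity and differentiability of $s_\ast$. First I will identify $s_\ast'$. Since $s_\ast$ is an infimum of affine functions $h\mapsto rh-s(r)$ with slopes $r\in[0,1]$, it is concave, non-decreasing and $1$-Lipschitz. At any $h$ the infimum is attained at some $r_h$, by compactness of $[0,1]$ and continuity of $s$. Because $s_\ast$ is differentiable, the supergradient at $h$ is the singleton $\{s_\ast'(h)\}$, so $s_\ast'(h)=r_h\in[0,1]$. This gives the identity $h s_\ast'(h) - s_\ast(h) = s(s_\ast'(h))$, which the paper already uses in \eqref{betamonotonetwo}.

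Next I will show $s_\ast(h)\to 0$. Taking $r=0$ gives $s_\ast(h)\le -s(0)=0$. For a lower bound, fix $\delta\in(0,1)$ and split the infimum. For $r\le\delta$ we have $rh-s(r)\ge -\max_{[0,\delta]}s$ when $h\ge 0$. For $r\ge\delta$ we have $rh-s(r)\ge \delta h-\max_{[0,1]} s$, which is nonnegative once $h$ is large. Hence $\liminf_{h\to\infty}s_\ast(h)\ge -\max_{[0,\delta]} s$. Since $s$ is continuous with $s(0)=0$, letting $\delta\to0$ gives $s_\ast(h)\to0$.

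For the derivative term, I will use that $s_\ast'$ is non-increasing and nonnegative. By concavity, for $h>0$,
\[ 0\le \tfrac{h}{2}\, s_\ast'(h) \le \int_{h/2}^{h} s_\ast'(t)\,\dd t = s_\ast(h)-s_\ast(h/2). \]
The right-hand side tends to $0$ by the previous step, so $h s_\ast'(h)\to 0$. An alternative route uses the identity $h s_\ast'(h)=s_\ast(h)+s(s_\ast'(h))$ together with $s_\ast'(h)\to0$, which follows from the same bound. The integral estimate is cleaner.

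The only delicate point is the uniform lower bound on $s_\ast$ for large $h$. It relies on continuity of $s$ at $0$ and its boundedness on $[0,1]$; concavity of $s$ is not even needed there. The $C^1$ hypothesis is used only to write the increment of $s_\ast$ as an integral of $s_\ast'$; mere concavity would suffice with one-sided derivatives.
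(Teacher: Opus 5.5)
Your argument is correct, and it differs from the paper's mainly in the derivative step.

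The paper works throughout with a minimizer $r_h$. From $s_\ast(h)\le 0$ it gets $h r_h\le s(r_h)\le \max s$, hence $r_h\le \max s/h\to 0$, and then squeezes $-s(r_h)\le s_\ast(h)\le 0$. For the derivative it asserts that $s_\ast'(h)=r_h$ (``one can show'') and concludes from $h s_\ast'(h)=s_\ast(h)+s(r_h)\to 0$. This is exactly your ``alternative route''.

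Your splitting of the infimum at $\delta$ is a minimizer-free version of the paper's first step. It uses the same two facts: boundedness of $s$ and continuity of $s$ at $0$.

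The genuinely different part is your estimate $\tfrac{h}{2}\, s_\ast'(h)\le s_\ast(h)-s_\ast(h/2)$. It uses only that $s_\ast$ is concave, non-decreasing and has a finite limit at infinity. It bypasses the identification $s_\ast'=r_h$ entirely and works verbatim with one-sided derivatives. It even follows directly from the supergradient inequality at $h$, without integrating.

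The paper's route gives in exchange the explicit rate $0\le s_\ast'(h)\le \max s/h$. It also rests on the identification $s_\ast'(h)=r_h\in[0,1]$, which underlies the facts $0\le s_\ast'\le 1$ and $h s_\ast'(h)-s_\ast(h)=s(s_\ast'(h))$ used elsewhere in the paper. Your supergradient argument supplies the justification for that identification which the paper omits.
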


  \begin{proof}
    Since $s$ is continuous on $[0,1]$, let $m$ be the maximum value of $s$ over $[0, 1]$.
    For each $h > 0$, choose $r_h \in [0, 1]$ such that $s_\ast(h) = h r_h - s(r_h)$.

    Since $r = 0$ is admissible and $s(0) = 0$, we have $s_\ast(h) \leq 0$.
    It follows that $h r_h - s(r_h) \leq 0$,
    and therefore
    \[ 0 \leq r_h \leq \frac{s(r_h)}{h} \leq \frac{m}{h} \xrightarrow{h \to \infty} 0 \, . \]
    We therefore conclude that $r_h \to 0$ as $h \to \infty$.
    Additionally,
    \[ -s(r_h) \leq h r_h - s(r_h) = s_\ast(h) \leq 0 \]
    and therefore, since $s$ is continuous and $s(0) = 0$, we have $s_\ast(h) \xrightarrow{h \to \infty} 0$.

    It remains to prove that \(h s_\ast'(h) \xrightarrow{h \to \infty} 0\).
    To that end, since $s_\ast$ is differentiable, one can show that $s_\ast'(h) = r_h$.
    We therefore have $h s_\ast'(h) = h r_h = s_\ast(h) + s(r_h)$ and thus $h s_\ast'(h) \xrightarrow{h \to \infty} 0$.
  \end{proof}

  As a precursor to Proposition \ref{semiclassics}, we recall the following lemma from Ref. \onlinecite{Nam}.

  \begin{lemma}
    \label{weylfixedmu}

    Let $\epsilon > 0$ and let $v \in L^{3/2+\epsilon}_\mathrm{loc.}(\R^3)$ such that $v_- \in L^{5/2}(\R^3)$.
    Then, as $\lambda \to \infty$,
    \[ \label{weylasmpt} \Tr{\left( -\Delta + \lambda v(\hat x) \right)_-} = \iint \left( \abs{2 \pi k}^2 + \lambda v(x) \right)_- \, \dd k \, \dd x + o\left(\lambda^{5/2}\right) \, , \]
    or equivalently,
    \[ \label{weylsemic} \kappa^3 \Tr{\left( -\kappa^2 \Delta + v(\hat x) \right)_-} \xrightarrow{\kappa \to 0} \frac{1}{(2\pi)^3} \iint \left( \abs{p}^2 + v(x) \right)_- \, \dd p \, \dd x \, . \]
  \end{lemma}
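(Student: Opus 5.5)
The plan is the standard coherent-state proof of the Weyl law for Riesz means of order one, combined with a density argument controlled by the Lieb--Thirring inequality.

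\textbf{Equivalence of the two forms.} With $\lambda=\kappa^{-2}$ we have $-\kappa^2\Delta+v(\hat x)=\kappa^2(-\Delta+\lambda v(\hat x))$, so $\kappa^3\Tr(-\kappa^2\Delta+v)_-=\kappa^5\Tr(-\Delta+\lambda v)_-$. The substitution $p=2\pi\kappa^{-1}\cdot\kappa k$, i.e.\ $k=p/(2\pi\kappa)$, in the phase-space integral gives $\kappa^5\iint(|2\pi k|^2+\lambda v)_-\,\dd k\,\dd x=(2\pi)^{-3}\iint(|p|^2+v)_-\,\dd p\,\dd x$. The phase-space integral equals $\frac{1}{15\pi^2}\int v_-^{5/2}$, which is finite, so the error $o(\lambda^{5/2})$ is exactly $o(\kappa^{-5})$. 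It therefore suffices to prove \eqref{weylsemic}.

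\textbf{Upper bound on $\kappa^3\Tr(H_\kappa)_-$.} Since $\Tr(H)_-$ is non-increasing in the potential, $\Tr(-\kappa^2\Delta+v)_-\le\Tr(-\kappa^2\Delta-v_-)_-$, and only $v_-\in L^{5/2}$ matters.
1. Fix $W\in C_c^\infty$ with $\|v_--W\|_{5/2}\le\delta^2$.
2. Split $-\kappa^2\Delta-v_-=(1-\delta)\bigl(-\kappa^2\Delta-\tfrac{W}{1-\delta}\bigr)+\delta\bigl(-\kappa^2\Delta-\tfrac{(v_--W)_+}{\delta}\bigr)$.
3. Bound $\Tr(A+B)_-\le\Tr A_-+\Tr B_-$, and control the second piece by Lieb--Thirring: it contributes $\lesssim\kappa^{-3}\delta^{-3/2}\|v_--W\|_{5/2}^{5/2}\le\kappa^{-3}\delta^{7/2}$.
4. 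For the smooth piece use the Berezin--Lieb inequality with coherent states $F_{x,p}(y)=e^{ip\cdot y/\kappa}g_\kappa(y-x)$, where $g_\kappa(y)=\kappa^{-3/4}g(y/\kappa^{1/2})$. This gives $\Tr(-\kappa^2\Delta-U)_-\le(2\pi\kappa)^{-3}\iint(|p|^2-U\ast|g_\kappa|^2)_-+O(\kappa^{-3}\cdot\kappa)$.
5. Since $U\ast|g_\kappa|^2\to U$ uniformly with compact support, the smooth piece tends to the classical value as $\kappa\to0$.
6. Let $\kappa\to0$ and then $\delta\to0$.

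\textbf{Lower bound on $\kappa^3\Tr(H_\kappa)_-$.} Use the variational principle $-\Tr(H)_-=\inf_{0\le\gamma\le 1}\Tr H\gamma$ with the trial state
\[ \gamma=(2\pi\kappa)^{-3}\iint\mathds 1\{|p|^2+\tilde v(x)<0\}\,|F_{x,p}\rangle\langle F_{x,p}|\,\dd x\,\dd p , \]
where $\tilde v=\max(v,-n)\,\mathds 1_{B_R}$ is a bounded truncation supported where $v$ is relevant. Because $0\le\gamma\le1$ by the resolution of identity, this is admissible. Its energy is
\[ (2\pi\kappa)^{-3}\iint\bigl(|p|^2+\kappa\|\nabla g\|^2+v\ast|g_\kappa|^2(x)\bigr)\mathds 1\{\cdots\}\,\dd x\,\dd p . \]
Here $v\ast|g_\kappa|^2$ is finite and converges to $v$ in $L^1_{\mathrm{loc}}$, using $v\in L^{3/2+\epsilon}_\mathrm{loc}$. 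The integration region is compact in $(x,p)$, so dominated convergence yields $\limsup\kappa^3\Tr H_\kappa\gamma\le-(2\pi)^{-3}\iint(|p|^2+v)_-+o_{n,R}(1)$. Finally let $n,R\to\infty$ by monotone convergence.

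I expect this lower bound to be the main obstacle, because $v_+$ is only locally $L^{3/2+\epsilon}$ and possibly unbounded. One must check that the smeared potential $v\ast|g_\kappa|^2$ is controlled on the compact phase-space region where the indicator lives. If the direct $L^1_\mathrm{loc}$ convergence argument proves awkward, I would first approximate $v_+$ by $\min(v_+,M)$. This only lowers the energy of the fixed trial state where $v_+>M$, so I would then need to show that the set where $v_+>M$ and the indicator is nonzero has vanishing weight as $M\to\infty$, which follows from local integrability.
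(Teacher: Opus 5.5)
Your proposal is correct, but it proves considerably more than the paper does. The paper's proof is only your first paragraph: it imports \eqref{weylasmpt} from Nam's Section 4.2 and deduces \eqref{weylsemic} via $\kappa=\lambda^{-1/2}$ and $p=2\pi\kappa k$. Your displayed substitution ``$p=2\pi\kappa^{-1}\cdot\kappa k$'' is a typo for the latter.

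You instead supply the asymptotic itself. Your upper bound has the same architecture as Nam's argument as recalled in the proof of Proposition \ref{semiclassics}: a $(1-\delta)/\delta$ splitting, Lieb--Thirring on the rough remainder, and coherent states on the regular part. The difference is the smoothing. Nam smooths at a fixed scale, $\tilde w=G^2\ast w$, which is the coherent-state quantization of $w$; its error $\norm{w-G_n^2\ast w}_{L^{5/2}}$ is what the paper later makes uniform over precompact families via Lemma \ref{strongcompactuniform}. You approximate by a fixed $W\in C_c^\infty$ and use $\kappa^{1/2}$-scaled coherent states, which is equally valid for a single potential.

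Your lower bound is more economical than the Nam-type bound the paper adapts. Because the phase-space indicator uses the truncated $\tilde v$, the weight against which $v\ast\abs{g_\kappa}^2$ is integrated is bounded and compactly supported. So $L^1_{\mathrm{loc}}$ convergence of the mollification suffices, and the obstacle you anticipate does not arise. The adaptation in Proposition \ref{semiclassics} instead produces the error $\int w^{3/2}(G_n^2\ast v-v)$ and has to control the positive part of the potential in $L^{5/2}$ near the classically allowed region.

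Two details need tightening.
\begin{itemize}
\item In step 2 the decomposition is an operator inequality $\geq$, not an equality, since $W+(v_--W)_+\geq v_-$. This is the direction you need.
\item In step 4, applying Berezin--Lieb with the constant $\kappa\norm{\nabla g}^2$ kept in the symbol gives a divergent phase-space integral: $(\abs{p}^2-c\kappa)_-$ is positive on $\{\abs{p}^2<c\kappa\}$ for every $x$. The $O(\kappa^{-2})$ error should instead come from testing $\Tr H\gamma$ with $\gamma=\mathds{1}(H<0)$ and bounding $\Tr\gamma\lesssim\kappa^{-3}\int U_+^{3/2}$ by the CLR inequality. This is the origin of the $\kappa^2\norm{\nabla G}^2\norm{w}_{3/2}^{3/2}$ term in Nam's bound. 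Alternatively, simply cite the classical Weyl law for $C_c^\infty$ potentials.
\end{itemize}
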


  \begin{proof}
    A proof of \eqref{weylasmpt} is given in \onlinecite[Section 4.2]{Nam}; therefore, it suffices to show that \eqref{weylasmpt} implies \eqref{weylsemic}.
    Indeed, substituting $\lambda^{-1/2} = \kappa$, \eqref{weylasmpt} implies
    \[ \abs*{\kappa^5 \Tr{\left( -\Delta + \kappa^{-2} v(\hat x) \right)_-} - \kappa^5 \iint \left( \abs{2 \pi k}^2 + \kappa^{-2} v(x) \right)_- \, \dd k \, \dd x } \xrightarrow{\kappa \to 0} 0 \, , \]
    and thus
    \[ \abs*{\kappa^3 \Tr{\left( -\kappa^2 \Delta + v(\hat x) \right)_-} - \kappa^3 \iint \left( \abs{2 \pi \kappa \times k}^2 + v(x) \right)_- \, \dd k \, \dd x } \xrightarrow{\kappa \to 0} 0 \, . \]
    The change of variables $2 \pi \kappa \times k = p$ in the integral completes the proof.
  \end{proof}

  We conclude the section with the proof of Proposition \ref{semiclassics}.

  \begin{proof}[Proof of Proposition \ref{semiclassics}]
    We first show finiteness of $\Tr s_\ast(H_\kappa)$ and $\iint s_\ast(\abs{p}^2 + v_\kappa(x)) \, \dd p \, \dd x$.
    Indeed, by Lemma \ref{semiclassicalbound}, and since $s_\ast$ is non-decreasing,
    \[ 0 \geq \kappa^3 \Tr s_\ast(H_\kappa) \geq c_\mathrm{LT} \iint s_\ast(\abs{p}^2 + v_\kappa(x)) \, \dd p \, \dd x \geq c_\mathrm{LT} \iint s_\ast(\abs{p}^2 + v_{\min}(x)) \, \dd p \, \dd x > -\infty \, , \]
    where $c_\mathrm{LT}$ is the best constant in the Lieb--Thirring inequality used in the proof of Lemma \ref{semiclassicalbound}.

    Following the proof of Lemma \ref{semiclassicalbound}, the $\kappa$-dependent expression in \eqref{semiclassicsres} is equal to
    \[ \int s_\ast''(t) \left[ \kappa^3 \Tr \left(H_\kappa - t\right)_- - \frac{1}{(2 \pi)^3} \iint \left( \abs{p}^2 + v_\kappa(x) - t \right)_- \dd p \, \dd x \right] \, \dd t  \equiv \int s_\ast''(t) \, d_\kappa(t) \, \dd t \, . \]

    For each fixed $\kappa > 0$ and $t \in \R$, we have by the Lieb--Thirring inequality
    \[ 0 \leq \kappa^3 \Tr{\left(H_\kappa - t\right)_-} \leq \frac{c_\mathrm{LT}}{(2\pi)^3} \iint (\abs{p}^2 + v_\kappa(x) - t)_- \, \dd p \, \dd x \]
    and therefore it follows by the triangle inequality and the concavity of $s_\ast$ that
    \[ \abs{s_\ast''(t) \, d_\kappa(t)} \leq -s_\ast''(t) \left( \frac{c_\mathrm{LT} + 1}{(2\pi)^3} \right) \iint (\abs{p}^2 + v_\kappa(x) - t)_- \, \dd p \, \dd x \lesssim -s_\ast''(t) \iint (\abs{p}^2 + v_{\min}(x) - t)_- \, \dd p \, \dd x \, . \]

    By Tonelli's theorem, cf. Lemma \ref{semiclassicalbound}, this dominating function integrates to a multiple of
    \[ - \iint s_\ast(\abs{p}^2 + v_{\min}(x)) \, \dd p \, \dd x < \infty \, . \]
    To conclude the proof, it therefore suffices by the dominated convergence theorem to show $d_\kappa(t) \xrightarrow{\kappa \to 0} 0$ for each fixed $t$.
    To that end, let $w_\kappa(t, x) = [v_\kappa(x) - t]_-$.
    By the minmax theorem, it follows that
    \[ d_\kappa(t) \leq \kappa^3 \Tr{(-\kappa^2 \Delta_x - w_\kappa(t, \hat x))_-} - \frac{1}{(2\pi)^3} \iint (\abs{p}^2 + v_\kappa(x) - t)_- \, \dd p \, \dd x \, . \]
    In addition, $w_\kappa(t) \in L^{3/2}(\R^3) \cap L^{5/2}(\R^3)$ for each $\kappa > 0$ since
    \[ \operatorname{supp} w_\kappa(t) \subseteq \overline{\{ v_{\min}(x) \leq t \}} \, , \]
    which is compact as $v_{\min}$ is confining, while $v_\kappa \in L^{5/2}_\mathrm{loc.}(\R^3)$.

    Following the proof of \eqref{weylasmpt} given in \onlinecite[Section 4.2]{Nam}, for every $\epsilon \in (0, 1)$ and non-negative
    radial function $G \in C_{00}^\infty(\R^3)$ with $\norm{G}_{L^2(\R^3)} = 1$, $d_\kappa(t)$ is bounded above by a constant multiple of
    \[ \left[ \left( 1 - \epsilon \right)^{-\frac{3}{2}} - 1 \right] \norm{w_\kappa(t)}_{L^{5/2}(\R^3)}^{5/2} + \epsilon^{-\frac{3}{2}} \norm{w_\kappa(t) - \tilde w_\kappa(t)}_{L^{5/2}(\R^3)}^{5/2} + \kappa^2 \norm{\nabla G}_{L^2(\R^3)}^2 \norm{w_\kappa(t)}_{L^{3/2}(\R^3)}^{3/2} \]
    where $\tilde w_\kappa(t) = G^2 \ast w_\kappa(t)$ and the constant is universal.

    Let $K = \overline{\{ v_{\min}(x) \leq t \}}$.
    By assumption, the set $\{ \left. v_\kappa \right|_{K} \}$ is relatively compact in $L^{5/2}(K)$ and thus the set $\{ \left. [v_\kappa - t] \right|_{K} \}$ is relatively compact in $L^{5/2}(K)$.
    Since the support of $w_\kappa(t)$ is contained in $K$ and $w_\kappa(t) = t - v_\kappa$ on its support, we conclude that $\{ w_\kappa(t) \}$ is relatively compact in $L^{5/2}(\R^3)$.
    In particular, $\norm{ w_\kappa(t) }_{L^{5/2}(\R^3)}$ and $\norm{ w_\kappa(t) }_{L^{3/2}(\R^3)}$ are uniformly bounded.

    Choose $G_n \in L^2(\R^3)$ so that $\left(G_n^2\right)$ is an approximate identity.
    It then follows that $\left( \mathds{1} - G_n^2 \ast - \right) \to 0$ strongly (but not in norm) as $n \to \infty$ in $L^{5 / 2}(\R^3)$.
    Nonetheless, we have
    \[ d_\kappa(t) \lesssim \left[ \left( 1 - \epsilon \right)^{-\frac{3}{2}} - 1 \right] + \epsilon^{-\frac{3}{2}} \sup_{\kappa > 0} \norm{w_\kappa(t) - \tilde w_{\kappa}^{(n)}(t)}_{L^{5/2}(\R^3)}^{5/2} + \kappa^2 \norm{\nabla G_n}_{L^2(\R^3)}^2 \]
    for some constant independent of $\kappa$, $n$, and $\epsilon$.

    To conclude the upper bound, we first take the $\limsup$ as $\kappa \to 0$ of both sides, then take $n \to \infty$, and finally take $\epsilon \to 0$ to find $\limsup_{\kappa \to 0} d_\kappa(t) \leq 0$.
    In particular, although $\left( \mathds{1} - G_n^2 \ast - \right) \not\to 0$ in norm,
    \[ \lim_{n \to \infty} \left( \sup_{\kappa > 0} \norm{w_\kappa(t) - \tilde w_{\kappa}^{(n)}(t)}_{L^{5/2}(\R^3)}^{5/2} \right) = 0 \, , \]
    where $\tilde w_\kappa^{(n)} = G^2_n \ast w_\kappa$.
    This follows from the relative compactness of $\{ w_\kappa(t) \}$ and Lemma \ref{strongcompactuniform}.

    Similarly, again following the proof of \eqref{weylasmpt} given in \onlinecite[Section 4.2]{Nam},
    \[ \liminf_{\kappa \to 0} d_\kappa(t) \gtrsim -\limsup_{\kappa \to 0} \int w_\kappa(t, x)^{3/2} \times \left( \tilde v_\kappa^{(n)}(x) - v_\kappa(x) \right) \, \dd x \]
    where $\tilde v_\kappa^{(n)} = G_n^2 \ast v_\kappa$.

    For each fixed $\kappa$, we decompose this expression as
    \[ \int w_\kappa(t, x)^{3 / 2} \left( w_\kappa(t, x) - \tilde w_\kappa^{(n)}(t, x) \right) \, \dd x + \int w_\kappa(t, x)^{3 / 2} \, \tilde z_\kappa^{(n)}(t, x) \, \dd x \, , \]
    where $z_\kappa(t, x) = [v_\kappa - t]_+$ and $\tilde z_\kappa^{(n)}(t) = G_n^2 \ast z_\kappa(t)$.

    By H\"older's inequality,
    \begin{align}
      \int w_\kappa(t, x)^{3 / 2} \left( w_\kappa(t, x) - \tilde w_\kappa^{(n)}(t, x) \right) \, \dd x &\geq -\norm{w_\kappa(t)}_{L^{5/2}(\R^3)}^{3 / 2} \norm{w_\kappa(t) - \tilde w^{(n)}_\kappa(t)}_{L^{5/2}(\R^3)} \\
      &\geq -\norm{w_\kappa(t)}_{L^{5/2}(\R^3)}^{3 / 2} \sup_{\kappa > 0} \left( \norm{w_\kappa(t) - \tilde w^{(n)}_\kappa(t)}_{L^{5/2}(\R^3)} \right)
    \end{align}
    and, taking $n \to \infty$, this term vanishes and we conclude that
    \[ \liminf_{\kappa \to 0} d_\kappa(t) \gtrsim -\limsup_{\kappa \to 0} \int w_\kappa(t, x)^{3 / 2} \, \tilde z_\kappa^{(n)}(t, x) \, \dd x \, . \]

    The latter term is equal to
    \[ \int \left( G_n^2 \ast \left[ w_\kappa(t)^{3 / 2} \right] \right) (x)\times z_\kappa(t, x) \, \dd x \, . \]
    In particular, we can further ask that the supports of the functions $G_n$ lie in some fixed compact set $\tilde K$.
    It follows that $G_n^2 \ast \left[ w_\kappa(t)^{3 / 2} \right]$ has support in $K' = \overline{K + \tilde K}$, which is compact.

    Since $w_\kappa$ and $z_\kappa$ have disjoint support, this term is also equal to
    \[ \int \left[ \left( G_n^2 \ast \left[ w_\kappa(t)^{3 / 2} \right] \right) (x) - w_\kappa(t, x)^{3 / 2} \right] \times z_\kappa(t, x) \, \dd x \, , \]
    and therefore by H\"older's inequality, it is bounded in absolute value by
    \[ \norm{z_\kappa(t)}_{L^{5 / 2}(K')} \sup_{\kappa > 0} \left( \norm*{\left( \mathds{1} - G_n^2 \ast - \right) \left[ w_\kappa(t)^{3 / 2} \right]}_{L^{5 / 3}(\R^3)} \right) \, . \]

    To conclude, $\{ w_\kappa(t)^{3/2} \}$ is relatively compact in $L^{5/3}(\R^3)$ since $\{ w_\kappa(t) \}$ is relatively compact in $L^{5/2}(\R^3)$ and the power map is continuous.
    $\norm{z_\kappa(t)}_{L^{5/2}(K')}$ is uniformly bounded by the precompactness of $\{ \left. v_\kappa \right|_{K'} \}$.
    By taking $n \to \infty$ we conclude that $\liminf_{\kappa \to 0} d_\kappa(t) \geq 0$.
    We therefore have $\lim_{\kappa \to 0} d_\kappa(t) = 0$, concluding the proof.
  \end{proof}

  \noindent \emph{Remark.} The proof given above is a relatively straightforward extension of the proof given by Nam in \onlinecite[Section 4.2]{Nam}.
  Similar results for fixed potentials $v$ with low regularity have been shown by Frank\cite{Frank} using methods similar to those in \onlinecite{Nam} and more recently by Mikkelsen\cite{Mikkelsen}.

  \begin{acknowledgments}
    The author would like to thank Professor I. M. Sigal and Professor Fabio Pusateri for their guidance and supervision,
    and Professor Almut Burchard for her feedback on a draft of the paper.

    The author acknowledges the support of the Natural Sciences and Engineering Research Council of Canada (NSERC).
  \end{acknowledgments}

  \section*{Tool and Computational Resource Disclosure}
  The author used ChatGPT, provided by OpenAI, to brainstorm possible approaches to mathematical arguments, identify potentially relevant literature, and assist with limited language editing.
  The tools were used to generate suggestions for subsequent evaluation, not as sources of mathematical or bibliographic authority.

  The author independently checked every mathematical argument and verified every citation against the cited source without relying on artificial intelligence.
  The author takes full responsibility for the accuracy, originality, and integrity of the manuscript.

  \section*{Author Declarations}
  \subsection*{Conflict of Interest}
  The author has no conflicts to disclose.
  \subsection*{Author Contributions}
  \textbf{Dominic Shillingford}: Methodology (equal); Writing -- original draft (equal); Writing -- review \& editing (equal).
  \section*{Data Availability}
  Data sharing is not applicable to this article as no new data were created or analyzed in this study.

  \appendix

  \section{Lemmas from Functional Analysis}

  The following lemmas are standard in functional analysis, but included for completeness.

  \begin{lemma}
    \label{operatorfenchel} Let $s : [0, 1] \to \R$ be an entropy function and let $\gamma \in \operatorname{DO}{(\mathcal{H})}$.
    Then for each $\lambda \in \mathbb{R}$ we have
	  \[ \lambda \gamma - s(\gamma) - s_\ast(\lambda) \mathds{1} \geq 0 \, . \]
  \end{lemma}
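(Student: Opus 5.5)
The plan is to reduce the operator inequality to a scalar inequality via the spectral theorem for the self-adjoint operator $\gamma$. Since $0 \leq \gamma \leq \mathds{1}$, the spectrum $\sigma(\gamma)$ is contained in $[0,1]$, which is exactly where $s$ is defined and continuous. Hence $s(\gamma)$ is well-defined by the continuous functional calculus, and so is
\[ \lambda \gamma - s(\gamma) - s_\ast(\lambda) \mathds{1} = g_\lambda(\gamma) \, , \qquad g_\lambda(r) = \lambda r - s(r) - s_\ast(\lambda) \, , \]
where $g_\lambda$ is continuous on $[0,1]$. The key point is that the map $f \mapsto f(\gamma)$ is a $\ast$-homomorphism from $C(\sigma(\gamma))$ to $\mathcal{B}(\mathcal{H})$ that is affine in the sense needed here: it sends $r \mapsto r$ to $\gamma$, constants to multiples of $\mathds{1}$, and sums to sums. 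This justifies the identity above.

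Next I would verify the scalar inequality. By the definition $s_\ast(\lambda) = \inf_{0 \leq r \leq 1}(r\lambda - s(r))$, we have $\lambda r - s(r) \geq s_\ast(\lambda)$ for every $r \in [0,1]$. Note that $s_\ast(\lambda)$ is finite because $s$ is continuous on a compact set. Therefore $g_\lambda \geq 0$ on $[0,1] \supseteq \sigma(\gamma)$.

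Finally, positivity is preserved by the functional calculus. Writing $g_\lambda = (g_\lambda^{1/2})^2$ with $g_\lambda^{1/2}$ continuous and real, we get $g_\lambda(\gamma) = (g_\lambda^{1/2}(\gamma))^\ast\, g_\lambda^{1/2}(\gamma) \geq 0$. Equivalently, for any $\psi \in \mathcal{H}$ we have $\inner{\psi, g_\lambda(\gamma)\psi} = \int g_\lambda \, \dd \mu_\psi \geq 0$, where $\mu_\psi$ is the spectral measure of $\gamma$ at $\psi$, which is supported in $[0,1]$. Alternatively, since $\gamma$ is compact and self-adjoint, one can diagonalize $\gamma = \sum_j r_j \inner{e_j, -} e_j$ in an orthonormal eigenbasis (including the kernel, where $r_j = 0$). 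Then $g_\lambda(\gamma)$ acts diagonally with entries $g_\lambda(r_j) \geq 0$, and the conclusion follows at once.

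There is no real obstacle here. The only point needing care is consistency of definitions: $s(\gamma)$ must be interpreted via the functional calculus on $[0,1]$, including the kernel of $\gamma$, where $s(0) = 0$ is used. The conclusion is a bounded-operator inequality and requires no trace-class considerations.
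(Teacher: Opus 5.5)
Your proposal is correct and takes essentially the same approach as the paper. The paper diagonalizes the compact self-adjoint $\gamma = \sum_m \mu_m P_m$, with the kernel included so that $\sum_m P_m u = u$, and writes $\inner{u, [\lambda\gamma - s(\gamma) - s_\ast(\lambda)\mathds{1}]u}$ as a sum of the scalar Fenchel defects $\lambda\mu_m - s(\mu_m) - s_\ast(\lambda) \geq 0$ weighted by $\norm{P_m u}^2$, which is precisely your eigenbasis variant of the functional-calculus argument.
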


  \begin{proof}
    By the spectral theorem for compact, self-adjoint operators and the functional calculus of these operators, we have
	  \[ \gamma = \sum_{m = 1}^\infty \mu_m P_m \ \ \text{and}\ \ s(\gamma) = \sum_{m = 1}^\infty s(\mu_m) P_m \]
	  for eigenvalues $\mu_m \in [0, 1]$ and projectors $P_m$ such that $\sum_{m = 1}^\infty P_m u = u$ for each fixed $u \in \mathcal{H}$.

    It follows that for each $u \in \mathcal{H}$ we have
	  \[ \inner{ u, [\lambda \gamma - s(\gamma) - s_\ast(\lambda)] u } = \sum_{m = 1}^\infty (\lambda \mu_m - s(\mu_m) - s_\ast(\lambda) ) \, \inner{ u, P_m u } \, . \]
	  Since $P_m$ is a projection, $\inner{ u, P_m u } = \norm{ P_m u }^2 \geq 0$ while $\lambda \mu_m - s(\mu_m) - s_\ast(\lambda) \geq 0$
	  by definition (Fenchel's inequality).
	  The sum is therefore a sum of non-negative terms, completing the proof.
  \end{proof}

  \begin{lemma}
    \label{strongcompactuniform} Let $X$ and $Y$ be Banach spaces and let $(A_n) \subseteq \mathcal{B}(X, Y)$ be a sequence such that $A_n x \to 0$ for every $x \in X$.
    Then $A_n x \to 0$ uniformly for $x$ in relatively compact subsets of $X$.
    In other words, if $K \subseteq X$ is relatively compact, then
    \[ \sup_{x \in K} \norm{A_n x}_Y \xrightarrow{n \to \infty} 0 \, . \]
  \end{lemma}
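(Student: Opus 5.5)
The argument is the standard uniform-boundedness / $\varepsilon/3$ scheme; no earlier result of the paper is needed. First, by the Banach--Steinhaus theorem, pointwise convergence $A_n x \to 0$ implies that $\sup_n \norm{x \mapsto A_n x}_{\mathcal{B}(X,Y)} = M < \infty$. Indeed, for each fixed $x$ the sequence $(A_n x)$ converges, hence is bounded in $Y$. Since $X$ is a Banach space, the uniform boundedness principle then yields a uniform operator-norm bound.

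Next, fix $\varepsilon > 0$ and let $K \subseteq X$ be relatively compact. Then $\overline{K}$ is compact and hence totally bounded. I will choose finitely many points $x_1, \ldots, x_N \in X$ such that every $x \in K$ satisfies $\norm{x - x_j}_X < \varepsilon / (2M + 1)$ for some $j$. By pointwise convergence applied to these finitely many points, there is $n_0$ such that $\norm{A_n x_j}_Y < \varepsilon/2$ for all $j \le N$ and all $n \ge n_0$.

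For arbitrary $x \in K$ and $n \ge n_0$, choosing $j$ as above gives
\[ \norm{A_n x}_Y \le \norm{A_n (x - x_j)}_Y + \norm{A_n x_j}_Y \le M \frac{\varepsilon}{2M + 1} + \frac{\varepsilon}{2} < \varepsilon \, . \]
Since $n_0$ does not depend on $x$, this gives $\sup_{x \in K} \norm{A_n x}_Y \le \varepsilon$ for $n \ge n_0$, which is the claim.

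The only nontrivial input is the uniform bound $M$. This is exactly where completeness of $X$ enters, through Banach--Steinhaus, so I expect no real obstacle there beyond invoking the theorem correctly. In the application in the proof of Proposition~\ref{semiclassics}, with $A_n = \mathds{1} - G_n^2 \ast -$ on $L^{5/2}(\R^3)$ or $L^{5/3}(\R^3)$, the bound is in any case explicit: $\norm{A_n} \le 1 + \norm{G_n^2}_{L^1} = 2$ by Young's inequality.
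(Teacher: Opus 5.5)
Your proof is correct and uses the same two ingredients as the paper: the Banach--Steinhaus bound $M$ and the estimate $\norm{A_n x}_Y \le M \norm{x - y}_X + \norm{A_n y}_Y$. The only difference is that you use total boundedness of $\overline{K}$ directly via a finite net, whereas the paper argues by contradiction and extracts a convergent subsequence of bad points $x_n \in K$; your closing remark that $\norm{\mathds{1} - G_n^2 \ast -} \le 2$ by Young's inequality in the application (so the uniform boundedness principle is not needed there) is also right.
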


  \begin{proof}
    By the uniform boundedness principle, there is a constant $M < \infty$ such that $\norm{A_n}_{\mathcal{B}(X, Y)} \leq M$ for all $n \in \mathbb{N}$.
    Suppose the claim fails for some relatively compact $K \subseteq X$.
    Then there exist $\epsilon > 0$ and, perhaps passing to a subsequence,
    $x_n \in K$ such that $\norm{A_n x_n}_Y \geq \epsilon$ for all $n \in \mathbb{N}$.
    Since $K$ is relatively compact, perhaps passing to a further subsequence, there exists $x \in X$ such that $x_n \to x$ in $X$.
    It follows that
    \begin{align*}
      \norm{A_n x_n}_Y &\leq \norm{A_n(x_n - x)}_Y + \norm{A_n x}_Y \\
      &\leq M \norm{x_n - x}_X + \norm{A_n x}_Y \xrightarrow{n \to \infty} 0,
    \end{align*}
    which contradicts the lower bound above.
  \end{proof}

  \begin{lemma}
    \label{lsc} Let $f : \mathbb{R} \to [0,\infty]$ be lower semicontinuous.
    Let $H$ be self-adjoint, semibounded, and have compact resolvent.
    Then the map
    \[ A \mapsto \Tr f(H + A) \]
    is lower semicontinuous on the bounded self-adjoint operators equipped with the operator norm topology.
  \end{lemma}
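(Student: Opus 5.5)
The plan is to write $\Tr f(H+A)$ as a sum over the ordered eigenvalues and combine two facts: continuity of each eigenvalue in $A$, and Fatou's lemma for the sum. Let $\lambda_1(A) \leq \lambda_2(A) \leq \cdots$ denote the eigenvalues of $H + A$, repeated according to multiplicity. Since $A$ is bounded and self-adjoint, $H+A$ is self-adjoint on the domain of $H$ and semibounded. It also has compact resolvent: by the second resolvent identity,
\[ (H + A - z)^{-1} = (H - z)^{-1}\bigl(\mathds{1} - A (H + A - z)^{-1}\bigr) \]
for $z$ far below the spectrum, which is compact times bounded. Hence $\Tr f(H+A) = \sum_{n} f(\lambda_n(A)) \in [0,\infty]$ is well defined, with every term non-negative.

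First I will show that each $\lambda_n$ is Lipschitz in the operator norm. By the min-max principle,
\[ \lambda_n(A) = \inf_{\dim V = n,\, V \subseteq Q(H)} \ \sup_{\psi \in V,\, \norm{\psi} = 1} \bigl( q_H(\psi) + \inner{\psi, A\psi} \bigr) , \]
and the form domain $Q(H)$ does not depend on $A$. Since $\abs{\inner{\psi, (A - B)\psi}} \leq \norm{A - B}$ for unit vectors $\psi$, this gives
\[ \abs{\lambda_n(A) - \lambda_n(B)} \leq \norm{A - B} . \]
This is also the continuity of $\mu \mapsto \lambda_n(\mu)$ invoked in the proof of Proposition \ref{ksdftequiv}, taking $A = \mu(\hat x)$.

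Next, let $A_k \to A$ in operator norm. For each $n$, lower semicontinuity of $f$ together with $\lambda_n(A_k) \to \lambda_n(A)$ gives
\[ f(\lambda_n(A)) \leq \liminf_{k} f(\lambda_n(A_k)) . \]
Since all terms are non-negative, Fatou's lemma for series (counting measure on $\N$) then yields
\[ \sum_n f(\lambda_n(A)) \leq \liminf_k \sum_n f(\lambda_n(A_k)) , \]
which is the claimed lower semicontinuity. Because the operator-norm topology is metrizable, sequential lower semicontinuity suffices.

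The only point requiring care is that the min-max characterization is valid with $f$ allowed to take the value $+\infty$. Here $f$ is applied only at real eigenvalues, and Fatou's lemma holds in $[0,\infty]$, so no extra difficulty arises. I do not expect a genuine obstacle; the main step is the eigenvalue continuity, which needs only that the form domain of $H + A$ is unchanged by the bounded perturbation $A$.
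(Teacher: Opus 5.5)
Your proposal is correct and follows essentially the same route as the paper: Lipschitz continuity of each $\lambda_n$ via min--max, lower semicontinuity of each $f(\lambda_n(A))$, and then passing to the full sum using non-negativity. The only difference is cosmetic: you use Fatou's lemma for series along sequences, while the paper writes $\Tr f(H+A) = \sup_N \sum_{n=1}^N f(\lambda_n(A))$ as a supremum of lower semicontinuous functions.
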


  \begin{proof}
    Let $(\lambda_n(A))$ denote the eigenvalues of $H+A$, counted with multiplicity.
    The min--max principle gives
    \[ \abs{\lambda_n(A)-\lambda_n(B)} \leq \norm{A-B}_{\mathrm{op}} \]
    for all bounded self-adjoint $A$ and $B$.
    Thus $A \mapsto f(\lambda_n(A))$ is lower semicontinuous for every $n$.
    Since $f \geq 0$, the spectral theorem yields
    \[ \Tr f(H+A) = \sup_{N \geq 1} \sum_{n=1}^N f(\lambda_n(A)) \, , \]
    which is lower semicontinuous as a supremum of lower semicontinuous functions.
  \end{proof}

  \bibliography{paper}

\end{document}